\def\w{{\bf w}}
\def\y{{\bf y}}
\def\x{{\bf x}}
\def\x{{\mathbf x}}
\def\w{{\bf w}}
\def\x{{\bf x}}
\def\y{{\bf y}}
\def\b{{\bf b}}
\def\h{{\bf h}}
\def\be{\begin{equation}}
\def\ee{\end{equation}}
\def\ba{\left[\begin{array}}
\def\ea{\end{array}\right]}
\def\t{{\bf t}}
\def\w{{\bf w}}
\def\x{{\bf x}}
\def\y{{\bf y}}
\def\b{{\bf b}}
\def\1{{\bf 1}}
\def\g{{\bf g}}
\def\0{{\bf 0}}
\def\erfinv{\mbox{erfinv}}
\def\htheta{\hat{\theta}}
\def\betasec{\beta_{sec}}
\def\Ssec{S_{sec}}
\def\hthetasec{\hat{\theta}_{sec}}
\def\thetasec{\theta_{sec}}
\def\Sstr{S_{str}}
\def\betastr{\beta_{str}}
\def\Sstrnon{S_{str}^{+}}
\def\betawnon{\beta_{w}^{+}}
\def\hthetawnon{\hat{\theta}_{w}^+}
\def\thetawnon{\theta_{w}^+}
\newtheorem{theorem}{Theorem}
\newtheorem{corollary}{Corollary}
\newtheorem{lemma}{Lemma}
\begin{document}

\begin{singlespace}

\title {Lifting $\ell_1$-optimization strong and sectional thresholds 
}
\author{
\textsc{Mihailo Stojnic}
\\
\\
{School of Industrial Engineering}\\
{Purdue University, West Lafayette, IN 47907} \\
{e-mail: {\tt mstojnic@purdue.edu}} }
\date{}
\maketitle

\centerline{{\bf Abstract}} \vspace*{0.1in}

In this paper we revisit under-determined linear systems of equations with sparse solutions. As is well known, these systems are among core mathematical problems of a very popular compressed sensing field. The popularity of the field as well as a substantial academic interest in linear systems with sparse solutions are in a significant part due to seminal results \cite{CRT,DonohoPol}. Namely, working in a statistical scenario, \cite{CRT,DonohoPol} provided substantial mathematical progress in characterizing relation between the dimensions of the systems and the sparsity of unknown vectors recoverable through a particular polynomial technique called $\ell_1$-minimization. In our own series of work \cite{StojnicCSetam09,StojnicUpper10,StojnicEquiv10} we also provided a collection of mathematical results related to these problems. While, Donoho's work \cite{DonohoPol,DonohoUnsigned} established (and our own work \cite{StojnicCSetam09,StojnicUpper10,StojnicEquiv10} reaffirmed) the typical or the so-called \emph{weak threshold} behavior of $\ell_1$-minimization many important questions remain unanswered. Among the most important ones are those that relate to non-typical or the so-called \emph{strong threshold} behavior. These questions are usually combinatorial in nature and known techniques come up short of providing the exact answers. In this paper we provide a powerful mechanism that that can be used to attack the ``tough" scenario, i.e. the \emph{strong threshold} (and its a similar form called \emph{sectional threshold}) of $\ell_1$-minimization. The results we present offer substantial conceptual (in some cases even practical) improvement over known counterparts from \cite{StojnicCSetam09}. Moreover, they emphasize the hardness of the underlying problems caused by their combinatorial structure. Along the same lines, the results that we will present together with those from \cite{StojnicMoreSophHopBnds10} in a way also provide a substantial breakthrough that can be utilized for studying a huge number of other hard combinatorial problems.

\vspace*{0.25in} \noindent {\bf Index Terms: $\ell_1$-minimization; strong, sectional threshold}.

\end{singlespace}

\section{Introduction}
\label{sec:back}

In this paper we will revisit the under-determined linear systems of equations with sparse solutions. While linear systems have been known for a long time as one of most basic classical mathematical problems, they gained a particular attention over last the decade in significant part due to popularity of a field called compressed sensing. Our interest in this paper is a purely mathematical study of linear systems so we will refrain from a detailed presentation of the compressed sensing ideas. However, to insure a bare minimum of completeness we just briefly sketch the basic idea. In contexts where the compressed sensing is to be employed  it typically happens that signals of interest can be represented as sparse vectors in certain basis and one then given the signal and basis ends up having to recover the representation coefficients which in essence establishes a linear system of equations. Since a few of the coefficients are expected to be nonzero one then effectively has a linear system which is known to have a sparse solution. The goal is to determine such a solution and to do so in systems dimensions hopefully as small as mathematically necessary (way more about the compressed sensing conception and various problems of interest within the fields that grew out of the above basic compressed sensing concept can be found in a tone of references; here we point out to a couple of introductory papers, e.g. \cite{DOnoho06CS,CRT}).

Going back to the mathematics of linear systems, we start by recalling precise mathematical descriptions of such problems. As mentioned above, it essentially boils down to finding sparse solutions of under-determined systems of linear equations. In a more precise mathematical language we would like to find a $k$-sparse $\x$ such
that
\begin{equation}
A\x=\y \label{eq:system}
\end{equation}
where $A$ is an $m\times n$ ($m<n$) matrix and $\y$ is
an $m\times 1$ vector (here and in the rest of the paper, under $k$-sparse vector we assume a vector that has at most $k$ nonzero
components). Of course, the assumption will be that such an $\x$ exists.

To make writing in the rest of the paper easier, we will assume the
so-called \emph{linear} regime, i.e. we will assume that $k=\beta n$
and that the number of equations is $m=\alpha n$ where
$\alpha$ and $\beta$ are constants independent of $n$ (more
on the non-linear regime, i.e. on the regime when $m$ is larger than
linearly proportional to $k$ can be found in e.g.
\cite{CoMu05,GiStTrVe06,GiStTrVe07}).

A particularly successful technique for solving (\ref{eq:system}) is a linear programming relaxation called $\ell_1$-optimization. (Variations of the standard $\ell_1$-optimization from e.g.
\cite{CWBreweighted,SChretien08,SaZh08}) as well as those from \cite{SCY08,FL08,GN03,GN04,GN07,DG08} related to $\ell_q$-optimization, $0<q<1$
are possible as well.) Basic $\ell_1$-optimization algorithm finds $\x$ in
(\ref{eq:system}) by solving the following $\ell_1$-norm minimization problem
\begin{eqnarray}
\mbox{min} & & \|\x\|_{1}\nonumber \\
\mbox{subject to} & & A\x=\y. \label{eq:l1}
\end{eqnarray}
Due to its popularity the literature on the use of the above algorithm is rapidly growing. We below restrict our attention to two, in our mind, the most influential works that relate to (\ref{eq:l1}).

The first one is \cite{CRT} where the authors were able to show that if
$\alpha$ and $n$ are given, $A$ is given and satisfies the restricted isometry property (RIP) (more on this property the interested reader can find in e.g. \cite{Crip,CRT,Bar,Ver,ALPTJ09}), then
any unknown vector $\x$ with no more than $k=\beta n$ (where $\beta$
is a constant dependent on $\alpha$ and explicitly
calculated in \cite{CRT}) non-zero elements can be recovered by
solving (\ref{eq:l1}). As expected, this assumes that $\y$ was in
fact generated by that $\x$ and given to us.

However, the RIP is only a \emph{sufficient}
condition for $\ell_1$-optimization to produce the $k$-sparse solution of
(\ref{eq:system}). Instead of characterizing $A$ through the RIP
condition, in \cite{DonohoUnsigned,DonohoPol} Donoho looked at its geometric properties/potential. Namely,
in \cite{DonohoUnsigned,DonohoPol} Donoho considered the polytope obtained by
projecting the regular $n$-dimensional cross-polytope $C_p^n$ by $A$. He then established that
the solution of (\ref{eq:l1}) will be the $k$-sparse solution of
(\ref{eq:system}) if and only if
$AC_p^n$ is centrally $k$-neighborly
(for the definitions of neighborliness, details of Donoho's approach, and related results the interested reader can consult now already classic references \cite{DonohoUnsigned,DonohoPol,DonohoSigned,DT}). In a nutshell, using the results
of \cite{PMM,AS,BorockyHenk,Ruben,VS}, it is shown in
\cite{DonohoPol}, that if $A$ is a random $m\times n$
ortho-projector matrix then with overwhelming probability $AC_p^n$ is centrally $k$-neighborly (as usual, under overwhelming probability we in this paper assume
a probability that is no more than a number exponentially decaying in $n$ away from $1$). Miraculously, \cite{DonohoPol,DonohoUnsigned} provided a precise characterization of $m$ and $k$ (in a large dimensional context) for which this happens.

It should be noted that one usually considers success of
(\ref{eq:l1}) in recovering \emph{any} given $k$-sparse $\x$ in (\ref{eq:system}). It is also of interest to consider success of
(\ref{eq:l1}) in recovering
\emph{almost any} given $\x$ in (\ref{eq:system}). We below make a distinction between these
cases and recall on some of the definitions from
\cite{DonohoPol,DT,DTciss,DTjams2010,StojnicCSetam09,StojnicICASSP09}.

Clearly, for any given constant $\alpha\leq 1$ there is a maximum
allowable value of $\beta$ such that for \emph{any} given $k$-sparse $\x$ in (\ref{eq:system}) the solution of (\ref{eq:l1})
is with overwhelming probability exactly that given $k$-sparse $\x$. We will refer to this maximum allowable value of
$\beta$ as the \emph{strong threshold} (see
\cite{DonohoPol}) and will denote it as $\beta_{str}$. Similarly, for any given constant
$\alpha\leq 1$ and \emph{any} given $\x$ with a given fixed location of non-zero components and a given fixed combination of its elements signs
there will be a maximum allowable value of $\beta$ such that
(\ref{eq:l1}) finds that given $\x$ in (\ref{eq:system}) with overwhelming
probability. We will refer to this maximum allowable value of
$\beta$ as the \emph{weak threshold} and will denote it by $\beta_{w}$ (see, e.g. \cite{StojnicICASSP09,StojnicCSetam09}). One can also go a step further and consider scenario where for any given constant
$\alpha\leq 1$ and \emph{any} given $\x$ with a given fixed location of non-zero components
there will be a maximum allowable value of $\beta$ such that
(\ref{eq:l1}) finds that given $\x$ in (\ref{eq:system}) with overwhelming
probability. We will refer to such a $\beta$ as the \emph{sectional threshold} and will denote it by $\beta_{sec}$ (more on the definition of the sectional threshold the interested reader can find in e.g. \cite{DonohoPol,StojnicCSetam09}).

When viewed within this frame the results of \cite{CRT,DOnoho06CS} established that $\ell_1$-minimization achieves recovery through a linear scaling of all important dimensions ($k$, $m$, and $n$). Moreover, for all $\beta$'s defined above lower bounds were provided in \cite{CRT}. On the other hand, the results of \cite{DonohoPol,DonohoUnsigned} established the exact values of $\beta_w$ and provided lower bounds on $\beta_{str}$ and $\beta_{sec}$.

In a series of our own work (see, e.g. \cite{StojnicICASSP09,StojnicCSetam09,StojnicUpper10}) we then created an alternative probabilistic approach which was capable of providing the precise characterization of $\beta_w$ as well and thereby reestablishing the results of Donoho \cite{DonohoPol} through a purely probabilistic approach. We also presented in \cite{StojnicCSetam09} further results related to lower bounds on $\beta_{str}$ and $\beta_{sec}$. Below, we will present the three theorems that we proved in  \cite{StojnicICASSP09,StojnicCSetam09,StojnicUpper10} and are related to $\beta_w$, $\beta_{sec}$, and $\beta_{str}$. We find it useful for the ease of the presentation that will follow to have these theorems clearly restated at one place. Fairly often we will use these theorems as a benchmark for the results that we will present in this paper.

The first of the theorems relates to the weak threshold $\beta_w$.

\begin{theorem}(Weak threshold -- exact \cite{StojnicCSetam09,StojnicUpper10})
Let $A$ be an $m\times n$ matrix in (\ref{eq:system})
with i.i.d. standard normal components. Let
the unknown $\x$ in (\ref{eq:system}) be $k$-sparse. Further, let the location and signs of nonzero elements of $\x$ be arbitrarily chosen but fixed.
Let $k,m,n$ be large
and let $\alpha=\frac{m}{n}$ and $\beta_w=\frac{k}{n}$ be constants
independent of $m$ and $n$. Let $\erfinv$ be the inverse of the standard error function associated with zero-mean unit variance Gaussian random variable.  Further,
let all $\epsilon$'s below be arbitrarily small constants.
\begin{enumerate}
\item Let $\htheta_w$, ($\beta_w\leq \htheta_w\leq 1$) be the solution of
\begin{equation}
(1-\epsilon_{1}^{(c)})(1-\beta_w)\frac{\sqrt{\frac{2}{\pi}}e^{-(\erfinv(\frac{1-\theta_w}{1-\beta_w}))^2}}{\theta_w}-\sqrt{2}\erfinv ((1+\epsilon_{1}^{(c)})\frac{1-\theta_w}{1-\beta_w})=0.\label{eq:thmweaktheta}
\end{equation}
If $\alpha$ and $\beta_w$ further satisfy
\begin{equation}
\alpha>\frac{1-\beta_w}{\sqrt{2\pi}}\left (\sqrt{2\pi}+2\frac{\sqrt{2(\erfinv(\frac{1-\htheta_w}{1-\beta_w}))^2}}{e^{(\erfinv(\frac{1-\htheta_w}{1-\beta_w}))^2}}-\sqrt{2\pi}
\frac{1-\htheta_w}{1-\beta_w}\right )+\beta_w
-\frac{\left ((1-\beta_w)\sqrt{\frac{2}{\pi}}e^{-(\erfinv(\frac{1-\hat{\theta}_w}{1-\beta_w}))^2}\right )^2}{\hat{\theta}_w}\label{eq:thmweakalpha}
\end{equation}
then with overwhelming probability the solution of (\ref{eq:l1}) is the $k$-sparse $\x$ from (\ref{eq:system}).
\item Let $\htheta_w$, ($\beta_w\leq \htheta_w\leq 1$) be the solution of
\begin{equation}
(1+\epsilon_{2}^{(c)})(1-\beta_w)\frac{\sqrt{\frac{2}{\pi}}e^{-(\erfinv(\frac{1-\theta_w}{1-\beta_w}))^2}}{\theta_w}-\sqrt{2}\erfinv ((1-\epsilon_{2}^{(c)})\frac{1-\theta_w}{1-\beta_w})=0.\label{eq:thmweaktheta1}
\end{equation}
If on the other hand $\alpha$ and $\beta_w$ satisfy
\begin{multline}
\hspace{-.5in}\alpha<\frac{1}{(1+\epsilon_{1}^{(m)})^2}\left ((1-\epsilon_{1}^{(g)})(\htheta_w+\frac{2(1-\beta_w)}{\sqrt{2\pi}} \frac{\sqrt{2(\erfinv(\frac{1-\htheta_w}{1-\beta_w}))^2}}{e^{(\erfinv(\frac{1-\htheta_w}{1-\beta_w}))^2}})
-\frac{\left ((1-\beta_w)\sqrt{\frac{2}{\pi}}e^{-(\erfinv(\frac{1-\hat{\theta}_w}{1-\beta_w}))^2}\right )^2}{\hat{\theta}_w(1+\epsilon_{3}^{(g)})^{-2}}\right )\label{eq:thmweakalpha}
\end{multline}
then with overwhelming probability there will be a $k$-sparse $\x$ (from a set of $\x$'s with fixed locations and signs of nonzero components) that satisfies (\ref{eq:system}) and is \textbf{not} the solution of (\ref{eq:l1}).
\end{enumerate}
\label{thm:thmweakthr}
\end{theorem}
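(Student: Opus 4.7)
The plan is to recast both parts of the theorem as statements about a Gaussian min--max and then exploit Gordon's comparison inequality. I would start from the standard null-space characterization: writing $K$ for the support of $\x$, exact recovery by (\ref{eq:l1}) is equivalent to
\begin{equation*}
\forall\,\w\in\ker(A)\setminus\{\0\}:\qquad -(\mbox{sgn}(\x_K))^T\w_K<\|\w_{\bar K}\|_1.
\end{equation*}
Since $A$ has i.i.d.\ Gaussian entries I may flip the signs of its columns to reduce to the case $\mbox{sgn}(\x_K)=\1$, and since $\ker(A)$ is a cone I may restrict to $\|\w\|_2=1$. Part~1 then amounts to showing
\begin{equation*}
\min_{\w\in\ker(A),\,\|\w\|_2=1}\bigl(\|\w_{\bar K}\|_1+\1^T\w_K\bigr)>0
\end{equation*}
with overwhelming probability, while Part~2 amounts to exhibiting a violator.

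For Part~1 I would rewrite $\|\w_{\bar K}\|_1=\max_{\|\v\|_\infty\le 1}\v^T\w_{\bar K}$ and enforce $A\w=\0$ via a Lagrange multiplier $\z$. The resulting objective is bilinear in $\w$ and $(\v,\z)$ with $A$ appearing linearly -- exactly the Gordon setup. Gordon's min--max theorem then replaces $A$ by two independent standard Gaussians $\g\in\mathbb{R}^m$ and $\h\in\mathbb{R}^n$ and produces a lower bound on the min--max value. The inner maximum over $\v_{\bar K}$ becomes a soft-thresholding of $|h_i|$ on the off-support coordinates; the remaining constrained minimization over $\w$ collapses, after introducing the natural parameter $\theta_w$ for the fraction of off-support coordinates carrying mass, to a one-dimensional optimization. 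Standard Gaussian identities then produce $\sqrt{2}\,\erfinv((1-\theta_w)/(1-\beta_w))$ as the implicit threshold level and $\sqrt{2/\pi}\,e^{-(\erfinv(\cdot))^2}$ as the corresponding tail expectation, so that stationarity in $\theta_w$ is exactly (\ref{eq:thmweaktheta}) and non-negativity of the auxiliary value is exactly (\ref{eq:thmweakalpha}). Concentration of $\|\g\|_2$, $\|\w\|_2$ and the relevant linear functionals of $\h$, together with a union bound over an $\epsilon$-net of the sphere, absorbs into the $\epsilon_j^{(c)},\epsilon_j^{(g)},\epsilon_j^{(m)}$ slack terms.

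Part~2 is the delicate direction: Gordon's inequality points the wrong way, so I cannot simply reverse it. Instead I would exhibit an explicit $\w^\star\in\ker(A)$ violating the null-space inequality, with $\w^\star_K$ proportional to $-\1$ and $\w^\star_{\bar K}$ supported on the $\htheta_w(1-\beta_w)n$ coordinates of largest $|h_i|$, with signs and magnitudes matched to the dual optimum identified in Part~1. Gaussian concentration then evaluates $\|\w^\star_{\bar K}\|_1$ sharply and yields a negative-objective witness precisely under the opposite of (\ref{eq:thmweakalpha}), with $\htheta_w$ taken from (\ref{eq:thmweaktheta1}). The principal obstacle is proving that this explicit construction is asymptotically optimal against every competitor in $\ker(A)$, i.e.\ that Gordon's one-sided bound is tight in the worst direction for this particular min--max. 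Establishing this requires a matching Gaussian comparison -- a Slepian-type reverse restricted to the structured directions selected by $\htheta_w$ -- together with careful control of the concentration error through the angular integrations that produce the $\erfinv$ expressions. Everything else -- the one-dimensional optimization, the spherical Gaussian expectations, and the $\epsilon$-bookkeeping -- is routine.
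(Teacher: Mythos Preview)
The paper does not prove this theorem here; it simply cites \cite{StojnicCSetam09} for Part~1 and \cite{StojnicUpper10} for Part~2 (with \cite{StojnicEquiv10} as an alternative and \cite{DonohoPol} as the original geometric derivation). So your proposal is being compared against those works, not against anything in this paper.

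Your Part~1 is essentially the argument of \cite{StojnicCSetam09}: null-space characterization, rewrite as a Gaussian min--max, apply Gordon's comparison, reduce the auxiliary problem to a one-parameter optimization over a threshold level that produces the $\erfinv$ expressions, and absorb concentration errors into the $\epsilon$'s. That is fine.

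Part~2 has a genuine gap. You propose to ``exhibit an explicit $\w^\star\in\ker(A)$'' whose off-support part is ``supported on the $\htheta_w(1-\beta_w)n$ coordinates of largest $|h_i|$''. But $\h$ is the auxiliary Gaussian vector introduced by Gordon's lemma; it lives on a different probability space from $A$ and carries no information about $\ker(A)$. You cannot use $\h$ to select which coordinates of a kernel vector to populate. Moreover, even if you had a candidate support pattern, forcing a vector with that pattern to lie in $\ker(A)$ is itself a nontrivial constraint you have not addressed. Your later sentence about needing ``a Slepian-type reverse'' is closer to the truth, but it is doing all the work and is not connected to the explicit construction you describe.

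What \cite{StojnicUpper10} actually does is \emph{upper}-bound the same min--max quantity $\min_{\w\in S_w}\|A\w\|_2$ directly, via a Gordon-type comparison run in the opposite direction. The point is that for the \emph{weak} threshold the relevant constraint set (fixed support, fixed signs) yields a convex descent cone, so the auxiliary optimization is convex and Gordon's bound is asymptotically tight from both sides; concentration then shows that when $\alpha$ falls below the critical value the minimum is zero with overwhelming probability, which is exactly the existence of a violating kernel vector. No explicit $\w^\star$ is ever written down in terms of $\h$. The convexity of the weak-threshold set is precisely what separates this case from the sectional and strong thresholds treated later in the present paper, where only one-sided bounds are available.
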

\begin{proof}
The first part was established in \cite{StojnicCSetam09} and the second one was established in \cite{StojnicUpper10}. An alternative way of establishing the same set of results was also presented in \cite{StojnicEquiv10}. Of course, the weak thresholds were first computed in \cite{DonohoPol} through a different geometric approach.
\end{proof}

We below provide a more informal interpretation of what was established by the above theorem. Assume the setup of the above theorem. Let $\alpha_w$ and $\beta_w$ satisfy the following:

\noindent \underline{\underline{\textbf{Fundamental characterization of the $\ell_1$ minimization weak threshold:}}}

\begin{center}
\shadowbox{$
(1-\beta_w)\frac{\sqrt{\frac{2}{\pi}}e^{-(\erfinv(\frac{1-\alpha_w}{1-\beta_w}))^2}}{\alpha_w}-\sqrt{2}\erfinv (\frac{1-\alpha_w}{1-\beta_w})=0.
$}
-\vspace{-.5in}\begin{equation}
\label{eq:thmweaktheta2}
\end{equation}
\end{center}

Then:
\begin{enumerate}
\item If $\alpha>\alpha_w$ then with overwhelming probability the solution of (\ref{eq:l1}) is the $k$-sparse $\x$ from (\ref{eq:system}).
\item If $\alpha<\alpha_w$ then with overwhelming probability there will be a $k$-sparse $\x$ (from a set of $\x$'s with fixed locations and signs of nonzero components) that satisfies (\ref{eq:system}) and is \textbf{not} the solution of (\ref{eq:l1}).
    \end{enumerate}

The following theorem summarizes the results related to the sectional threshold ($\beta_{sec}$) that we obtained in \cite{StojnicCSetam09}.

\begin{theorem}(Sectional threshold - lower bound)
Let $A$ be an $m\times n$ measurement matrix in (\ref{eq:system})
with the null-space uniformly distributed in the Grassmanian. Let
the unknown $\x$ in (\ref{eq:system}) be $k$-sparse. Further, let the location of nonzero elements of $\x$ be arbitrarily chosen but fixed.
Let $k,m,n$ be large
and let $\alpha=\frac{m}{n}$ and $\betasec=\frac{k}{n}$ be constants
independent of $m$ and $n$. Let $\erfinv$ be the inverse of the standard error function associated with zero-mean unit variance Gaussian random variable.  Further,
let $\epsilon>0$ be an arbitrarily small constant and $\hthetasec$, ($\betasec\leq \hthetasec\leq 1$) be the solution of
\begin{equation}
(1-\epsilon)(1-\betasec)\frac{\sqrt{\frac{2}{\pi}}e^{-(\erfinv(\frac{1-\thetasec}{1-\betasec}))^2}-\sqrt{\frac{2}{\pi}}\frac{\betasec}{1-\betasec}}
{\thetasec}-\sqrt{2}\erfinv ((1+\epsilon)\frac{1-\thetasec}{1-\betasec})=0.\label{eq:thmsectheta}
\end{equation}
If $\alpha$ and $\betasec$ further satisfy
\begin{equation}
\hspace{-.6in}\alpha>\frac{1-\betasec}{\sqrt{2\pi}}\left (\sqrt{2\pi}+2\frac{\sqrt{2(\erfinv(\frac{1-\hthetasec}{1-\betasec}))^2}}{e^{(\erfinv(\frac{1-\hthetasec}{1-\betasec}))^2}}-\sqrt{2\pi}
\frac{1-\hthetasec}{1-\betasec}\right )+\betasec
-\frac{\left ((1-\betasec)\sqrt{\frac{2}{\pi}}e^{-(\erfinv(\frac{1-\hthetasec}{1-\betasec}))^2}-\sqrt{\frac{2}{\pi}}\betasec\right )^2}{\hthetasec}\label{eq:thmsecalpha}
\end{equation}
then with overwhelming probability the solution of (\ref{eq:l1}) is the $k$-sparse $\x$ from (\ref{eq:system}).
\label{thm:thmsecthr}
\end{theorem}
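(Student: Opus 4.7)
The plan is to follow the null-space characterization together with the Gordon-type Gaussian comparison machinery from \cite{StojnicCSetam09}, adapted to the sectional setting in which only the support of $\x$ is fixed and the signs on that support must be handled in the worst case.

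First, I would invoke the sectional null-space characterization: if $K$ denotes the support of the unknown $\x$, then (\ref{eq:l1}) recovers every $k$-sparse vector supported on $K$ if and only if every nonzero $\w$ in the null space of $A$ obeys
\begin{equation*}
\sum_{i\in K}|\w_i| \leq \sum_{i\in K^c}|\w_i|.
\end{equation*}
This is the analogue of the weak-threshold inequality, but with $|\w_i|$ in place of $-\epsilon_i\w_i$ on $K$, reflecting the worst case over the unknown signs of the nonzero entries of $\x$. Since $A$ is rotationally invariant, its null space is uniform in the Grassmannian of $(n-m)$-planes, and it suffices to show that this random subspace misses the spherical ``forbidden'' set
\begin{equation*}
\Ssec = \Bigl\{\w\in\mathbb{R}^n : \|\w\|_2=1,\ \sum_{i\in K}|\w_i| > \sum_{i\in K^c}|\w_i|\Bigr\}
\end{equation*}
with overwhelming probability.

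Next, I would apply Gordon's escape-through-a-mesh inequality (equivalently, the min-max Gaussian comparison used in \cite{StojnicCSetam09}) to reduce this geometric statement to a Gaussian-width bound of the form $w(\Ssec)^2 < m$ with the appropriate slack, where $w(\Ssec)=\mathrm{E}\,\sup_{\w\in\Ssec}\g^T\w$ and $\g\sim\N(0,I_n)$. By the sign symmetry of both $\g^T\w$ and the defining constraint of $\Ssec$, one may assume $\w_i g_i \geq 0$, so the inner problem becomes maximizing $\sum_i|g_i||\w_i|$ over nonnegative $|\w_i|$ satisfying $\sum_i|\w_i|^2=1$ and $\sum_K|\w_i|\geq\sum_{K^c}|\w_i|$. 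Introducing a Lagrange multiplier $\lambda\geq 0$ for the sectional inequality and solving the resulting sphere-constrained quadratic yields the closed form
\begin{equation*}
\sqrt{\sum_{i\in K}(|g_i|+\lambda)^2 + \sum_{i\in K^c}\max(|g_i|-\lambda,0)^2},
\end{equation*}
to be then optimized in $\lambda$. Taking expectations term by term with the standard half-Gaussian identities and re-parameterizing $\lambda$ via $\thetasec$ through the relation $\lambda = \sqrt{2}\,\erfinv((1-\thetasec)/(1-\betasec))$, the stationarity condition for the optimal $\lambda$ becomes exactly (\ref{eq:thmsectheta}), which defines $\hthetasec$; substituting back produces the right-hand side of (\ref{eq:thmsecalpha}) as the value of $w(\Ssec)^2/n$. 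Lipschitz concentration of Gaussian functionals then upgrades the expectation bound to an overwhelming-probability bound and absorbs the $\epsilon$'s appearing in (\ref{eq:thmsectheta})--(\ref{eq:thmsecalpha}).

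The main obstacle is the extra absolute-value contribution $\sum_{i\in K}|\w_i|$ which has no analogue in the weak analysis. In the weak case the support term was a signed linear functional of $\w$ whose Gaussian mean vanished; here, $\mathrm{E}|g|=\sqrt{2/\pi}$ on each $i\in K$ is responsible for the additional $-\sqrt{2/\pi}\,\betasec/(1-\betasec)$ term in the numerator of (\ref{eq:thmsectheta}) and the corresponding adjustment of the squared-width correction in (\ref{eq:thmsecalpha}). Tracking this term carefully through the Lagrangian optimization, and verifying that the thresholded-Gaussian expectations evaluate to the precise $\erfinv$-expressions displayed in the theorem, is the principal bookkeeping task; once this is done, the route to the sectional bound parallels the weak-threshold derivation of \cite{StojnicCSetam09}.
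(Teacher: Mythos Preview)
Your proposal is correct and follows essentially the same route as the paper (and the underlying reference \cite{StojnicCSetam09}): the sectional null-space condition, reduction via Gordon's comparison to bounding $E\max_{\w\in\Ssec}\h^T\w$, the Lagrangian dualization producing $\sqrt{\sum_{K}(|\h_i|+\nu_{sec})^2+\sum_{K^c}\max(|\h_i|-\nu_{sec},0)^2}$, and the $\erfinv$ reparameterization that yields (\ref{eq:thmsectheta})--(\ref{eq:thmsecalpha}). The paper itself does not reprove Theorem~\ref{thm:thmsecthr} directly but recovers it as the $c_3^{(s)}\to 0$ limit of the lifted Theorem~\ref{thm:thmsecthrlift} (see Corollary~\ref{thm:corsecthrlift} and the remark following it), which collapses to exactly the Gaussian-width computation you outline.
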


Finally the following theorem summarizes the results related to the strong thresholds ($\beta_{s}$) that we obtained in \cite{StojnicCSetam09}.

\begin{theorem}(Strong threshold - lower bound)
Let $A$ be an $m\times n$ measurement matrix in (\ref{eq:system})
with the null-space uniformly distributed in the Grassmanian. Let
the unknown $\x$ in (\ref{eq:system}) be $k$-sparse. Let $k,m,n$ be large
and let $\alpha=\frac{m}{n}$ and $\beta_{str}=\frac{k}{n}$ be constants
independent of $m$ and $n$. Let $\erfinv$ be the inverse of the standard error function associated with zero-mean unit variance Gaussian random variable. Further,
let $\epsilon>0$ be an arbitrarily small constant and $\htheta_s$, ($\beta_{str}\leq \htheta_s\leq 1$) be the solution of
\begin{equation}
(1-\epsilon)\frac{\sqrt{\frac{2}{\pi}}e^{-(\erfinv(1-\theta_s))^2}-2\sqrt{\frac{2}{\pi}}e^{-(\erfinv(1-\beta_{str}))^2}}{\theta_s}-\sqrt{2}\erfinv ((1+\epsilon)(1-\theta_s))=0.\label{eq:thmstrtheta}
\end{equation}
If $\alpha$ and $\beta_{str}$ further satisfy
\begin{equation}
\alpha>\frac{1}{\sqrt{2\pi}}\left (\sqrt{2\pi}+2\frac{\sqrt{2(\erfinv(1-\htheta_s))^2}}{e^{(\erfinv(1-\htheta))^2}}-\sqrt{2\pi}(1-\htheta_s)\right )
-\frac{\left (\sqrt{\frac{2}{\pi}}e^{-(\erfinv(1-\htheta_s))^2}-2\sqrt{\frac{2}{\pi}}e^{-(\erfinv(1-\beta_{str}))^2}\right )^2}{\htheta_s}\label{eq:thmstralpha}
\end{equation}
then with overwhelming probability the solution of (\ref{eq:l1}) is the $k$-sparse $\x$ from (\ref{eq:system}).\label{thm:thmstrthr}
\end{theorem}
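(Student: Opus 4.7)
The plan is to follow the null-space probabilistic methodology that was developed in \cite{StojnicCSetam09} and used there for the weak and sectional thresholds, and adapt it to the worst-case combinatorial structure of strong recovery. The first step is to reduce uniform strong recovery to a null-space condition: $\ell_1$-minimization in (\ref{eq:l1}) recovers every $k$-sparse $\x$ with overwhelming probability if and only if
\begin{equation*}
\min_{\w \in \text{null}(A) \cap S^{n-1}} \Big( \|\w\|_1 - 2\sum_{i=1}^{k} |w|_{(i)} \Big) > 0,
\end{equation*}
where $|w|_{(i)}$ denotes the $i$-th largest entry of $|\w|$. This is the strong analogue of the weak/sectional null-space inequalities proven in \cite{StojnicCSetam09}; the key structural difference is that the $k$ ``worst'' coordinates are adversarially chosen after $\w$ is fixed, yielding the $\sum_{i=1}^{k}|w|_{(i)}$ term in place of its signed or locationally-fixed counterpart.

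The second step is to recognize that since $A$ has null-space uniformly distributed in the Grassmanian, one may equivalently model null($A$) as the column span of an independent Gaussian matrix, and invoke a Gordon-type min-max comparison (exactly as in the weak/sectional proofs). This reduces the probability-one claim to a deterministic comparison between $\sqrt{m}$ (up to $1 \pm \epsilon$ factors coming from concentration of $\|A\|_F^2$) and the expected maximum of $\h^T \w$ over the ``bad set''
\begin{equation*}
\mathcal{S}_{str} = \Big\{ \w \in S^{n-1} :\ 2\sum_{i=1}^k |w|_{(i)} \geq \|\w\|_1 \Big\},
\end{equation*}
with $\h$ standard Gaussian in $\mathbb{R}^n$. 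Concretely, one must show that
\begin{equation*}
\alpha n > E \max_{\w \in \mathcal{S}_{str}} \h^T \w + \text{sub-dominant corrections},
\end{equation*}
and the threshold in (\ref{eq:thmstralpha}) comes from equating these two.

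The third step is to evaluate this Gaussian expectation by Lagrangian duality and stationarity. Introducing a Lagrange multiplier $\theta_s$ for the constraint $\|\w\|_1 - 2\sum_{i=1}^k |w|_{(i)} \leq 0$, the inner maximization becomes separable in the coordinates of $\w$. Conditional on the adversarial choice of the ``top-$k$'' index set $K$ (which in the Gaussian setting corresponds to the $k$ largest $|h_i|$), the optimizer assigns weight to each $i$ according to whether $|h_i|$ lies above the quantile threshold $\erfinv(1-\theta_s)$, giving a soft-thresholding form for the off-support and a different, larger slope on the top-$k$ support whose cutoff is $\erfinv(1-\beta_{str})$. Integrating $|h_i|$ above these two Gaussian quantiles produces the terms $\sqrt{2/\pi}\,e^{-(\erfinv(1-\theta_s))^2}$ and $\sqrt{2/\pi}\,e^{-(\erfinv(1-\beta_{str}))^2}$ appearing in (\ref{eq:thmstrtheta}) and (\ref{eq:thmstralpha}). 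The first-order stationarity in $\theta_s$ delivers (\ref{eq:thmstrtheta}), and substituting $\htheta_s$ back into the objective gives (\ref{eq:thmstralpha}) once $\alpha$ is matched to the Gaussian squared-norm contribution; the $1 \pm \epsilon$ factors propagate from the concentration steps and Gordon's inequality.

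The hardest step is the third: unlike the weak and sectional cases, where the support of $\x$ (and possibly the signs) is fixed a priori, here the adversary chooses $K$ jointly with $\w$. In the Gaussian coupling this means $K$ is not a fixed index set but the random set of the $k$ largest $|h_i|$, which couples the optimization over $\w$ to the order statistics of $\h$. Showing that the stationary point is in fact a (global) maximum, and that the coupling does not introduce additional combinatorial penalties beyond the appearance of the second $\erfinv(1-\beta_{str})$ quantile term, is the delicate part; ensuring that the Gordon inequality and the concentration of the order statistics of $\h$ are simultaneously sharp is what forces the specific shadow-boxed form of the thresholds.
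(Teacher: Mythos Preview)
Your proposal is correct and follows essentially the same methodology as the original proof in \cite{StojnicCSetam09}: reduce to the null-space condition over $\Sstr$, apply Gordon's comparison to bound $\min_{\w\in\Sstr}\|A\w\|_2$ by $\sqrt{m}-E\max_{\w\in\Sstr}\h^T\w$ (plus concentration corrections), and then evaluate the Gaussian width by Lagrangian duality over the sphere/$\ell_1$/cardinality constraints, with the top-$k$ set aligning with the $k$ largest $|h_i|$. One small terminological slip: $\theta_s$ is not itself the Lagrange multiplier for the $\ell_1$-type constraint; in the paper's parametrization the multipliers are $\nu_{str}^{(1)},\nu_{str}^{(2)},\gamma_{str}$ (see (\ref{eq:streq01})--(\ref{eq:streq04})), and $\theta_s$ emerges as a quantile/active-fraction parameter after optimizing the dual---but this is only a reparametrization, not a gap.

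It is worth noting that in \emph{this} paper Theorem~\ref{thm:thmstrthr} is merely restated from \cite{StojnicCSetam09} without a fresh proof; the paper's new argument (Theorem~\ref{thm:thmstrthrlift} and Corollary~\ref{thm:corstrthrlift}) instead proves a stronger ``lifted'' bound depending on an extra parameter $c_3^{(s)}$ and then recovers Theorem~\ref{thm:thmstrthr} as the $c_3^{(s)}\to 0$ limit. Your direct route is exactly what that limit collapses to, so the two agree.
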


We will show the results for the sectional and strong thresholds one can obtain through the above theorems in subsequent sections when we discuss the corresponding ones obtained in this paper. As for the contribution of this paper, essentially we will develop a mechanism that can provide a substantial conceptual improvement of the sectional and strong threshold results that we obtained in \cite{StojnicCSetam09} (of course it will match the results we already obtained for the weak threshold in \cite{StojnicCSetam09}).

We organize the rest of the paper in the following way. In Section
\ref{sec:secthr} we present the core of the mechanism and how it can be used to improve the sectional thresholds results. In Section \ref{sec:strthr} we will then present a neat modification of the mechanism so that it can handle the strong thresholds as well. In Section \ref{sec:backnon} we introduce a special class of unknown vectors $\x$, namely, vectors $\x$ a priori known to have nonnegative components and present several known results related such vectors. Finally in Section \ref{sec:strthrnon} we will generalize the strong threshold results obtained for general vectors $\x$ in Section \ref{sec:strthr} to those that relate to vectors $\x$ a priori known to have only non-negative components. In Section \ref{sec:conc} we discuss obtained results and provide several conclusions related to their importance.

\section{Lifting $\ell_1$-minimization sectional threshold}
\label{sec:secthr}

In this section we look at the sectional thresholds. We do mention before presenting anything that these sectional threshold results were substantially easier to establish than the ones that will follow for the strong thresholds (of course, we by no means consider them easy to establish).

Throughout the presentation in this and all subsequent sections we will assume a substantial level of familiarity with many of the well-known results that relate to the performance characterization of (\ref{eq:l1}) (we will fairly often recall on many results/definitions that we established in \cite{StojnicCSetam09}). We start by defining a set $\Ssec$
\begin{equation}
\Ssec=\{\w\in S^{n-1}| \quad \sum_{i=n-k+1}^n |\w_i|<\sum_{i=1}^{n-k}|\w_{i}|\},\label{eq:defSsec}
\end{equation}
where $S^{n-1}$ is the unit sphere in $R^n$. Then it was established in \cite{StojnicCSetam09} that the following optimization problem is of critical importance in determining the sectional threshold of $\ell_1$-minimization
\begin{equation}
\xi_{sec}=\min_{\w\in\Ssec}\|A\w\|_2.\label{eq:negham1}
\end{equation}
Namely, what was established in \cite{StojnicCSetam09} is roughly the following: if $\xi_{sec}$ is positive with overwhelming probability for certain combination of $k$, $m$, and $n$ then for $\alpha=\frac{m}{n}$ one has a lower bound $\beta_{sec}=\frac{k}{n}$ on the true value of the sectional threshold with overwhelming probability. Also, the mechanisms of \cite{StojnicCSetam09} were powerful enough to establish the concentration of $\xi_{sec}$. This essentially means that if we can show that $E\xi_{sec}>0$ for certain $k$, $m$, and $n$ we can then obtain the lower bound on the sectional threshold. In fact, this is precisely what was done in \cite{StojnicCSetam09}. However, the results we obtained for the sectional threshold through such a consideration were not exact. The main reason of course was inability to determine $E\xi_{sec}$ exactly. Instead we resorted to its lower bounds and those turned out to be loose. In this paper we will use some of the ideas we recently introduced in \cite{StojnicMoreSophHopBnds10} to provide a substantial conceptual improvement in these bounds which would in turn reflect in a conceptual improvement of the sectional thresholds (and later on an even substantial practical improvement of all strong thresholds).

Below we present a way to create a lower-bound on the optimal value of (\ref{eq:negham1}).

\subsection{Lower-bounding $\xi_{sec}$}
\label{sec:lbxisec}

In this section we will look at problem from (\ref{eq:negham1}). As mentioned in theorems, we will assume that the elements of $A$ are i.i.d. standard normal random variables. We will also first recall on a couple of results that we obtained in \cite{StojnicMoreSophHopBnds10}. We start with the following result from \cite{Gordon85} that relates to statistical properties of certain Gaussian processes.
\begin{theorem}(\cite{Gordon85})
\label{thm:Gordonneg1} Let $X_{ij}$ and $Y_{ij}$, $1\leq i\leq n,1\leq j\leq m$, be two centered Gaussian processes which satisfy the following inequalities for all choices of indices
\begin{enumerate}
\item $E(X_{ij}^2)=E(Y_{ij}^2)$
\item $E(X_{ij}X_{ik})\geq E(Y_{ij}Y_{ik})$
\item $E(X_{ij}X_{lk})\leq E(Y_{ij}Y_{lk}), i\neq l$.
\end{enumerate}
Let $\psi()$ be an increasing function on the real axis. Then
\begin{equation*}
E(\min_{i}\max_{j}\psi(X_{ij}))\leq E(\min_{i}\max_{j}\psi(Y_{ij})).
\end{equation*}
Moreover, let $\psi()$ be a decreasing function on the real axis. Then
\begin{equation*}
E(\max_{i}\min_{j}\psi(X_{ij}))\geq E(\max_{i}\min_{j}\psi(Y_{ij})).
\end{equation*}
\begin{proof}
The proof of all statements but the last one is of course given in \cite{Gordon85}. The proof of the last statement trivially follows and is given for completeness in \cite{StojnicMoreSophHopBnds10}.
\end{proof}
\end{theorem}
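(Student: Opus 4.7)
The plan is to deduce the max-min statement (for decreasing $\psi$) directly from the min-max statement (for increasing $\psi$), since the latter is exactly Gordon's original inequality in \cite{Gordon85}. The single tool needed is the elementary pointwise identity
\begin{equation*}
\max_i \min_j a_{ij} \;=\; -\min_i \max_j \bigl(-a_{ij}\bigr),
\end{equation*}
which holds for every real array $\{a_{ij}\}$ with finite index ranges (the only case relevant here).

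Concretely, I would set $\tilde{\psi}(x) := -\psi(x)$ and apply the identity with $a_{ij}=\psi(X_{ij})$, so that
\begin{equation*}
\max_i \min_j \psi(X_{ij}) \;=\; -\min_i \max_j \tilde{\psi}(X_{ij}),
\end{equation*}
and analogously for the process $Y$. Because $\psi$ is decreasing by hypothesis, $\tilde{\psi}$ is increasing. The three covariance hypotheses on $X$ and $Y$ are statements about the processes themselves and involve no functional transformation, so they are preserved verbatim. Hence the first (already established) part of the theorem applies with $\tilde{\psi}$ in place of $\psi$, giving
\begin{equation*}
E\bigl(\min_i \max_j \tilde{\psi}(X_{ij})\bigr) \;\leq\; E\bigl(\min_i \max_j \tilde{\psi}(Y_{ij})\bigr).
\end{equation*}

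Taking expectations in the pointwise identity and using linearity to pull out the sign, the last display is equivalent, after multiplication by $-1$, to
\begin{equation*}
E\bigl(\max_i \min_j \psi(X_{ij})\bigr) \;\geq\; E\bigl(\max_i \min_j \psi(Y_{ij})\bigr),
\end{equation*}
which is exactly the claim. I do not foresee any substantive obstacle: the only points to check are that (i) the pointwise swap $\max\min = -\min\max(-\cdot)$ needs no measurability or finiteness caveat beyond the implicit finiteness of the index set, and (ii) negating the function $\psi$ has no effect on the hypotheses controlling the Gaussian covariance structure. Both are immediate, which is precisely why this extension is claimed to follow trivially once the increasing, min-max version of Gordon's inequality is in hand.
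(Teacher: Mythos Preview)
Your proposal is correct and matches the paper's own treatment: the paper simply cites \cite{Gordon85} for the increasing-$\psi$ min-max inequality and asserts that the decreasing-$\psi$ max-min statement ``trivially follows'' (with details deferred to \cite{StojnicMoreSophHopBnds10}), which is precisely the sign-flip argument $\tilde\psi=-\psi$ combined with $\max_i\min_j a_{ij}=-\min_i\max_j(-a_{ij})$ that you spell out.
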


To make use of the above theorem we start by reformulating the problem in (\ref{eq:negham1}) in the following way
\begin{equation}
\xi_{sec}=\min_{\w\in\Ssec}\max_{\|\y\|_2=1}\y^TA\w.\label{eq:sqrtnegham2}
\end{equation}
In \cite{StojnicMoreSophHopBnds10} we established a lemma very similar to the following one:
\begin{lemma}
Let $A$ be an $m\times n$ matrix with i.i.d. standard normal components. Let $\g$ and $\h$ be $n\times 1$ and $m\times 1$ vectors, respectively, with i.i.d. standard normal components. Also, let $g$ be a standard normal random variable and let $c_3$ be a positive constant. Then
\begin{equation}
E(\max_{\w\in\Ssec}\min_{\|\y\|_2=1}e^{-c_3(\y^T A\w + g)})\leq E(\max_{\w\in\Ssec}\min_{\|\y\|_2=1}e^{-c_3(\g^T\y+\h^T\w)}).\label{eq:negexplemma}
\end{equation}\label{lemma:negexplemma}
\end{lemma}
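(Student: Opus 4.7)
The plan is to read this off as a direct application of the Gordon inequality (Theorem~\ref{thm:Gordonneg1}, second statement) applied to a carefully chosen pair of Gaussian processes indexed by $(\w,\y)\in\Ssec\times S^{m-1}$, with the decreasing function $\psi(x)=e^{-c_3 x}$. First I would introduce the two centered Gaussian processes
\begin{equation*}
X_{\w,\y}=\g^T\y+\h^T\w,\qquad Y_{\w,\y}=\y^T A\w+g,
\end{equation*}
observing that, since $\Ssec\subset S^{n-1}$, we have $\|\w\|_2=1$, and by hypothesis $\|\y\|_2=1$. Here $\w$ plays the role of the index $i$ (the outer $\max$) and $\y$ plays the role of the index $j$ (the inner $\min$). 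Once the three Gordon conditions are verified for this pair, the second statement of Theorem~\ref{thm:Gordonneg1} applied with $\psi(x)=e^{-c_3x}$ (which is decreasing since $c_3>0$) yields exactly
\begin{equation*}
E\!\bigl(\max_{\w\in\Ssec}\min_{\|\y\|_2=1}e^{-c_3(\g^T\y+\h^T\w)}\bigr)\ \geq\ E\!\bigl(\max_{\w\in\Ssec}\min_{\|\y\|_2=1}e^{-c_3(\y^T A\w+g)}\bigr),
\end{equation*}
which is the claim (after swapping sides).

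The routine part is to check variances and the ``same-$i$'' covariance inequality. Using independence of $A$, $\g$, $\h$, $g$ and $\|\w\|_2=\|\y\|_2=1$, one computes $E(X_{\w,\y}^2)=\|\y\|_2^2+\|\w\|_2^2=2=\|\y\|_2^2\|\w\|_2^2+1=E(Y_{\w,\y}^2)$, giving condition 1. For condition 2, with $\w$ fixed and $\y,\y'$ varying, both $E(X_{\w,\y}X_{\w,\y'})$ and $E(Y_{\w,\y}Y_{\w,\y'})$ equal $\y^T\y'+1$, so the inequality holds with equality.

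The only substantive step, and the one that really uses the structure of the index set, is condition 3: for $\w\neq\w'$,
\begin{equation*}
E(X_{\w,\y}X_{\w',\y'})=\y^T\y'+\w^T\w',\qquad E(Y_{\w,\y}Y_{\w',\y'})=(\y^T\y')(\w^T\w')+1,
\end{equation*}
so the required bound $E(X_{\w,\y}X_{\w',\y'})\leq E(Y_{\w,\y}Y_{\w',\y'})$ is equivalent to $(1-\y^T\y')(1-\w^T\w')\geq 0$. This is precisely where I would lean on the fact that $\Ssec\subset S^{n-1}$ and $\|\y\|_2=\|\y'\|_2=1$: by Cauchy--Schwarz, both $\y^T\y'\leq 1$ and $\w^T\w'\leq 1$, so the product of the two nonnegative factors is nonnegative. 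This is the step I expect to be the conceptual crux, since it explains why the lemma is stated for unit-norm $\y$ and for $\w$ on (a subset of) the sphere rather than in a more general domain. With all three conditions in hand, Theorem~\ref{thm:Gordonneg1} delivers the lemma immediately.
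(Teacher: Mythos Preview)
Your proposal is correct and is exactly the ``standard/direct application of Theorem~\ref{thm:Gordonneg1}'' that the paper invokes; the paper omits the details and defers to \cite{StojnicMoreSophHopBnds10}, but the computation you spell out (matching variances, equality in condition~2, and $(1-\y^T\y')(1-\w^T\w')\geq 0$ for condition~3 via Cauchy--Schwarz on the two spheres) is precisely the intended argument, with the key structural point being that $\Ssec\subset S^{n-1}$.
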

\begin{proof}
As mentioned in \cite{StojnicMoreSophHopBnds10}, the proof is a standard/direct application of Theorem \ref{thm:Gordonneg1}. We will omit the details since they are pretty much the same as the those in the proof of the corresponding lemma in \cite{StojnicMoreSophHopBnds10}. However, we do mention that the only difference between this lemma and the one in \cite{StojnicMoreSophHopBnds10} is in set $\Ssec$. What is here $\Ssec$ it is a hypercube subset of $S^{n-1}$ in the corresponding lemma in \cite{StojnicMoreSophHopBnds10}. However, such a difference introduces no structural changes in the proof.
\end{proof}

Following step by step what was done after Lemma 3 in \cite{StojnicMoreSophHopBnds10} one arrives at the following analogue of \cite{StojnicMoreSophHopBnds10}'s equation $(57)$:
\begin{equation}
E(\min_{\w\in\Ssec}\|A\w\|_2)\geq
\frac{c_3}{2}-\frac{1}{c_3}\log(E(\max_{\w\in\Ssec}(e^{-c_3\h^T\w})))
-\frac{1}{c_3}\log(E(\min_{\|\y\|_2=1}(e^{-c_3\g^T\y}))).\label{eq:chneg8}
\end{equation}
Let $c_3=c_3^{(s)}\sqrt{n}$ where $c_3^{(s)}$ is a constant independent of $n$. Then (\ref{eq:chneg8}) becomes
\begin{eqnarray}
\hspace{-.5in}\frac{E(\min_{\w\in\Ssec}\|A\w\|_2)}{\sqrt{n}}
& \geq &
\frac{c_3^{(s)}}{2}-\frac{1}{nc_3^{(s)}}\log(E(\max_{\w\in\Ssec}(e^{-c_3^{(s)}\sqrt{n}\h^T\w})))
-\frac{1}{nc_3^{(s)}}\log(E(\min_{\|\y\|_2=1}(e^{-c_3^{(s)}\sqrt{n}\g^T\y})))\nonumber \\
& = &-(-\frac{c_3^{(s)}}{2}+I_{sec}(c_3^{(s)},\beta)+I_{sph}(c_3^{(s)},\alpha)),\label{eq:chneg9}
\end{eqnarray}
where
\begin{eqnarray}
I_{sec}(c_3^{(s)},\beta) & = & \frac{1}{nc_3^{(s)}}\log(E(\max_{\w\in\Ssec}(e^{-c_3^{(s)}\sqrt{n}\h^T\w})))\nonumber \\
I_{sph}(c_3^{(s)},\alpha) & = & \frac{1}{nc_3^{(s)}}\log(E(\min_{\|\y\|_2=1}(e^{-c_3^{(s)}\sqrt{n}\g^T\y}))).\label{eq:defIs}
\end{eqnarray}

One should now note that the above bound is effectively correct for any positive constant $c_3^{(s)}$. The only thing that is then left to be done so that the above bound becomes operational is to estimate $I_{sec}(c_3^{(s)},\beta)$ and $I_{sph}(c_3^{(s)},\alpha)$.

We start with $I_{sph}(c_3^{(s)},\alpha)$. In fact we just recall on the estimate that was provided in \cite{StojnicMoreSophHopBnds10}. Clearly, $I_{sph}(c_3^{(s)},\alpha)=E(\min_{\|\y\|_2=1}(e^{-c_3^{(s)}\sqrt{n}\g^T\y}))=Ee^{-c_3^{(s)}\sqrt{n}\|\g\|_2}$. As mentioned in \cite{StojnicMoreSophHopBnds10}, pretty good estimates for this quantity can be obtained for any $n$. However, to facilitate the exposition we will focus only on the large $n$ scenario (plus we only consider the concentrating scenario which pretty much implies that $n$ is large). In that case one can use the saddle point concept applied in \cite{SPH}. However, as in \cite{StojnicMoreSophHopBnds10}, here we will try to avoid the entire presentation from there and instead present the core neat idea that has much wider applications. Namely, we start with the following identity
\begin{equation}
-\|\g\|_2=\max_{\gamma_{sph}\geq 0}(-\frac{\|\g\|_2^2}{4\gamma_{sph}}-\gamma_{sph}).\label{eq:gamaiden}
\end{equation}
Then
\begin{multline}
\hspace{-.3in}\frac{1}{nc_3^{(s)}}\log(Ee^{-c_3^{(s)}\sqrt{n}\|\g\|_2})=\frac{1}{nc_3^{(s)}}\log(Ee^{c_3^{(s)}\sqrt{n}\max_{\gamma_{sph}\geq 0}(-\frac{\|\g\|_2^2}{4\gamma_{sph}}-\gamma_{sph})})
\doteq \frac{1}{nc_3^{(s)}}\max_{\gamma_{sph}\geq 0}\log(Ee^{-c_3^{(s)}\sqrt{n}(\frac{\|\g\|_2^2}{4\gamma_{sph}}+\gamma_{sph})})\\
=\max_{\gamma_{sph}\geq 0}(-\frac{\gamma_{sph}}{\sqrt{n}}+\frac{1}{c_3^{(s)}}\log(Ee^{-c_3^{(s)}\sqrt{n}(\frac{\g_i^2}{4\gamma_{sph}})}),\label{eq:gamaiden1}
\end{multline}
where $\doteq$ stands for equality when $n\rightarrow \infty$. $\doteq$ is among the results shown in \cite{SPH}. We do however, mention that one does not necessarily need $n\rightarrow\infty$ condition, i.e. the mechanism of \cite{SPH} can work with finite $n$ and provide the error terms; of course the writing is horrendously more complicated and we skip redoing it here (however, to emphasize this we avoided using explicitly the limit in (\ref{eq:gamaiden1})). Now if one sets $\gamma_{sph}=\gamma_{sph}^{(s)}\sqrt{n}$ then (\ref{eq:gamaiden1}) gives
\begin{multline}
\frac{1}{nc_3^{(s)}}\log(Ee^{-c_3^{(s)}\sqrt{n}\|\g\|_2})
=\max_{\gamma_{sph}^{(s)}\geq 0}(-\gamma_{sph}^{(s)}+\frac{1}{c_3^{(s)}}\log(Ee^{-c_3^{(s)}(\frac{\g_i^2}{4\gamma_{sph}^{(s)}})}))
=\max_{\gamma_{sph}^{(s)}\geq 0}(-\gamma_{sph}^{(s)}-\frac{\alpha}{2c_3^{(s)}}\log(1+\frac{c_3^{(s)}}{2\gamma_{sph}^{(s)}}))\\
=\max_{\gamma_{sph}^{(s)}\leq 0}(\gamma_{sph}^{(s)}-\frac{\alpha}{2c_3^{(s)}}\log(1-\frac{c_3^{(s)}}{2\gamma_{sph}^{(s)}})).\label{eq:gamaiden2}
\end{multline}
After solving the last maximization one obtains
\begin{equation}
\widehat{\gamma_{sph}^{(s)}}=\frac{2c_3^{(s)}-\sqrt{4(c_3^{(s)})^2+16\alpha}}{8}.\label{eq:gamaiden3}
\end{equation}
Connecting (\ref{eq:defIs}), (\ref{eq:gamaiden1}), (\ref{eq:gamaiden2}), and (\ref{eq:gamaiden3}) one finally has
\begin{equation}
I_{sph}(c_3^{(s)},\alpha)=\frac{1}{nc_3^{(s)}}\log(Ee^{-c_3^{(s)}\sqrt{n}\|\g\|_2})\doteq
\left ( \widehat{\gamma_{sph}^{(s)}}-\frac{\alpha}{2c_3^{(s)}}\log(1-\frac{c_3^{(s)}}{2\widehat{\gamma_{sph}^{(s)}}}\right ).\label{eq:Isph}
\end{equation}
where clearly $\widehat{\gamma_{sph}^{(s)}}$ is as in (\ref{eq:gamaiden3}).

We now switch to $I_{sec}(c_3^{(s)},\beta)$. Similarly to what was stated above, pretty good estimates for this quantity can be obtained for any $n$. However, to facilitate the exposition we will focus only on the large $n$ scenario. In that case one can again use the saddle point concept applied in \cite{SPH}. As above, we present the core idea without all the details from \cite{SPH}. Let $f(\w)=-\h^T\w$ and
we start with the following line of identities
\begin{multline}
\max_{\w\in\Ssec}f(\w)=-\min_{\w\in\Ssec}\h^T\w=-\min_{\w}\max_{\gamma_{sec}\geq 0,\nu_{sec}\geq 0} \h^T\w
-\nu_{sec}\sum_{i=n-k+1}^{n}|\w_i|
+\nu_{sec}\sum_{i=1}^{n-k}|\w_i|+\gamma_{sec}\sum_{i=1}^{n}\w_i^2-\gamma_{sec}\\
=-\max_{\gamma_{sec}\geq 0,\nu_{sec}\geq 0}\min_{\w} -\sum_{i=n-k+1}^{n}|\w_i|(|\h_i|+\nu_{sec})
+\sum_{i=1}^{n-k}|\w_i|(-|\h_i|+\nu_{sec})+\gamma_{sec}\sum_{i=1}^{n}\w_i^2-\gamma_{sec}\\
=-\max_{\gamma_{sec}\geq 0,\nu_{sec}\geq 0} -\frac{1}{4\gamma_{sec}}\left (\sum_{i=n-k+1}^{n}(|\h_i|+\nu_{sec})^2
+\sum_{i=1}^{n-k}\min(-|\h_i|+\nu_{sec},0)^2\right )-\gamma_{sec}\\
=\min_{\gamma_{sec}\geq 0,\nu_{sec}\geq 0} \frac{1}{4\gamma_{sec}}\left (\sum_{i=n-k+1}^{n}(|\h_i|+\nu_{sec})^2
+\sum_{i=1}^{n-k}\max(|\h_i|-\nu_{sec},0)^2\right )+\gamma_{sec}\\
=\min_{\gamma_{sec}\geq 0,\nu_{sec}\geq 0} \frac{f_1(\h,\nu_{sec},\beta)}{4\gamma_{sec}}+\gamma_{sec},\label{eq:seceq1}
\end{multline}
where
\begin{equation}
f_1(\h,\nu_{sec},\beta)=\left (\sum_{i=n-k+1}^{n}(|\h_i|+\nu_{sec})^2
+\sum_{i=1}^{n-k}\max(|\h_i|-\nu_{sec},0)^2\right ).\label{eq:deff1}
\end{equation}
Then
\begin{multline}
\hspace{-.3in}I_{sec}(c_3^{(s)},\beta)  =  \frac{1}{nc_3^{(s)}}\log(E(\max_{\w\in\Ssec}(e^{-c_3^{(s)}\sqrt{n}\h^T\w}))) = \frac{1}{nc_3^{(s)}}\log(E(\max_{\w\in\Ssec}(e^{c_3^{(s)}\sqrt{n}f(\w))})))\\=\frac{1}{nc_3^{(s)}}\log(Ee^{c_3^{(s)}\sqrt{n}\min_{\gamma_{sec},\nu_{sec}\geq 0}(\frac{f_1(\h,\nu_{sec},\beta)}{4\gamma_{sec}}+\gamma_{sec})})
\doteq \frac{1}{nc_3^{(s)}}\min_{\gamma_{sec},\nu_{sec}\geq 0}\log(Ee^{c_3^{(s)}\sqrt{n}(\frac{f_1(\h,\nu_{sec},\beta)}{4\gamma_{sec}}+\gamma_{sec})})\\
=\min_{\gamma_{sec},\nu_{sec}\geq 0}(\frac{\gamma_{sec}}{\sqrt{n}}+\frac{1}{nc_3^{(s)}}\log(Ee^{c_3^{(s)}\sqrt{n}(\frac{f_1(\h,\nu_{sec},\beta)}{4\gamma_{sec}})})),\label{eq:gamaiden1sec}
\end{multline}
where, as earlier, $\doteq$ stands for equality when $n\rightarrow \infty$ and would be obtained through the mechanism presented in \cite{SPH} (for our needs here though, even just replacing $\doteq$ with an $\leq$ inequality suffices). Now if one sets $\gamma_{sec}=\gamma_{sec}^{(s)}\sqrt{n}$ then (\ref{eq:gamaiden1}) gives
\begin{multline}
I_{sec}(c_3^{(s)},\beta)
=\min_{\gamma_{sec},\nu_{sec}\geq 0}(\frac{\gamma_{sec}}{\sqrt{n}}+\frac{1}{nc_3^{(s)}}\log(Ee^{c_3^{(s)}\sqrt{n}(\frac{f_1(\h,\nu_{sec},\beta)}{4\gamma_{sec}})})\\
=\min_{\gamma_{sec}^{(s)},\nu_{sec}\geq 0}(\gamma_{sec}^{(s)}+\frac{\beta}{c_3^{(s)}}\log(Ee^{(\frac{c_3^{(s)}(|\h_i|+\nu_{sec})^2}{4\gamma_{sec}^{(s)}})})
+\frac{1-\beta}{c_3^{(s)}}\log(Ee^{(\frac{c_3^{(s)}\max(|\h_i|-\nu_{sec},0)^2}{4\gamma_{sec}^{(s)}})}))\\
=\min_{\gamma_{sec}^{(s)},\nu_{sec}\geq 0}(\gamma_{sec}^{(s)}+\frac{\beta}{c_3^{(s)}}\log(I_{sec}^{(1)})
+\frac{1-\beta}{c_3^{(s)}}\log(I_{sec}^{(2)})),\label{eq:gamaiden2sec}
\end{multline}
where
\begin{eqnarray}
I_{sec}^{(1)} & = & Ee^{(\frac{c_3^{(s)}(|\h_i|+\nu_{sec})^2}{4\gamma_{sec}^{(s)}})}\nonumber \\
I_{sec}^{(2)} & = & Ee^{(\frac{c_3^{(s)}\max(|\h_i|-\nu_{sec},0)^2}{4\gamma_{sec}^{(s)}})}.\label{eq:defI1I2sec}
\end{eqnarray}
Now, to facilitate numerical computations we can create a bit more explicit expressions for the above quantities. We set $b=\frac{c_3^{(s)}}{4\gamma_{sec}^{(s)}}$ and obtain.
\begin{eqnarray}
I_{sec}^{(1)} & = & Ee^{(\frac{c_3^{(s)}(|\h_i|+\nu_{sec})^2}{4\gamma_{sec}^{(s)}})}
=\frac{e^{\frac{b\nu_{sec}^2}{1-2b}}}{\sqrt{1-2b}}(1+\mbox{erf}(\frac{\sqrt{2}b\nu_{sec}}{\sqrt{1-2b}}))\nonumber \\
I_{sec}^{(2)} & = & Ee^{(\frac{c_3^{(s)}\max(|\h_i|-\nu_{sec},0)^2}{4\gamma_{sec}^{(s)}})}
=\frac{e^{\frac{b\nu_{sec}^2}{1-2b}}}{\sqrt{1-2b}}(\mbox{erfc}(\frac{\nu_{sec}}{\sqrt{2(1-2b)}}))+\mbox{erf}(\frac{\nu_{sec}}{\sqrt{2}}),\label{eq:defI1I2sec1}
\end{eqnarray}
where of course to insure the integrals convergence we have $b<\frac{1}{2}$ or in other words $\gamma_{sec}^{(s)}>\frac{c_3^{(s)}}{2}$.

We summarize the above results related to the sectional threshold ($\beta_{sec}$) in the following theorem.

\begin{theorem}(Sectional threshold - lifted lower bound)
Let $A$ be an $m\times n$ measurement matrix in (\ref{eq:system})
with i.i.d. standard normal components. Let
the unknown $\x$ in (\ref{eq:system}) be $k$-sparse. Further, let the location of nonzero elements of $\x$ be arbitrarily chosen but fixed.
Let $k,m,n$ be large
and let $\alpha=\frac{m}{n}$ and $\betasec=\frac{k}{n}$ be constants
independent of $m$ and $n$. Let $\mbox{erf}$ be the standard error function associated with zero-mean unit variance Gaussian random variable and let $\mbox{erfc}=1-\mbox{erf}$.
Let
\begin{equation}
\widehat{\gamma_{sph}^{(s)}}=\frac{2c_3^{(s)}-\sqrt{4(c_3^{(s)})^2+16\alpha}}{8},\label{eq:gamasphthmsec}
\end{equation}
and
\begin{equation}
I_{sph}(c_3^{(s)},\alpha)=
\left ( \widehat{\gamma_{sph}^{(s)}}-\frac{\alpha}{2c_3^{(s)}}\log(1-\frac{c_3^{(s)}}{2\widehat{\gamma_{sph}^{(s)}}}\right ).\label{eq:Isphthmsec}
\end{equation}
Further, let $b=\frac{c_3^{(s)}}{4\gamma_{sec}^{(s)}}$,
\begin{eqnarray}
I_{sec}^{(1)}
& = & \frac{e^{\frac{b\nu_{sec}^2}{1-2b}}}{\sqrt{1-2b}}(1+\mbox{erf}(\frac{\sqrt{2}b\nu_{sec}}{\sqrt{1-2b}}))\nonumber \\
I_{sec}^{(2)} & = &
\frac{e^{\frac{b\nu_{sec}^2}{1-2b}}}{\sqrt{1-2b}}(\mbox{erfc}(\frac{\nu_{sec}}{\sqrt{2(1-2b)}}))+\mbox{erf}(\frac{\nu_{sec}}{\sqrt{2}}),\label{eq:defI1I2secthmsec}
\end{eqnarray}
and
\begin{equation}
I_{sec}(c_3^{(s)},\beta_{sec})=\min_{\gamma_{sec}^{(s)}\geq c_3^{(s)}/2,\nu_{sec}\geq 0}\left (\gamma_{sec}^{(s)}+\frac{\beta_{sec}}{c_3^{(s)}}\log(I_{sec}^{(1)})
+\frac{1-\beta_{sec}}{c_3^{(s)}}\log(I_{sec}^{(2)})\right ).\label{eq:Isecthmsec}
\end{equation}
If $\alpha$ and $\betasec$ are such that
\begin{equation}
\min_{c_3^{(s)}\geq 0}\left (-\frac{c_3^{(s)}}{2}+I_{sec}(c_3^{(s)},\beta_{sec})+I_{sph}(c_3^{(s)},\alpha)\right )<0,\label{eq:seccondthmsec}
\end{equation}
then the solution of (\ref{eq:l1}) is with overwhelming
probability the $k$-sparse $\x$ in (\ref{eq:system}).\label{thm:thmsecthrlift}
\end{theorem}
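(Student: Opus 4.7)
The plan is to reduce the claim to showing that, with overwhelming probability, $\xi_{sec}=\min_{\w\in\Ssec}\|A\w\|_2$ is strictly positive, and then to establish this positivity by bounding $E\xi_{sec}$ from below. The first reduction is immediate from the framework of \cite{StojnicCSetam09}: success of (\ref{eq:l1}) on every $k$-sparse $\x$ with a fixed but arbitrary support is implied by $\xi_{sec}>0$ (via the standard null-space characterization evaluated on $\Ssec$), and the concentration of $\xi_{sec}$ around its mean was also established there. Thus it suffices to exhibit conditions on $(\alpha,\beta_{sec})$ under which some valid lower bound on $E\xi_{sec}/\sqrt{n}$ is strictly positive, and to show that (\ref{eq:seccondthmsec}) is exactly such a condition.

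First I would rewrite $\xi_{sec}$ as the saddle-point expression (\ref{eq:sqrtnegham2}) and apply Lemma~\ref{lemma:negexplemma}, which is Theorem~\ref{thm:Gordonneg1} specialized to an exponential payoff on $\Ssec\times S^{m-1}$. Taking logarithms, dividing by $c_3$, and rearranging (as after Lemma~3 in \cite{StojnicMoreSophHopBnds10}) produces (\ref{eq:chneg8}); substituting $c_3=c_3^{(s)}\sqrt{n}$ puts everything on the right scale and splits the bound for $E\xi_{sec}/\sqrt{n}$ into the spherical term $I_{sph}(c_3^{(s)},\alpha)$ and the sectional term $I_{sec}(c_3^{(s)},\beta_{sec})$ of (\ref{eq:defIs}). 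Both must then be evaluated, and the key observation is that the resulting estimate is valid for \emph{every} $c_3^{(s)}>0$, so one is free to optimize over this parameter at the end.

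Next I would compute both terms by the Laplace/saddle-point technique of \cite{SPH}. For $I_{sph}$ the dual identity $-\|\g\|_2=\max_{\gamma_{sph}\geq 0}\bigl(-\|\g\|_2^2/(4\gamma_{sph})-\gamma_{sph}\bigr)$ decouples the minimum on $S^{m-1}$ into a product Gaussian moment-generating function; exchanging $E$ with the maximum (the asymptotic equality $\doteq$ in (\ref{eq:gamaiden1})) and rescaling $\gamma_{sph}=\gamma_{sph}^{(s)}\sqrt{n}$ reduces to the one-dimensional optimization (\ref{eq:gamaiden2}), whose closed-form solution (\ref{eq:gamaiden3}) delivers (\ref{eq:Isphthmsec}). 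For $I_{sec}$ the same idea applies with \emph{two} Lagrange multipliers: $\gamma_{sec}$ enforcing $\|\w\|_2=1$ and $\nu_{sec}$ enforcing the sectional inequality defining $\Ssec$. Dualizing both constraints, carrying out the inner minimization in $\w$ coordinate-wise (the explicit quadratic minimization giving the two sums in (\ref{eq:seceq1})), exchanging $E$ with $\min$ via the saddle-point step (\ref{eq:gamaiden1sec}), and rescaling $\gamma_{sec}=\gamma_{sec}^{(s)}\sqrt{n}$ reduces the problem to the single-coordinate Gaussian integrals $I_{sec}^{(1)}$ and $I_{sec}^{(2)}$ of (\ref{eq:defI1I2sec}). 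These integrals are then evaluated by completing the square (and splitting the second one at $|\h_i|=\nu_{sec}$), yielding the closed forms in (\ref{eq:defI1I2secthmsec}); their convergence forces $b=c_3^{(s)}/(4\gamma_{sec}^{(s)})<1/2$, i.e.\ $\gamma_{sec}^{(s)}>c_3^{(s)}/2$, which is precisely the lower constraint under the min in (\ref{eq:Isecthmsec}).

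Finally I would collect the bounds: for every admissible $c_3^{(s)}>0$ one has $E\xi_{sec}/\sqrt{n}\geq -\bigl(-c_3^{(s)}/2+I_{sec}(c_3^{(s)},\beta_{sec})+I_{sph}(c_3^{(s)},\alpha)\bigr)$, so hypothesis (\ref{eq:seccondthmsec}) yields a strictly positive lower bound on $E\xi_{sec}/\sqrt{n}$ for the optimizing $c_3^{(s)}$. The concentration of $\xi_{sec}$ from \cite{StojnicCSetam09} then upgrades this to $\xi_{sec}>0$ with overwhelming probability, which by the null-space argument is the claim. The main obstacle, and where most care is needed, is the saddle-point exchange denoted $\doteq$ in (\ref{eq:gamaiden1}) and (\ref{eq:gamaiden1sec}): pushing $E$ through the outer $\max$/$\min$ in $\gamma_{sph}$, $\gamma_{sec}$, $\nu_{sec}$ is only asymptotically exact, and the finite-$n$ corrections (which \cite{SPH} can in principle supply) must be checked to be $o(1)$ uniformly near the optimizers; for the purposes of a lower bound, however, replacing $\doteq$ by $\leq$ in (\ref{eq:gamaiden1sec}) is actually harmless, so what really matters is (i) that the optimizers in $(\gamma_{sec}^{(s)},\nu_{sec})$ stay in the interior of $\{\gamma_{sec}^{(s)}>c_3^{(s)}/2,\,\nu_{sec}\geq 0\}$ where the Gaussian integrals are finite and smooth, and (ii) that the strict-inequality margin produced by (\ref{eq:seccondthmsec}) absorbs the corresponding $\epsilon$ losses from Lemma~\ref{lemma:negexplemma} and from the saddle-point limit.
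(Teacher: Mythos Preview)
Your proposal is correct and follows essentially the same route as the paper: the paper's proof is just ``Follows from the above discussion,'' and that discussion is exactly the chain you describe --- reduce to $\xi_{sec}>0$ via \cite{StojnicCSetam09}, apply Lemma~\ref{lemma:negexplemma} to get (\ref{eq:chneg8})/(\ref{eq:chneg9}), evaluate $I_{sph}$ through the dual identity (\ref{eq:gamaiden}) and the saddle-point step of \cite{SPH}, evaluate $I_{sec}$ by Lagrange-dualizing the sphere and sectional constraints as in (\ref{eq:seceq1})--(\ref{eq:gamaiden2sec}), compute the two Gaussian integrals to obtain (\ref{eq:defI1I2secthmsec}), and note the integrability constraint $\gamma_{sec}^{(s)}>c_3^{(s)}/2$. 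Your remark that replacing $\doteq$ by $\leq$ in (\ref{eq:gamaiden1sec}) already suffices for the lower bound is exactly what the paper points out in the parenthetical after that equation.
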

\begin{proof}
Follows from the above discussion.
\end{proof}

The results for the sectional threshold obtained from the above theorem as well as the corresponding ones from \cite{DonohoPol,DonohoUnsigned,StojnicCSetam09}
are presented in Figure \ref{fig:sec}.
\begin{figure}[htb]
\centering
\centerline{\epsfig{figure=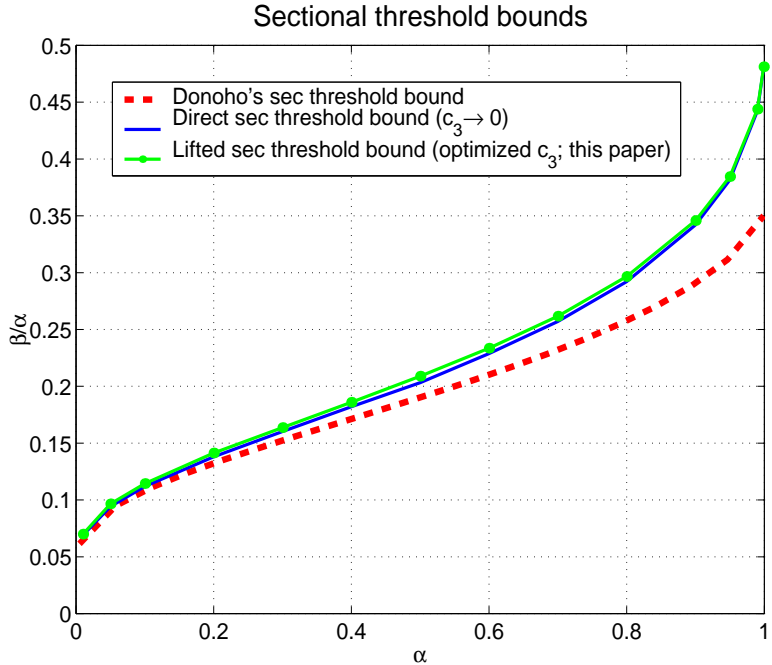,width=10.5cm,height=9cm}}
\caption{\emph{Sectional} threshold, $\ell_1$-optimization}
\label{fig:sec}
\end{figure}
The results slightly improve on those presented in \cite{StojnicCSetam09}. However, since the results are very close to the ones given in \cite{StojnicCSetam09} we present in Tables \ref{tab:sectab1} and \ref{tab:sectab2} the exact values of the attainable sectional thresholds obtained through the mechanisms presented in \cite{StojnicCSetam09} as well as those obtained through the mechanism presented in the above theorem. We also mention that the results presented in \cite{StojnicCSetam09} (and given in Theorem \ref{thm:thmsecthr}) can in fact be deduced from the above theorem. Namely, in the limit $c_3^{(s)}\rightarrow 0$, one from (\ref{eq:chneg9}) and (\ref{eq:defIs}) has $\max_{\w\in\Ssec}\h^T\w<\sqrt{\alpha n}$ as the limiting condition which is exactly the same condition considered in \cite{StojnicCSetam09}. For the completeness we present those results in the following corollary. (Of course we do mention that the results of the following corollary can be computed in a much faster fashion by simply analyzing $\max_{\w\in\Ssec}\h^T\w$ and realizing as in (\ref{eq:seceq1}) that $\max_{\w\in\Ssec}\h^T\w=\min_{\nu_{sec}}f_1(\h,\nu_{sec},\beta)$ where $f_1(\h,\nu_{sec},\beta)$ is as in (\ref{eq:deff1}); in fact this is exactly what was already done in \cite{StojnicCSetam09} and presented in Theorem \ref{thm:thmsecthr}. Here, our goal is rather different. Instead of handling this case directly we would like to show that it is in fact indeed a special case of the above theorem obtained for $c_3^{(s)}\rightarrow 0$.)

\begin{corollary}(Sectional threshold - lower bound)
Assume the setup of Theorem \ref{thm:thmsecthrlift}. Let $c_3^{(s)}\rightarrow 0$. Then
\begin{equation}
\widehat{\gamma_{sph}^{(s)}}\rightarrow -\frac{\sqrt{\alpha}}{2},\label{eq:gamasphcorsec}
\end{equation}
and
\begin{equation}
I_{sph}(c_3^{(s)},\alpha)\rightarrow
-\sqrt{\alpha}.\label{eq:Isphcorsec}
\end{equation}
Further, $b\rightarrow 0$ and
\begin{eqnarray}
I_{sec}^{(1)}
& \rightarrow &   \frac{e^{\frac{b\nu_{sec}^2}{1-2b}}}{\sqrt{1-2b}}(1+\frac{2}{\sqrt{\pi}}(\frac{\sqrt{2}b\nu_{sec}}{\sqrt{1-2b}}))
\rightarrow \frac{e^{\frac{b\nu_{sec}^2}{1-2b}}}{\sqrt{1-2b}}(1+\frac{2}{\sqrt{\pi}}(\sqrt{2}b\nu_{sec}))\nonumber \\
I_{sec}^{(2)} & \rightarrow & 1+
\mbox{erfc}(\frac{\nu_{sec}}{\sqrt{2(1-2b)}})(\frac{e^{\frac{b\nu_{sec}^2}{1-2b}}}{\sqrt{1-2b}}-1)+\mbox{erfc}(\frac{\nu_{sec}}{\sqrt{2(1-2b)}})
-\mbox{erfc}(\frac{\nu_{sec}}{\sqrt{2}})\nonumber \\
& \rightarrow &
1+
\mbox{erfc}(\frac{\nu_{sec}}{\sqrt{2}})(b\nu_{sec}^2+b)-\frac{2}{\sqrt{\pi}}\frac{\nu_{sec}}{\sqrt{2}}e^{-\frac{\nu_{sec}^2}{2}}b\nonumber \\
& \rightarrow & 1+
(\mbox{erfc}(\frac{\nu_{sec}}{\sqrt{2}})(1+\nu_{sec}^2)-\frac{2\nu_{sec}e^{-\frac{\nu_{sec}^2}{2}}}{\sqrt{2\pi}})b.\nonumber \\\label{eq:defI1I2seccorsec}
\end{eqnarray}
Moreover, let
\begin{eqnarray}
\hspace{-.4in}I_{sec}(c_3^{(s)},\beta_{sec}) & = & \min_{\gamma_{sec}^{(s)},\nu_{sec}\geq 0}
\left (\gamma_{sec}^{(s)}+\frac{\beta_{sec}(\nu_{sec}^2+1+2\sqrt{\frac{2}{\pi}}\nu_{sec})}{4\gamma_{sec}^{(s)}}+ \frac{(1-\beta_{sec})
(\mbox{erfc}(\frac{\nu_{sec}}{\sqrt{2}})(1+\nu_{sec}^2)-\frac{2\nu_{sec}e^{-\frac{\nu_{sec}^2}{2}}}{\sqrt{2\pi}})}{4\gamma_{sec}^{(s)}}\right )\nonumber \\
& = & \min_{\nu_{sec}\geq 0}
\sqrt{\left (\beta_{sec}(\nu_{sec}^2+1+2\sqrt{\frac{2}{\pi}}\nu_{sec})+ (1-\beta_{sec})
(\mbox{erfc}(\frac{\nu_{sec}}{\sqrt{2}})(1+\nu_{sec}^2)-\frac{2\nu_{sec}e^{-\frac{\nu_{sec}^2}{2}}}{\sqrt{2\pi}})\right )}.\label{eq:Iseccorsec}
\end{eqnarray}
If $\alpha$ and $\betasec$ are such that
\begin{eqnarray}
& & \min_{c_3^{(s)}\geq 0}\left (I_{sec}(c_3^{(s)},\beta_{sec})+I_{sph}(c_3^{(s)},\alpha)\right )<0\nonumber \\
&\Leftrightarrow &
\min_{\nu_{sec}\geq 0}
\sqrt{\left (\beta_{sec}(\nu_{sec}^2+1+2\sqrt{\frac{2}{\pi}}\nu_{sec})+ (1-\beta_{sec})
(\mbox{erfc}(\frac{\nu_{sec}}{\sqrt{2}})(1+\nu_{sec}^2)-\frac{2\nu_{sec}e^{-\frac{\nu_{sec}^2}{2}}}{\sqrt{2\pi}})\right )}<\sqrt{\alpha},\nonumber\\
\label{eq:seccondcorsec}
\end{eqnarray}
then the solution of (\ref{eq:l1}) is with overwhelming
probability the $k$-sparse $\x$ in (\ref{eq:system}).\label{thm:corsecthrlift}
\end{corollary}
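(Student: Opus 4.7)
The plan is to verify that the corollary is exactly the $c_3^{(s)}\to 0$ specialization of Theorem \ref{thm:thmsecthrlift}, so the proof reduces to three routine asymptotic calculations followed by one explicit one-dimensional minimization. Since the hypothesis of Theorem \ref{thm:thmsecthrlift} requires only that some $c_3^{(s)}\geq 0$ make the expression in (\ref{eq:seccondthmsec}) negative, it suffices to pass to the $c_3^{(s)}\to 0$ limit inside (\ref{eq:seccondthmsec}) and show the limiting expression is exactly $I_{sec}+I_{sph}$ with the formulas in (\ref{eq:Iseccorsec}) and (\ref{eq:Isphcorsec}); continuity in $c_3^{(s)}$ then guarantees strict negativity also for some sufficiently small positive $c_3^{(s)}$.

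For the spherical part, plugging $c_3^{(s)}=0$ into (\ref{eq:gamasphthmsec}) immediately gives $\widehat{\gamma_{sph}^{(s)}}\to -\sqrt{\alpha}/2$. For $I_{sph}$, note $c_3^{(s)}/(2\widehat{\gamma_{sph}^{(s)}})\to 0$ like $-c_3^{(s)}/\sqrt{\alpha}$, so the expansion $\log(1-x)=-x+O(x^2)$ produces $(\alpha/(2c_3^{(s)}))\log(1-c_3^{(s)}/(2\widehat{\gamma_{sph}^{(s)}}))\to -\alpha/(4\widehat{\gamma_{sph}^{(s)}})=\sqrt{\alpha}/2$, hence $I_{sph}\to -\sqrt{\alpha}/2-\sqrt{\alpha}/2=-\sqrt{\alpha}$.

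For the sectional part I would Taylor-expand the explicit formulas in (\ref{eq:defI1I2secthmsec}) in the small parameter $b=c_3^{(s)}/(4\gamma_{sec}^{(s)})$. Using $e^{b\nu_{sec}^2/(1-2b)}/\sqrt{1-2b}=1+b(1+\nu_{sec}^2)+O(b^2)$ together with $\mbox{erf}(\sqrt{2}b\nu_{sec}/\sqrt{1-2b})=(2\sqrt{2}/\sqrt{\pi})b\nu_{sec}+O(b^2)$ yields $I_{sec}^{(1)}=1+b(\nu_{sec}^2+1+2\sqrt{2/\pi}\nu_{sec})+O(b^2)$. For $I_{sec}^{(2)}$, the same exponential factor combined with $\mbox{erfc}(\nu_{sec}/\sqrt{2(1-2b)})=\mbox{erfc}(\nu_{sec}/\sqrt{2})-2b\nu_{sec}e^{-\nu_{sec}^2/2}/\sqrt{2\pi}+O(b^2)$ and the identity $\mbox{erfc}+\mbox{erf}=1$ gives $I_{sec}^{(2)}=1+b\bigl((1+\nu_{sec}^2)\mbox{erfc}(\nu_{sec}/\sqrt{2})-2\nu_{sec}e^{-\nu_{sec}^2/2}/\sqrt{2\pi}\bigr)+O(b^2)$. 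Applying $\log(1+x)=x+O(x^2)$ and using $b/c_3^{(s)}=1/(4\gamma_{sec}^{(s)})$, the objective in (\ref{eq:Isecthmsec}) collapses to $\gamma_{sec}^{(s)}+F(\nu_{sec},\beta_{sec})/(4\gamma_{sec}^{(s)})$, where $F$ is precisely the nonnegative expression appearing under the square root in (\ref{eq:Iseccorsec}).

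The remaining minimization over $\gamma_{sec}^{(s)}>0$ is the standard AM-GM calculation: the minimizer is $\gamma_{sec}^{(s)}=\sqrt{F}/2$ with minimum value $\sqrt{F}$, which produces the second equality of (\ref{eq:Iseccorsec}). Substituting this together with $I_{sph}\to-\sqrt{\alpha}$ and $-c_3^{(s)}/2\to 0$ back into (\ref{eq:seccondthmsec}) reduces the sufficient condition to $\min_{\nu_{sec}\geq 0}\sqrt{F(\nu_{sec},\beta_{sec})}<\sqrt{\alpha}$, which is exactly (\ref{eq:seccondcorsec}). The only point deserving care is the legitimacy of swapping the order of the $\nu_{sec}$-minimization and the $c_3^{(s)}\to 0$ limit; this is handled by the one-sided direction I actually need (if (\ref{eq:seccondcorsec}) holds strictly, then choosing $c_3^{(s)}>0$ small enough and $\nu_{sec}$ near the limiting optimizer makes the expression inside (\ref{eq:seccondthmsec}) negative), so no uniform convergence argument is required.
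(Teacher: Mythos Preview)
Your proposal is correct and follows essentially the same route as the paper's primary argument: the paper's proof simply states that Theorem \ref{thm:thmsecthrlift} holds for every $c_3^{(s)}\geq 0$ and then specializes to $c_3^{(s)}\to 0$, and you have carefully filled in the Taylor expansions and the AM--GM step that the paper leaves implicit. The paper additionally sketches an alternative derivation that bypasses the limit altogether by bounding $E\max_{\w\in\Ssec}\h^T\w/\sqrt{n}$ directly via the Lagrangian form (\ref{eq:seceq1}), arriving at the same condition (\ref{eq:seccondcorsec}); your approach and the paper's direct alternative are complementary, with yours making transparent that the corollary is literally the degenerate boundary of the lifted theorem.
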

\begin{proof}
Theorem \ref{thm:thmsecthrlift} holds for any $c_3^{(s)}\geq 0$. The above corollary instead of looking for the best possible $c_3^{(s)}$ in Theorem \ref{thm:thmsecthrlift} assumes a simple $c_3^{(s)}\rightarrow 0$ scenario.

Alternatively, one can look at $\frac{E\max_{\w\in\Ssec}\h^T\w}{\sqrt{n}}$ and following the methodology presented in (\ref{eq:seceq1}) (and originally in \cite{StojnicCSetam09}) through a combination of a combination of (\ref{eq:gamaiden2sec}), (\ref{eq:defI1I2sec}), (\ref{eq:Isecthmsec}), (\ref{eq:seccondthmsec}), and (\ref{eq:Isphcorsec}) obtain for a scalar $\nu_{sec}\geq 0$
\begin{eqnarray}
\frac{E\max_{\w\in\Ssec}\h^T\w}{\sqrt{n}}& \leq & \gamma_{sec}+\frac{\beta_{sec}E(|\h_i|+\nu_{sec})^2
+(1-\beta_{sec})E((\max(|\h_i|-\nu_{sec},0))^2)}{4\gamma_{sec}}\nonumber \\
& = & \sqrt{\beta_{sec}E(|\h_i|+\nu_{sec})^2+ (1-\beta_{sec})E_{\nu_{sec}\leq |\h_i|}(|\h_i|-\nu_{sec})^2}.\label{eq:corrsec0}
\end{eqnarray}
Optimizing (tightening) over $\nu_{sec}\geq 0$ (and using all the concentrating machinery of \cite{StojnicCSetam09}) one can write
\begin{eqnarray}
\hspace{-.5in}\frac{E\max_{\w\in\Ssec}\h^T\w}{\sqrt{n}} & \doteq & \min_{\nu_{sec}\geq 0}\sqrt{\left (\beta_{sec}(1+2\sqrt{\frac{2}{\pi}}\nu_{sec}+\nu_{sec}^2)
+(1-\beta_{sec})\int_{\nu_{sec}^{(1)}\leq |\h_i|}(|\h_i|-\nu_{sec})^2\frac{e^{-\frac{\h_i^2}{2}}d\h_i}{\sqrt{2\pi}}\right )}\nonumber \\
& = & \min_{\nu_{sec}\geq 0}\sqrt{\left (\beta_{sec}(1+2\sqrt{\frac{2}{\pi}}\nu_{sec}+\nu_{sec}^2)
+(1-\beta_{sec})(\mbox{erfc}(\frac{\nu_{sec}}{\sqrt{2}})(1+\nu_{sec}^2)-\frac{2\nu_{sec}e^{-\frac{\nu_{sec}^2}{2}}}{\sqrt{2\pi}})\right )}.\nonumber \\\label{eq:corrsec1}
\end{eqnarray}
Connecting beginning and end in (\ref{eq:corrsec1}) then leads to the condition given in the above corollary.
\end{proof}

\noindent \textbf{Remark:} Solving over $\nu_{sec}$ and juggling a bit one can arrive at the results presented in Theorem \ref{thm:thmsecthr}. Of course, the results of Theorem \ref{thm:thmsecthr} were presented in more detail in \cite{StojnicCSetam09} where our main concern was a thorough studying of the underlying optimization problem.

The results obtained from the previous corollary (and obviously the results from Theorem \ref{thm:thmsecthr}) are those presented in Figure \ref{fig:sec} and Tables \ref{tab:sectab1} and \ref{tab:sectab2} that we refer to as $c_3\rightarrow 0$ scenario or direct sectional threshold bounds (also we remove superscript $(s)$ from $c_3^{(s)}$ to make the figure and tables easier to view). As can be seen from the tables, while conceptually substantial in practice the improvement may not be fairly visible. That can be because the methods are not powerful enough to make a bigger improvement or simply because a big improvement may not be possible (in other words the results obtained in \cite{StojnicCSetam09} may very well already be fairly close to the optimal ones). As for the limits of the developed methods, we do want to emphasize that we did solve the numerical optimizations that appear in Theorem \ref{thm:thmsecthrlift} only on a local optimum level and obviously only with a finite precision. We do not know if a substantial change would occur in the presented results had we solved it on a global optimum level (we recall that finding local optima is of course certainly enough to establish attainable values of the sectional thresholds). As for how far away from the true sectional thresholds are the results presented in the tables and Figure \ref{fig:sec}, we actually believe that they are in fact very, very close to the optimal ones.

\begin{table}
\caption{Sectional threshold bounds -- low $\alpha\leq 0.5$ regime}\vspace{.1in}
\hspace{-0in}\centering
\begin{tabular}{||c|c|c|c|c|c|c|c||}\hline\hline
 $\alpha$  & $0.01$ & $0.05$ & $0.1$ & $0.2$ & $0.3$ & $0.4$ & $0.5$ \\ \hline\hline
 $\beta_{sec}$ ($c_3\rightarrow 0$) & $0.00069$ & $0.00471$ & $0.0112$ & $0.0276$ & $0.0481$ & $0.0728$ & $0.1022$ \\ \hline
 $\beta_{sec}$ (optimized $c_3$)
 &  $0.00070$ & $0.00483$ & $0.0115$ & $0.0283$ & $0.0491$ & $0.0744$ & $0.1045$ \\ \hline\hline
\end{tabular}
\label{tab:sectab1}
\end{table}

\begin{table}
\caption{Sectional threshold bounds -- high $\alpha> 0.5$ regime}\vspace{.1in}
\hspace{-0in}\centering
\begin{tabular}{||c|c|c|c|c|c|c|c|c||}\hline\hline
 $\alpha$  & $0.6$ & $0.7$ & $0.8$ & $0.9$ & $0.95$ & $0.99$ & $0.999$ & $0.9999$\\ \hline\hline
 $\beta_{sec}$ ($c_3\rightarrow 0$) & $0.1373$ & $0.1800$ & $0.2337$ &
$0.3079$ & $0.3626$  & $0.4378$ & $0.4802$ & $0.4937 $ \\ \hline
 $\beta_{sec}$ (optimized $c_3$)
 & $0.1401$ & $0.1832$ &
$0.2373$ & $0.3113$ & $0.3654$ & $0.4394$ &  $0.4807$ & $0.4937$ \\ \hline\hline
\end{tabular}
\label{tab:sectab2}
\end{table}

\section{Lifting $\ell_1$-minimization strong threshold}
\label{sec:strthr}

In this section we look at the strong thresholds. We do mention before presenting anything that these strong threshold results were substantially harder to establish than the ones that we presented in the previous subsection for the sectional thresholds.

As in the previous sections, throughout the presentation in this section we will assume a substantial level of familiarity with many of the well-known results that relate to the performance characterization of (\ref{eq:l1}) (we will again fairly often recall on many results/definitions that we established in \cite{StojnicCSetam09}). We start by defining a set $\Sstr$
\begin{equation}
\Sstr=\{\w\in S^{n-1}| \quad \sum_{i=1}^n \b_i|\w_i|<0,\b_i^2=1,\sum_{i=1}^n\b_i=n-2k\},\label{eq:defSstr}
\end{equation}
where $S^{n-1}$ is the unit sphere in $R^n$. Then it was established in \cite{StojnicCSetam09} that the following optimization problem is of critical importance in determining the strong threshold of $\ell_1$-minimization
\begin{equation}
\xi_{str}=\min_{\w\in\Sstr}\|A\w\|_2.\label{eq:negham1str}
\end{equation}
Namely, what was established in \cite{StojnicCSetam09} is roughly the following: if $\xi_{str}$ is positive with overwhelming probability for certain combination of $k$, $m$, and $n$ then for $\alpha=\frac{m}{n}$ one has a lower bound $\beta_{str}=\frac{k}{n}$ on the true value of the strong threshold with overwhelming probability. Also, as was the case with the sectional thresholds studied in the previous section, the mechanisms of \cite{StojnicCSetam09} were powerful enough to establish the concentration of $\xi_{str}$ as well. This essentially means that if we can show that $E\xi_{str}>0$ for certain $k$, $m$, and $n$ we can then obtain the lower bound on the strong threshold. This is, of course, precisely what was done in \cite{StojnicCSetam09}. However, as was the case with the sectional thresholds, the results we obtained for the strong threshold in \cite{StojnicCSetam09} through such a consideration were not exact. The main reason of course was inability to determine $E\xi_{str}$ exactly. Instead we resorted to its lower bounds and similarly to what happened when we consider sectional thresholds, those bounds turned out to be loose. Moreover, they were loose enough that in certain range of $\alpha$-axis the obtained thresholds could not even achieve the lower bounds already known from \cite{DonohoPol}.  In this section we will use some of the ideas from the previous section (which have roots in the mechanisms we recently introduced in \cite{StojnicMoreSophHopBnds10}) to provide a substantial conceptual improvement in these bounds. This would in turn reflect in a conceptual improvement of the strong thresholds. However, we do mention that although the translation from the results achieved in the previous section to the ones that we will present below will appear seemingly smooth it was not so trivial to achieve such a translation. In fact, it took a substantial effort on our part to create results from the previous section and way, way more than that to find a good mechanism that can fit the strong thresholds as well.

Below we present a way to create a lower-bound on the optimal value of (\ref{eq:negham1str}).

\subsection{Lower-bounding $\xi_{str}$}
\label{sec:lbxistr}

As mentioned above, in this subsection we will look at problem from (\ref{eq:negham1str}) or more precisely its optimal value. Also, as mentioned earlier, we will continue to assume that the elements of $A$ are i.i.d. standard normal random variables.

We start by reformulating the problem in (\ref{eq:negham1str}) in the following way
\begin{equation}
\xi_{str}=\min_{\w\in\Sstr}\max_{\|\y\|_2=1}\y^TA\w.\label{eq:sqrtnegham2str}
\end{equation}
Then we continue by reformulating Lemma \ref{lemma:negexplemma}. Namely, based on Lemma \ref{lemma:negexplemma}, we establish the following:
\begin{lemma}
Let $A$ be an $m\times n$ matrix with i.i.d. standard normal components. Let $\g$ and $\h$ be $n\times 1$ and $m\times 1$ vectors, respectively, with i.i.d. standard normal components. Also, let $g$ be a standard normal random variable and let $c_3$ be a positive constant. Then
\begin{equation}
E(\max_{\w\in\Sstr}\min_{\|\y\|_2=1}e^{-c_3(\y^T A\w + g)})\leq E(\max_{\w\in\Sstr}\min_{\|\y\|_2=1}e^{-c_3(\g^T\y+\h^T\w)}).\label{eq:negexplemma}
\end{equation}\label{lemma:negexplemmastr}
\end{lemma}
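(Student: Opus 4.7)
The plan is to invoke Gordon's comparison theorem (Theorem \ref{thm:Gordonneg1}) in exactly the way it was used to prove Lemma \ref{lemma:negexplemma}, with $\Sstr$ taking the place of $\Ssec$. The key observation is that the argument for Lemma \ref{lemma:negexplemma} sees the feasible set for $\w$ only through the constraint $\w\in S^{n-1}$; the particular combinatorial description of $\Ssec$ plays no role. Since $\Sstr\subseteq S^{n-1}$ as well, the proof should carry over without any structural change — only the index set shrinks from one sphere-subset to another.

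Concretely, I would index two centered Gaussian processes by pairs $(\w,\y)$ with $\w\in\Sstr$ and $\|\y\|_2=1$, setting
\begin{align*}
X_{\w,\y} &= \g^T\y+\h^T\w, \\
Y_{\w,\y} &= \y^T A\w+g,
\end{align*}
with all Gaussians mutually independent and standard normal. The $i$-index of Theorem \ref{thm:Gordonneg1} is played by $\w$ and the $j$-index by $\y$. The three hypotheses are verified directly: (1) $E X_{\w,\y}^2=\|\y\|_2^2+\|\w\|_2^2=2=\|\y\|_2^2\|\w\|_2^2+1=E Y_{\w,\y}^2$; (2) for the same $\w$ and distinct $\y,\y'$ both covariances equal $\y^T\y'+1$, so the required inequality holds with equality; (3) for distinct $\w,\w'$ and distinct $\y,\y'$ one computes
\begin{align*}
E X_{\w,\y}X_{\w',\y'}-E Y_{\w,\y}Y_{\w',\y'} = (\y^T\y'+\w^T\w')-((\y^T\y')(\w^T\w')+1) = -(1-\y^T\y')(1-\w^T\w') \leq 0,
\end{align*}
where the last step uses $|\y^T\y'|\leq 1$ and $|\w^T\w'|\leq 1$ by Cauchy--Schwarz on the unit sphere. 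Applying the max-min version of Theorem \ref{thm:Gordonneg1} to the decreasing function $\psi(t)=e^{-c_3 t}$ then immediately yields the claimed inequality.

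The only point worth checking carefully is that exchanging $\Ssec$ for $\Sstr$ does not invalidate any step, but Gordon's machinery only sees the index set through the covariance structure, and $\w$ enters those covariances exclusively through $\|\w\|_2$ and through inner products $\w^T\w'$ bounded in absolute value by $1$. Since both of these properties hold for every $\w\in\Sstr$ (the sign-pattern condition $\sum_i\b_i|\w_i|<0$ is never used), there is no obstacle. The genuine work of the section will lie not in this lemma but in controlling the right-hand side $E\max_{\w\in\Sstr}\min_{\|\y\|_2=1}e^{-c_3(\g^T\y+\h^T\w)}$, where the combinatorial structure of $\Sstr$ does enter and differs substantially from the sectional case already treated.
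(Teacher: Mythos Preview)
Your proposal is correct and is exactly the approach the paper takes: it says the proof is a ``standard/direct application of Theorem \ref{thm:Gordonneg1}'' identical to that of Lemma \ref{lemma:negexplemma}, with $\Sstr$ in place of $\Ssec$ and no structural change. You have in fact supplied more detail than the paper does (the explicit covariance verifications and the factorization $-(1-\y^T\y')(1-\w^T\w')$), and your closing remark that the combinatorial structure of $\Sstr$ only becomes relevant when analyzing the right-hand side is precisely the point of the subsequent development.
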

\begin{proof}
The proof is a standard/direct application of Theorem \ref{thm:Gordonneg1} (as was the proof of Lemma \ref{lemma:negexplemma}). We will of course omit the details  an just mention that the only difference between this lemma and Lemma \ref{lemma:negexplemma} is in the structure of set $\Sstr$. What is here $\Sstr$ it is $\Ssec$ in Lemma \ref{lemma:negexplemma}. However, such a difference introduces no structural changes in the proof.
\end{proof}

Following what was done in the previous section we arrive at the following analogue of (\ref{eq:chneg8}) (and ultimately of \cite{StojnicMoreSophHopBnds10}'s equation $(57)$):
\begin{equation}
E(\min_{\w\in\Sstr}\|A\w\|_2)\geq
\frac{c_3}{2}-\frac{1}{c_3}\log(E(\max_{\w\in\Sstr}(e^{-c_3\h^T\w})))
-\frac{1}{c_3}\log(E(\min_{\|\y\|_2=1}(e^{-c_3\g^T\y}))).\label{eq:chneg8str}
\end{equation}
As earlier, let $c_3=c_3^{(s)}\sqrt{n}$ where $c_3^{(s)}$ is a constant independent of $n$. Then (\ref{eq:chneg8str}) becomes
\begin{eqnarray}
\hspace{-.5in}\frac{E(\min_{\w\in\Sstr}\|A\w\|_2)}{\sqrt{n}}
& \geq &
\frac{c_3^{(s)}}{2}-\frac{1}{nc_3^{(s)}}\log(E(\max_{\w\in\Sstr}(e^{-c_3^{(s)}\sqrt{n}\h^T\w})))
-\frac{1}{nc_3^{(s)}}\log(E(\min_{\|\y\|_2=1}(e^{-c_3^{(s)}\sqrt{n}\g^T\y})))\nonumber \\
& = &-(-\frac{c_3^{(s)}}{2}+I_{str}(c_3^{(s)},\beta)+I_{sph}(c_3^{(s)},\alpha)),\label{eq:chneg9str}
\end{eqnarray}
where
\begin{eqnarray}
I_{str}(c_3^{(s)},\beta) & = & \frac{1}{nc_3^{(s)}}\log(E(\max_{\w\in\Sstr}(e^{-c_3^{(s)}\sqrt{n}\h^T\w})))\nonumber \\
I_{sph}(c_3^{(s)},\alpha) & = & \frac{1}{nc_3^{(s)}}\log(E(\min_{\|\y\|_2=1}(e^{-c_3^{(s)}\sqrt{n}\g^T\y}))).\label{eq:defIsstr}
\end{eqnarray}

As in the previous section, one should now note that the above bound is effectively correct for any positive constant $c_3^{(s)}$. The only thing that is then left to be done so that the above bound becomes operational is to estimate $I_{str}(c_3^{(s)},\beta)$ and $I_{sph}(c_3^{(s)},\alpha)$. We recall that
\begin{equation}
I_{sph}(c_3^{(s)},\alpha)=\frac{1}{nc_3^{(s)}}\log(Ee^{-c_3^{(s)}\sqrt{n}\|\g\|_2})\doteq
\left ( \widehat{\gamma_{sph}^{(s)}}-\frac{\alpha}{2c_3^{(s)}}\log(1-\frac{c_3^{(s)}}{2\widehat{\gamma_{sph}^{(s)}}}\right ),\label{eq:Isphstr}
\end{equation}
where
\begin{equation}
\widehat{\gamma_{sph}^{(s)}}=\frac{2c_3^{(s)}-\sqrt{4(c_3^{(s)})^2+16\alpha}}{8}.\label{eq:gamaiden3str}
\end{equation}

We now switch to $I_{str}(c_3^{(s)},\beta)$. Similarly to what was stated earlier, pretty good estimates for this quantity can be obtained for any $n$. However, to facilitate the exposition we will focus only on the large $n$ scenario. In that case one can again use the saddle point concept applied in \cite{SPH}. As above, we present the core idea without all the details from \cite{SPH}. Let $f(\w)=-\h^T\w$ and
we start with the following line of identities
\begin{eqnarray}
f_{str}=\max_{\w\in\Sstr}f(\w)=-\min_{\w\in\Sstr}\h^T\w =  -\min_{\b,\w} & & \h^T\w\nonumber \\
\mbox{subject to} & & \sum_{i=1}^{n} \b_i|\w_i|<0,\nonumber \\
& & \sum_{i=1}^{n} \w_i^2=1,\nonumber \\
& & \b_i\in\{-1,1\},1 i\leq n,\nonumber \\
& & \sum_{i=1}^{n}\b_i= n-2k.\label{eq:streq0}
\end{eqnarray}
We then further have
\begin{eqnarray}
f_{str} & = & -\min_{\b_i^2=1,\w}\max_{\gamma_{str},\nu_{str}^{(1)},\nu_{str}^{(2)}\geq 0} \h^T\w+\nu_{str}^{(1)}\sum_{i=1}^{n} \b_i|\w_i|
-\nu_{str}^{(2)}\sum_{i=1}^{n}\b_i+\nu_{str}^{(2)}(n-2k)+\gamma_{str}\sum_{i=1}^{n} \w_i^2-\gamma_{str}\nonumber \\
&\leq & -\max_{\gamma_{str},\nu_{str}^{(1)},\nu_{str}^{(2)}\geq 0}\min_{\b_i^2=1,\w} \h^T\w+\nu_{str}^{(1)}\sum_{i=1}^{n} \b_i|\w_i|
-\nu_{str}^{(2)}\sum_{i=1}^{n}\b_i+\nu_{str}^{(2)}(n-2k)+\gamma_{str}\sum_{i=1}^{n} \w_i^2-\gamma_{str}\nonumber \\
& = & -\max_{\gamma_{str},\nu_{str}^{(1)},\nu_{str}^{(2)}\geq 0}\min_{\b_i^2=1,\w} \sum_{i=1}^{n} (-|\h_i|+\nu_{str}^{(1)}\b_i)|\w_i|
-\nu_{str}^{(2)}\sum_{i=1}^{n}\b_i+\nu_{str}^{(2)}(n-2k)+\gamma_{str}\sum_{i=1}^{n} \w_i^2-\gamma_{str}.\nonumber \\
\label{eq:streq01}
\end{eqnarray}
Positivity condition on $\nu_{str}^{(2)}$ is added although it is not necessary (it essentially amount to relaxing the last constraint to an inequality which changes nothing with respect to the final results). Optimizing further we obtain
\begin{eqnarray}
f_{str} & \leq & -\max_{\gamma_{str},\nu_{str}^{(1)},\nu_{str}^{(2)}\geq 0}\min_{\b_i^2=1,\w} \sum_{i=1}^{n} (-|\h_i|+\nu_{str}^{(1)}\b_i)|\w_i|
-\nu_{str}^{(2)}\sum_{i=1}^{n}\b_i+\nu_{str}^{(2)}(n-2k)+\gamma_{str}\sum_{i=1}^{n} \w_i^2-\gamma_{str}\nonumber \\
& = & \min_{\gamma_{str},\nu_{str}^{(1)},\nu_{str}^{(2)}\geq 0}\max_{\b_i^2=1,\w} \sum_{i=1}^{n} (|\h_i|-\nu_{str}^{(1)}\b_i)|\w_i|
+\nu_{str}^{(2)}\sum_{i=1}^{n}\b_i-\nu_{str}^{(2)}(n-2k)-\gamma_{str}\sum_{i=1}^{n} \w_i^2+\gamma_{str}.\nonumber \\
\label{eq:streq02}
\end{eqnarray}
To solve the inner optimization it helps to introduce a vector $\t$ in the following way
\begin{equation}
\t_i=\max\left (\frac{(|\h_i|+\nu_{str}^{(1)})^2}{4\gamma_{str}}-\nu_{str}^{(2)},
\frac{(\max(|\h_i|-\nu_{str}^{(1)},0))^2}{4\gamma_{str}}+\nu_{str}^{(2)}\right ),\label{eq:streq030}
\end{equation}
or alternatively (possibly in a more convenient way)
\begin{equation}
\t_i=\begin{cases}\frac{\h_i^2+(\nu_{str}^{(1)})^2}{4\gamma_{str}}+|\frac{|\h_i|\nu_{str}^{(1)}}{2\gamma_{str}}-\nu_{str}^{(2)}|, & |\h_i|\geq \nu_{str}^{(1)}\\
\max\left (\frac{(|\h_i|+\nu_{str}^{(1)})^2}{4\gamma_{str}}-\nu_{str}^{(2)},\nu_{str}^{(2)}\right ), &  |\h_i|\leq \nu_{str}^{(1)}.\label{eq:streq03}
\end{cases}
\end{equation}
Using (\ref{eq:streq03}), (\ref{eq:streq02}) then becomes
\begin{eqnarray}
f_{str} & \leq &  \min_{\gamma_{str},\nu_{str}^{(1)},\nu_{str}^{(2)}\geq 0}\max_{\b_i^2=1,\w} \sum_{i=1}^{n} (|\h_i|-\nu_{str}^{(1)}\b_i)|\w_i|
+\nu_{str}^{(2)}\sum_{i=1}^{n}\b_i-\nu_{str}^{(2)}(n-2k)-\gamma_{str}\sum_{i=1}^{n} \w_i^2+\gamma_{str}\nonumber \\
& = &  \min_{\gamma_{str},\nu_{str}^{(1)},\nu_{str}^{(2)}\geq 0} \sum_{i=1}^{n}\t_i+\nu_{str}^{(2)}(2k-n)+\gamma_{str}.\nonumber \\
\label{eq:streq04}
\end{eqnarray}
Although we showed an inequality on $f_{str}$ (which is sufficient for what we need here) we do mention that the above actually holds with the equality. Let
\begin{equation}
f_1^{(str)}(\h,\nu_{str}^{(1)},\nu_{str}^{(2)},\gamma_{str},\beta)=\sum_{i=1}^{n} \t_i.\label{eq:deff1str}
\end{equation}
Then
\begin{multline}
\hspace{-.3in}I_{str}(c_3^{(s)},\beta)  =  \frac{1}{nc_3^{(s)}}\log(E(\max_{\w\in\Sstr}(e^{-c_3^{(s)}\sqrt{n}\h^T\w}))) = \frac{1}{nc_3^{(s)}}\log(E(\max_{\w\in\Sstr}(e^{c_3^{(s)}\sqrt{n}f(\w))})))\\=\frac{1}{nc_3^{(s)}}\log(Ee^{c_3^{(s)}\sqrt{n}\min_{\gamma_{str},\nu_{str}^{(1)}
,\nu_{str}^{(2)}\geq 0}(f_1^{(str)}(\h,\nu_{str}^{(1)},\nu_{str}^{(2)},\gamma_{str},\beta)+\nu_{str}^{(2)}(2k-n)+\gamma_{str})})\\
\doteq \frac{1}{nc_3^{(s)}}\min_{\gamma_{str},\nu_{str}^{(1)}
,\nu_{str}^{(2)}\geq 0}\log(Ee^{c_3^{(s)}\sqrt{n}(f_1^{(str)}(\h,\nu_{str}^{(1)},\nu_{str}^{(2)},\gamma_{str},\beta)+\nu_{str}^{(2)}(2k-n)+\gamma_{str})})\\
=\min_{\gamma_{str},\nu_{str}^{(1)}
,\nu_{str}^{(2)}\geq 0}(\nu_{str}^{(2)}\sqrt{n}(2\beta-1)+ \frac{\gamma_{str}}{\sqrt{n}}+\frac{1}{nc_3^{(s)}}\log(Ee^{c_3^{(s)}\sqrt{n}(f_1^{(str)}(\h,\nu_{str}^{(1)},\nu_{str}^{(2)},\gamma_{str},\beta))}))\\
=\min_{\gamma_{str},\nu_{str}^{(1)}
,\nu_{str}^{(2)}\geq 0}(\nu_{str}^{(2)}\sqrt{n}(2\beta-1)+ \frac{\gamma_{str}}{\sqrt{n}}+\frac{1}{nc_3^{(s)}}\log(Ee^{c_3^{(s)}\sqrt{n}(\sum_{i=1}^{n}\t_i)})),\label{eq:gamaiden1str}
\end{multline}
where $\t_i$ is as given in (\ref{eq:streq03}) and as earlier, $\doteq$ stands for equality when $n\rightarrow \infty$ and would be obtained through the mechanism presented in \cite{SPH} (for our needs here though, even just replacing $\doteq$ with a simple $\leq$ inequality suffices). Now if one sets $\gamma_{str}=\gamma_{str}^{(s)}\sqrt{n}$ and $\nu_{str}^{(2,s)}=\nu_{str}^{(2)}\sqrt{n}$ then (\ref{eq:gamaiden1str}) gives
\begin{eqnarray}
I_{str}(c_3^{(s)},\beta)
& = & \min_{\gamma_{str},\nu_{str}^{(1)}
,\nu_{str}^{(2)}\geq 0}(\nu_{str}^{(2)}\sqrt{n}(2\beta-1)+ \frac{\gamma_{str}}{\sqrt{n}}+\frac{1}{nc_3^{(s)}}\log(Ee^{c_3^{(s)}\sqrt{n}(\sum_{i=1}^{n}\t_i)}))\nonumber \\
& = &
\min_{\gamma_{str}^{(s)},\nu_{str}^{(1)}
,\nu_{str}^{(2,s)}\geq 0}(\nu_{str}^{(2,s)}(2\beta-1)+ \gamma_{str}^{(s)}+\frac{1}{c_3^{(s)}}\log(Ee^{c_3^{(s)}\t_i^{(s)}})),\label{eq:gamaiden2str}
\end{eqnarray}
where
\begin{equation}
\t_i^{(s)}=\begin{cases}\frac{\h_i^2+(\nu_{str}^{(1)})^2}{4\gamma_{str}^{(s)}}+|\frac{|\h_i|\nu_{str}^{(1)}}{2\gamma_{str}^{(s)}}-\nu_{str}^{(2,s)}|, & |\h_i|\geq \nu_{str}^{(1)}\\
\max\left (\frac{(|\h_i|+\nu_{str}^{(1)})^2}{4\gamma_{str}^{(s)}}-\nu_{str}^{(2,s)},\nu_{str}^{(2,s)}\right ), &  |\h_i|\leq \nu_{str}^{(1)}.\label{eq:deftisstr}
\end{cases}
\end{equation}
The above characterization is then sufficient to compute attainable strong thresholds. However, since there will be a substantial numerical work involved it is probably a bit more convenient to look for a neater representation. That obviously involves solving a bunch of integrals. We skip such a tedious jobs but present the final results. We start with setting
\begin{equation}
I=Ee^{c_3^{(s)}\t_i^{(s)}}.\label{eq:defbigIstr}
\end{equation}
Then one has
\begin{equation}
I=\begin{cases}I^{(1)}, & (\nu_{str}^{(1)})^2<2\gamma_{str}^{(s)}\nu_{str}^{(2,s)}\\
I^{(2)}, & 2\gamma_{str}^{(s)}\nu_{str}^{(2,s)}\leq (\nu_{str}^{(1)})^2<8\gamma_{str}^{(s)}\nu_{str}^{(2,s)}\\
I^{(3)}, & (\nu_{str}^{(1)})^2>8\gamma_{str}^{(s)}\nu_{str}^{(2,s)},\label{eq:defbigIstr1}
\end{cases}
\end{equation}
where $I^{(1)}$, $I^{(2)}$, and $I^{(3)}$ are defined below.

\underline{\emph{1. Determining $I^{(1)}$}}

\noindent Set
\begin{eqnarray}
p & = & c_3^{(s)}/4/\gamma_{str}^{(s)}\nonumber \\
q & = & c_3^{(s)}\nu_{str}^{(1)}/2/\gamma_{str}^{(s)}\nonumber \\
r & = & c_3^{(s)}((\nu_{str}^{(1)})^2/4/\gamma_{str}^{(s)}-\nu_{str}^{(2,s)})\nonumber \\
C_1 & = & exp((q/\sqrt{2(1-2p)})^2+r)/\sqrt{1/2-p}\nonumber \\
I_{11}^{(1)} & = & C_1/2/\sqrt{2}\mbox{erfc}(2\gamma_{str}^{(s)}\nu_{str}^{(2,s)}/\nu_{str}^{(1)}\sqrt{1/2-p}-q/\sqrt{2(1-2p)}),\label{eq:defbigI111str}
\end{eqnarray}
and
\begin{eqnarray}
r_1 & = & c_3^{(s)}((\nu_{str}^{(1)})^2/4/\gamma_{str}^{(s)}+\nu_{str}^{(2,s)})\nonumber \\
C_{12} & = & exp((-q/\sqrt{2(1-2p)})^2+r_1)/\sqrt{1/2-p}\nonumber \\
I_{12}^{(1)} & = & C_{12}/2/\sqrt{2}(\mbox{erfc}(\nu_{str}^{(1)}\sqrt{1/2-p}+q/\sqrt{2(1-2p)})
-\mbox{erfc}(2\gamma_{str}^{(s)}\nu_{str}^{(2,s)}/\nu_{str}^{(1)}\sqrt{1/2-p}+q/\sqrt{2(1-2p)})).\nonumber \\\label{eq:defbigI112str}
\end{eqnarray}
Further set
\begin{equation}
I_2^{(1)}=1/2e^{c_3^{(s)}\nu_{str}^{(2,s)}}\mbox{erf}(\nu_{str}^{(s)}/\sqrt{2}).\label{eq:defbigI12str}
\end{equation}
Then
\begin{equation}
I^{(1)}=2(I_{11}^{(1)}+I_{12}^{(1)} +I_2^{(1)}).\label{eq:fbigI1str}
\end{equation}

\underline{\emph{2. Determining $I^{(2)}$}}

\noindent Set
\begin{eqnarray}
I_1^{(2)} & = & C_1/2/\sqrt{2}\mbox{erfc}(\nu_{str}^{(1)}\sqrt{1/2-p}-q/\sqrt{2(1-2p)})\nonumber \\
I_{21}^{(2)} & = & 1/2e^{c_3^{(s)}\nu_{str}^{(2,s)}}\mbox{erf}((\sqrt{8\gamma_{str}^{(s)}\nu_{str}^{(2,s)}}-\nu_{str}^{(1)})/\sqrt{2})\nonumber \\
I_{22}^{(2)} & = & C_1/2/\sqrt{2}(\mbox{erfc}((\sqrt{8\gamma_{str}^{(s)}\nu_{str}^{(2,s)}}-\nu_{str}^{(1)})\sqrt{1/2-p}-q/\sqrt{2(1-2p)})\nonumber \\
& - & \mbox{erfc}(\nu_{str}^{(1)}\sqrt{1/2-p}-q/\sqrt{2(1-2p)})).
\label{eq:defbigI21str}
\end{eqnarray}
Then
\begin{equation}
I^{(2)}=2(I_{1}^{(2)}+I_{21}^{(1)} +I_{22}^{(2)}).\label{eq:fbigI2str}
\end{equation}

\underline{\emph{3. Determining $I^{(3)}$}}

\noindent Set
\begin{eqnarray}
I_{22}^{(3)}=C_1/2/\sqrt{2}(\mbox{erfc}(-q/\sqrt{2(1-2p)})-\mbox{erfc}(\nu_{str}^{(1)}\sqrt{1/2-p}-q/\sqrt{2(1-2p)})).
\label{eq:defbigI322str}
\end{eqnarray}
Then
\begin{equation}
I^{(3)}=2(I_{1}^{(2)}+I_{21}^{(1)} +I_{22}^{(3)}).\label{eq:fbigI3str}
\end{equation}

We summarize the above results related to the strong threshold ($\beta_{str}$) in the following theorem.

\begin{theorem}(Strong threshold - lifted lower bound)
Let $A$ be an $m\times n$ measurement matrix in (\ref{eq:system})
with i.i.d. standard normal components. Let
the unknown $\x$ in (\ref{eq:system}) be $k$-sparse.
Let $k,m,n$ be large
and let $\alpha=\frac{m}{n}$ and $\betastr=\frac{k}{n}$ be constants
independent of $m$ and $n$. Let $\mbox{erf}$ be the standard error function associated with zero-mean unit variance Gaussian random variable and let $\mbox{erfc}=1-\mbox{erf}$.
Let
\begin{equation}
\widehat{\gamma_{sph}^{(s)}}=\frac{2c_3^{(s)}-\sqrt{4(c_3^{(s)})^2+16\alpha}}{8},\label{eq:gamasphthmstr}
\end{equation}
and
\begin{equation}
I_{sph}(c_3^{(s)},\alpha)=
\left ( \widehat{\gamma_{sph}^{(s)}}-\frac{\alpha}{2c_3^{(s)}}\log(1-\frac{c_3^{(s)}}{2\widehat{\gamma_{sph}^{(s)}}}\right ).\label{eq:Isphthmstr}
\end{equation}
Further, let $I$ be defined through (\ref{eq:defbigIstr1})-(\ref{eq:fbigI3str}) (or alternatively through (\ref{eq:deftisstr}) and (\ref{eq:defbigIstr})) and
\begin{equation}
I_{str}(c_3^{(s)},\beta_{str})=\min_{\gamma_{str}^{(s)}\geq c_3^{(s)}/2,\nu_{str}^{(1)}
,\nu_{str}^{(2,s)}\geq 0}(\nu_{str}^{(2,s)}(2\beta_{str}-1)+ \gamma_{str}^{(s)}+\frac{1}{c_3^{(s)}}\log(I)).\label{eq:Istrthmstr}
\end{equation}
If $\alpha$ and $\betastr$ are such that
\begin{equation}
\min_{c_3^{(s)}\geq 0}\left (-\frac{c_3^{(s)}}{2}+I_{str}(c_3^{(s)},\beta_{str})+I_{sph}(c_3^{(s)},\alpha)\right )<0,\label{eq:strcondthmstr}
\end{equation}
then the solution of (\ref{eq:l1}) is with overwhelming
probability the $k$-sparse $\x$ in (\ref{eq:system}).\label{thm:thmstrthrlift}
\end{theorem}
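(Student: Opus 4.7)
The plan is to simply certify each link in the chain of displays developed in Section \ref{sec:lbxistr}. By the framework of \cite{StojnicCSetam09} (recalled at the beginning of Section \ref{sec:strthr}), establishing the strong threshold lower bound reduces to showing that $\xi_{str}=\min_{\w\in\Sstr}\|A\w\|_2$ is bounded below by a positive constant with overwhelming probability. Since $\xi_{str}/\sqrt{n}$ concentrates around its mean by the arguments already in \cite{StojnicCSetam09}, the whole problem reduces to producing a sharp enough lower bound on $E\xi_{str}/\sqrt{n}$.

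First I would rewrite $\xi_{str}$ in its minimax form (\ref{eq:sqrtnegham2str}) and invoke Lemma \ref{lemma:negexplemmastr}, which is a direct application of Gordon's Theorem \ref{thm:Gordonneg1} (its proof is identical to Lemma \ref{lemma:negexplemma} with $\Ssec$ replaced by $\Sstr$, since the structure of the feasible set plays no role in Gordon's comparison). Taking logs with the free exponential scale $c_3=c_3^{(s)}\sqrt{n}$ and following the manipulation that produced (\ref{eq:chneg8}) in Section \ref{sec:secthr} yields the additive decomposition
\begin{equation*}
\frac{E(\min_{\w\in\Sstr}\|A\w\|_2)}{\sqrt{n}}\;\geq\;\frac{c_3^{(s)}}{2}-I_{str}(c_3^{(s)},\beta_{str})-I_{sph}(c_3^{(s)},\alpha),
\end{equation*}
valid for every $c_3^{(s)}>0$. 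The factor $I_{sph}$ is the \emph{same} sphere integral already evaluated in the sectional threshold calculation, so the identity $-\|\g\|_2=\max_{\gamma\geq 0}(-\|\g\|_2^2/(4\gamma)-\gamma)$ combined with the saddle-point exchange of \cite{SPH} immediately supplies the closed form (\ref{eq:Isphthmstr}) with the optimal parameter (\ref{eq:gamasphthmstr}).

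The main step, and the only genuinely new work, is the evaluation of $I_{str}$. I would dualize the three defining constraints of $\Sstr$ — the sphere constraint through $\gamma_{str}$, the inequality $\sum_i \b_i|\w_i|<0$ through $\nu_{str}^{(1)}$, and the combinatorial constraint $\sum_i \b_i=n-2k$ through $\nu_{str}^{(2)}$ — exactly as in the chain (\ref{eq:streq0})--(\ref{eq:streq04}). The resulting Lagrangian decouples across coordinates, so the inner optimization over $\w\in R^n$ and $\b\in\{-1,+1\}^n$ is solved pointwise in $i$; the value at coordinate $i$ collapses to the scalar $\t_i$ given by (\ref{eq:streq030}), or equivalently in the piecewise form (\ref{eq:streq03}). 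A second saddle-point exchange (in the style of \cite{SPH}) then lets me pull the min over $(\gamma_{str},\nu_{str}^{(1)},\nu_{str}^{(2)})$ outside the expectation, reducing $I_{str}$ to the scalar optimization (\ref{eq:gamaiden2str}) whose only nontrivial ingredient is the one-dimensional Gaussian moment generating function $I=Ee^{c_3^{(s)}\t_i^{(s)}}$.

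The hard part is this last integral. Because $\t_i^{(s)}$ is piecewise quadratic in $|\h_i|$, with the branch depending both on whether $|\h_i|\lessgtr\nu_{str}^{(1)}$ and on the sign of $|\h_i|\nu_{str}^{(1)}/(2\gamma_{str}^{(s)})-\nu_{str}^{(2,s)}$, the ordering of the three thresholds $\nu_{str}^{(1)}$, $\sqrt{2\gamma_{str}^{(s)}\nu_{str}^{(2,s)}}$ and $\sqrt{8\gamma_{str}^{(s)}\nu_{str}^{(2,s)}}$ determines which pieces actually appear. This yields the three-case split (\ref{eq:defbigIstr1}), and in each case careful bookkeeping reduces the integrals to standard shifted Gaussian integrals expressible through $\mbox{erfc}$, producing the formulas (\ref{eq:fbigI1str})--(\ref{eq:fbigI3str}). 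Once the explicit $I_{sph}$ and $I_{str}$ are substituted back into the decomposition and $c_3^{(s)}\geq 0$ is optimized, condition (\ref{eq:strcondthmstr}) forces $E\xi_{str}/\sqrt{n}$ to be bounded away from zero; the concentration established in \cite{StojnicCSetam09} then transfers this to $\xi_{str}$ itself with overwhelming probability, completing the argument.
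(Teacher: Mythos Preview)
Your proposal is correct and follows essentially the same route as the paper: the paper's own proof consists precisely of the chain you outline --- the minimax rewriting (\ref{eq:sqrtnegham2str}), the Gordon-based Lemma \ref{lemma:negexplemmastr}, the exponential decomposition (\ref{eq:chneg8str})--(\ref{eq:chneg9str}), the reuse of $I_{sph}$ from Section \ref{sec:secthr}, the Lagrangian dualization (\ref{eq:streq0})--(\ref{eq:streq04}) producing the coordinate-wise $\t_i$, the saddle-point swap of \cite{SPH} in (\ref{eq:gamaiden1str})--(\ref{eq:gamaiden2str}), and finally the three-case evaluation (\ref{eq:defbigIstr1})--(\ref{eq:fbigI3str}) of $I=Ee^{c_3^{(s)}\t_i^{(s)}}$. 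The paper's proof line is literally ``Follows from the above discussion,'' and your summary is a faithful recapitulation of that discussion.
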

\begin{proof}
Follows from the above discussion.
\end{proof}

The results for the strong threshold obtained from the above theorem as well as the corresponding ones from \cite{DonohoSigned,DT,StojnicCSetam09}
are presented in Figure \ref{fig:str}.
\begin{figure}[htb]
\centering
\centerline{\epsfig{figure=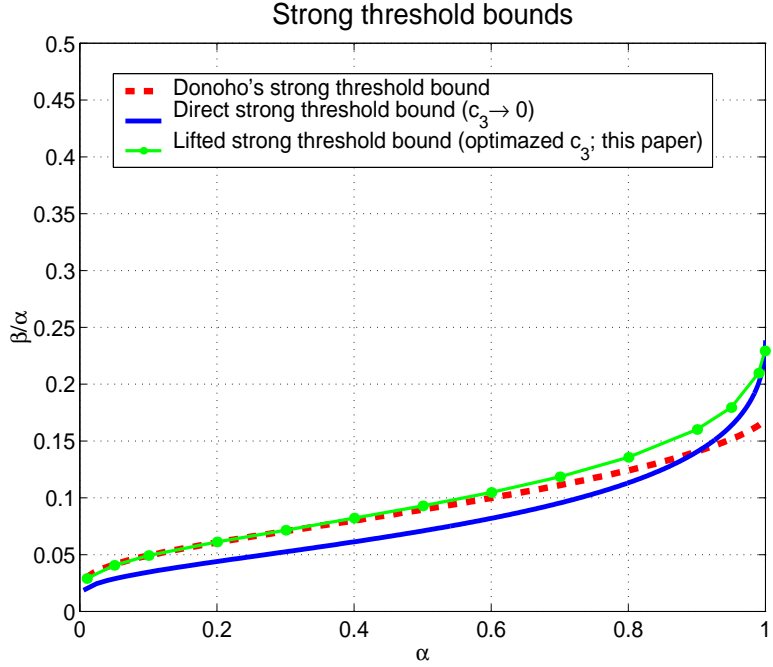,width=10.5cm,height=9cm}}
\caption{\emph{Strong} threshold, $\ell_1$-optimization}
\label{fig:str}
\end{figure}
The results substantially improve on those presented in \cite{StojnicCSetam09}. In fact they improve even on those from \cite{DonohoSigned,DT}. However, since the results are very close to the ones given in \cite{DonohoSigned,DT} we present in Tables \ref{tab:strtab1} and \ref{tab:strtab2} the concrete values of the attainable strong thresholds obtained through the mechanisms presented in \cite{DonohoSigned,DT} as well as those obtained through the mechanism presented in the above theorem. We also mention that the results presented in \cite{StojnicCSetam09} (and given in Theorem \ref{thm:thmstrthr}) can in fact be deduced from the above theorem. Namely, in the limit $c_3^{(s)}\rightarrow 0$, one from (\ref{eq:chneg9}) and (\ref{eq:defIs}) has $\max_{\w\in\Sstr}\h^T\w<\sqrt{\alpha n}$ as the limiting condition which is exactly the same condition considered in \cite{StojnicCSetam09}. For the completeness we present those results in the following corollary. (Obtaining them from the above theorem requires some work but is not that hard; instead one can just simply handle directly $\max_{\w\in\Sstr}\h^T\w$.)

\begin{corollary}(Strong threshold - lower bound)
Assume the setup of Theorem \ref{thm:thmstrthrlift}. Let $c_3^{(s)}\rightarrow 0$. Then
if $\alpha$ and $\betastr$ are such that
\begin{eqnarray}
\min_{\nu_{str}^{(1)}\geq 0}
\sqrt{\left ((1+(\nu_{str}^{(1)})^2)\mbox{erfc}(\frac{\nu_{str}^{(1)}}{\sqrt{2}})
+\frac{4\nu_{str}^{(1)}}{\sqrt{2\pi}}(2e^{-(\mbox{erfinv}(1-\betastr))^2}-\frac{1}{2}e^{-\frac{\nu_{str}^{(1)}}{2}})\right )}<\sqrt{\alpha},\nonumber\\
\label{eq:strcondcorstr}
\end{eqnarray}
then the solution of (\ref{eq:l1}) is with overwhelming
probability the $k$-sparse $\x$ in (\ref{eq:system}).\label{thm:corstrthrlift}
\end{corollary}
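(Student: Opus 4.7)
The plan is to send $c_3^{(s)}\to 0$ in Theorem \ref{thm:thmstrthrlift}, which collapses the lifted condition into the elementary Gordon bound, and then to evaluate that bound explicitly. In the $c_3^{(s)}\to 0$ limit one has $\widehat{\gamma_{sph}^{(s)}}\to -\sqrt{\alpha}/2$ and $I_{sph}(c_3^{(s)},\alpha)\to -\sqrt{\alpha}$ by the same computation used in Corollary \ref{thm:corsecthrlift}, while the expansion $\log Ee^{c_3^{(s)}\sqrt{n}X}=c_3^{(s)}\sqrt{n}\,EX+O((c_3^{(s)})^2)$ converts $-c_3^{(s)}/2+I_{str}(c_3^{(s)},\beta_{str})$ into $E\max_{\w\in\Sstr}\h^T\w/\sqrt{n}$. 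Equivalently, as the author hints, I would apply Gordon's inequality directly at the level of expectations (the $c_3\to 0$ form of Lemma \ref{lemma:negexplemmastr}) to obtain $E\min_{\w\in\Sstr}\|A\w\|_2\geq E\|\g\|_2-E\max_{\w\in\Sstr}\h^T\w$, so that the sufficient condition for the strong threshold is $E\max_{\w\in\Sstr}\h^T\w/\sqrt{n}<\sqrt{\alpha}$.

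Next, I would upper bound $E\max_{\w\in\Sstr}\h^T\w$ using the Lagrangian representation already derived in (\ref{eq:streq0})--(\ref{eq:streq04}): for any $\gamma_{str},\nu_{str}^{(1)},\nu_{str}^{(2)}\geq 0$,
$$\max_{\w\in\Sstr}\h^T\w \leq \sum_{i=1}^{n}\t_i+\nu_{str}^{(2)}(2k-n)+\gamma_{str},$$
with $\t_i$ as in (\ref{eq:streq03}). Rescaling $\gamma_{str}=\gamma_{str}^{(s)}\sqrt{n}$ and $\nu_{str}^{(2)}=\nu_{str}^{(2,s)}/\sqrt{n}$ while keeping $\nu_{str}^{(1)}$ unscaled, taking expectations and using the i.i.d.\ structure of $\h$ gives
$$\frac{E\max_{\w\in\Sstr}\h^T\w}{\sqrt{n}}\leq \min_{\gamma_{str}^{(s)},\nu_{str}^{(1)},\nu_{str}^{(2,s)}\geq 0}\bigl(\gamma_{str}^{(s)}+\nu_{str}^{(2,s)}(2\beta_{str}-1)+E[\sqrt{n}\,\t_i]\bigr),$$
where $E[\sqrt{n}\,\t_i]$ is a piecewise-explicit function of the three parameters obtained from (\ref{eq:deftisstr}).

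The crucial step is the inner minimization in $\nu_{str}^{(2,s)}$. After opening the absolute value and the max in (\ref{eq:streq03}), the function is piecewise linear in $\nu_{str}^{(2,s)}$, and its stationary condition reduces to the probability equation
$$P\!\left(|\h_i|\geq \tfrac{2\gamma_{str}^{(s)}\nu_{str}^{(2,s)}}{\nu_{str}^{(1)}}\right)=\beta_{str},$$
which pins the crossover threshold $\tfrac{2\gamma_{str}^{(s)}\nu_{str}^{(2,s)}}{\nu_{str}^{(1)}}$ to $\sqrt{2}\,\mbox{erfinv}(1-\beta_{str})$ and places us in regime $I^{(1)}$ of (\ref{eq:defbigIstr1}), namely $(\nu_{str}^{(1)})^2<2\gamma_{str}^{(s)}\nu_{str}^{(2,s)}$. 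Substituting this optimal $\nu_{str}^{(2,s)}$ back and collecting terms, the linear-in-$\nu_{str}^{(2,s)}$ contributions cancel and the remaining expression takes the form $\gamma_{str}^{(s)}+C(\nu_{str}^{(1)},\beta_{str})/(4\gamma_{str}^{(s)})$; minimizing in $\gamma_{str}^{(s)}$ yields $\sqrt{C}$, which accounts for the outer square root in (\ref{eq:strcondcorstr}).

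The final step is Gaussian bookkeeping: $C(\nu_{str}^{(1)},\beta_{str})$ breaks into $E[(|\h|-\nu_{str}^{(1)})^2\mathbf{1}_{|\h|\geq\nu_{str}^{(1)}}]+4\nu_{str}^{(1)}E[|\h|\mathbf{1}_{|\h|\geq\sqrt{2}\mbox{erfinv}(1-\beta_{str})}]$; the first truncated moment equals $(1+(\nu_{str}^{(1)})^2)\mbox{erfc}(\nu_{str}^{(1)}/\sqrt{2})-\frac{2\nu_{str}^{(1)}}{\sqrt{2\pi}}e^{-(\nu_{str}^{(1)})^2/2}$ and the second equals $\frac{2}{\sqrt{2\pi}}e^{-(\mbox{erfinv}(1-\beta_{str}))^2}$, together reproducing exactly the quantity under the square root in (\ref{eq:strcondcorstr}). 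The main obstacle is the piecewise bookkeeping in the $\nu_{str}^{(2,s)}$-minimization — in particular, identifying the active regime among the three of (\ref{eq:defbigIstr1}) and verifying that the stationary point falls in $I^{(1)}$ over the relevant range $\nu_{str}^{(1)}<\sqrt{2}\,\mbox{erfinv}(1-\beta_{str})$; once this is settled, the rest reduces to standard truncated-Gaussian identities.
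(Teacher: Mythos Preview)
Your proposal is correct and follows essentially the same route as the paper's proof: take $c_3^{(s)}\to 0$ so that the lifted condition collapses to $E\max_{\w\in\Sstr}\h^T\w/\sqrt{n}<\sqrt{\alpha}$, use the Lagrangian representation (\ref{eq:streq0})--(\ref{eq:streq04}), eliminate $\nu_{str}^{(2,s)}$ and then $\gamma_{str}^{(s)}$, and evaluate the remaining truncated Gaussian integrals. The only cosmetic difference is that the paper simply \emph{plugs in} the specific value $\nu_{str}^{(2,s)}=c_\nu\nu_{str}^{(1)}/(2\gamma_{str}^{(s)})$ with $c_\nu=\sqrt{2}\,\mbox{erfinv}(1-\beta_{str})$, whereas you \emph{derive} that same value from the stationarity condition $P(|\h_i|\geq 2\gamma_{str}^{(s)}\nu_{str}^{(2,s)}/\nu_{str}^{(1)})=\beta_{str}$ and then note it forces the $I^{(1)}$ regime via $\nu_{str}^{(1)}<c_\nu$; the paper records the same range restriction as $c_\nu\geq\nu_{str}^{(1)}\geq 0$ without linking it explicitly to the three-regime split.
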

\begin{proof}
Theorem \ref{thm:thmstrthrlift} holds for any $c_3^{(s)}\geq 0$. The above corollary instead of looking for the best possible $c_3^{(s)}$ in Theorem \ref{thm:thmstrthrlift} assumes a simple $c_3^{(s)}\rightarrow 0$ scenario.

Alternatively, one can look at $\frac{E\max_{\w\in\Sstr}\h^T\w}{\sqrt{n}}$ and following the methodology presented in (\ref{eq:seceq1}), (\ref{eq:streq01}), and  (\ref{eq:streq02}) (and originally in \cite{StojnicCSetam09}) obtain for two scalars $c_{\nu}\geq \nu_{str}^{(1)}\geq 0$
\begin{equation}
\frac{E\max_{\w\in\Sstr}\h^T\w}{\sqrt{n}}\leq \sqrt{E_{|\h_i|\geq c_{\nu}}(|\h_i|+\nu_{str}^{(1)})^2+ E_{\nu_{str}^{(1)}\leq |\h_i|< c_{\nu}}(|\h_i|-\nu_{str}^{(1)})^2},\label{eq:corrstr0}
\end{equation}
where $c_{\nu}$ is obtained from $\beta_{str}=\int_{|\h_i|\geq c_{\nu}}\frac{e^{-\frac{\h_i^2}{2}}d\h_i}{\sqrt{2\pi}}$. Clearly, $c_{\nu}=\sqrt{2}\mbox{erfinv}(1-\beta_{str})$.
Optimizing (tightening) over $c_{\nu}\geq \nu_{str}^{(1)}\geq 0$ (and using all the concentrating machinery of \cite{StojnicCSetam09}) one can write
\begin{equation}
\frac{E\max_{\w\in\Sstr}\h^T\w}{\sqrt{n}}\doteq\min_{c_{\nu}\geq \nu_{str}^{(1)}\geq 0}\sqrt{\left (\int_{|\h_i|\geq c_{\nu}}(|\h_i|+\nu_{str}^{(1)})^2\frac{e^{-\frac{\h_i^2}{2}}d\h_i}{\sqrt{2\pi}}
+\int_{\nu_{str}^{(1)}\leq |\h_i|< c_{\nu}}(|\h_i|-\nu_{str}^{(1)})^2\frac{e^{-\frac{\h_i^2}{2}}d\h_i}{\sqrt{2\pi}}\right )}.\label{eq:corrstr1}
\end{equation}

On the other hand, when $c_3^{(s)}\rightarrow 0$ a combination of (\ref{eq:Isphcorsec}), (\ref{eq:streq030}), (\ref{eq:gamaiden2str}), and (\ref{eq:strcondthmstr}) gives
\begin{equation}
\min_{\gamma_{str}^{(s)},\nu_{str}^{(1)}
,\nu_{str}^{(2,s)}\geq 0}(\nu_{str}^{(2,s)}(2\beta_{str}-1)+ \gamma_{str}^{(s)}+E\t_i^{(s)})\leq \sqrt{\alpha},\label{eq:gamaiden2strcorpr}
\end{equation}
where
\begin{equation}
\t_i^{(s)}=\max\left (\frac{(|\h_i|+\nu_{str}^{(1)})^2}{4\gamma_{str}^{(s)}}-\nu_{str}^{(2,s)},
\frac{(\max(|\h_i|-\nu_{str}^{(1)},0))^2}{4\gamma_{str}^{(s)}}+\nu_{str}^{(2,s)}\right ).\label{eq:streq030corpr}
\end{equation}
A particular choice $\nu_{str}^{(2,s)}=\frac{2c_{\nu}\nu_{str}^{(1)}}{4\gamma_{str}^{(s)}}$ (where as above, $c_{\nu}=\sqrt{2}\mbox{erfinv}(1-\beta_{str})$) transforms (\ref{eq:gamaiden2strcorpr}) to
\begin{equation}
\min_{\gamma_{str}^{(s)},\nu_{str}^{(1)}\geq 0}
\left (\gamma_{str}^{(s)}+\frac{\left (\int_{|\h_i|\geq c_{\nu}}(|\h_i|+\nu_{str}^{(1)})^2\frac{e^{-\frac{\h_i^2}{2}}d\h_i}{\sqrt{2\pi}}
+\int_{\nu_{str}^{(1)}\leq |\h_i|< c_{\nu}}(|\h_i|-\nu_{str}^{(1)})^2\frac{e^{-\frac{\h_i^2}{2}}d\h_i}{\sqrt{2\pi}}\right )}{4\gamma_{str}^{(s)}}\right )\leq \sqrt{\alpha},\label{eq:gamaiden2strcorpr1}
\end{equation}
which after optimization over $\gamma_{str}^{(s)}$ becomes
\begin{equation}
\min_{\nu_{str}^{(1)}\geq 0}
\sqrt{\left (\int_{|\h_i|\geq c_{\nu}}(|\h_i|+\nu_{str}^{(1)})^2\frac{e^{-\frac{\h_i^2}{2}}d\h_i}{\sqrt{2\pi}}
+\int_{\nu_{str}^{(1)}\leq |\h_i|< c_{\nu}}(|\h_i|-\nu_{str}^{(1)})^2\frac{e^{-\frac{\h_i^2}{2}}d\h_i}{\sqrt{2\pi}}\right )}\leq \sqrt{\alpha}.\label{eq:gamaiden2strcorpr2}
\end{equation}
This is exactly what one gets from (\ref{eq:corrstr1}). Solving the integrals from (\ref{eq:corrstr1}) or (\ref{eq:gamaiden2strcorpr2}) then leads to the condition given in the above corollary. At the same time this also shows that in (\ref{eq:streq04}) one, for all statistical purposes, indeed has an equality (of course much more is true but the proof of that would just make our presentation more cumbersome and those statements are not necessary in a statistical context that we consider here).
\end{proof}

\noindent \textbf{Remark:} Solving (\ref{eq:strcondcorstr}) over $\nu_{str}^{(1)}$ and juggling a bit one can arrive at the results presented in Theorem \ref{thm:thmstrthr}. Of course, the results of Theorem \ref{thm:thmstrthr} were presented in more detail in \cite{StojnicCSetam09} where our main concern was a thorough studying of the underlying optimization problem.

The results obtained from the previous corollary (and obviously the results from Theorem \ref{thm:thmstrthr}) are those presented in Figure \ref{fig:str} and Tables \ref{tab:strtab1} and \ref{tab:strtab2} that we refer to as $c_3\rightarrow 0$ scenario or direct strong threshold bounds (also we remove superscript $(s)$ from $c_3^{(s)}$ to make the figure and tables easier to view). As can be see from Figure \ref{fig:str} and Tables \ref{tab:strtab1} and \ref{tab:strtab2}, a substantial improvement over the strong threshold results from \cite{StojnicCSetam09} can be achieved through the above introduced mechanism.

While any improvement in strong thresholds is in our opinion quite a feat, one can also observe that ultimately the results we presented above
don't make a big improvement over those obtained in \cite{DonohoUnsigned,DT}. That can be because our methods are not strong enough for such an improvement or simply because a bigger improvement may not be possible (in other words the results obtained in \cite{DonohoUnsigned,DT} may very well already be fairly close to the optimal ones). As for the limits of the developed methods, we do want to emphasize that (as in the case of the sectional thresholds) we did solve the numerical optimizations that appear in Theorem \ref{thm:thmstrthrlift} only on a local optimum level and obviously only with a finite precision. We do not know if a substantial change in the presented results would occur had we solved it on a global optimum level (we again recall that finding local optima is, of course, certainly enough to establish attainable values of the strong thresholds). As for how far away from the true strong thresholds are the results presented in the tables and Figure \ref{fig:str}, we actually believe that they are pretty close.

\begin{table}
\caption{Strong threshold bounds -- low $\alpha\leq 0.5$ regime}\vspace{.1in}
\hspace{-0in}\centering
\begin{tabular}{||c|c|c|c|c|c|c|c||}\hline\hline
 $\alpha$  & $0.01$ & $0.05$ & $0.1$ & $0.2$ & $0.3$ & $0.4$ & $0.5$ \\ \hline\hline
 $\beta_{str}$ (Donoho \cite{DonohoPol,DonohoUnsigned})  & $ 0.00031 $ & $  0.00205  $ & $ 0.00488$ & $
0.01250$ & $ 0.02109  $ & $ 0.03192
$ & $0.04471$  \\ \hline
 $\beta_{str}$ (Theorem \ref{thm:thmstrthrlift})
 & $0.00030$ & $0.00206 $ & $0.00492$ & $0.01225$ & $0.02154 $ & $0.03285 $ & $0.04645$\\ \hline\hline
\end{tabular}
\label{tab:strtab1}
\end{table}

\begin{table}
\caption{Strong threshold bounds -- high $\alpha> 0.5$ regime}\vspace{.1in}
\hspace{-0in}\centering
\begin{tabular}{||c|c|c|c|c|c|c|c|c||}\hline\hline
 $\alpha$  & $0.6$ & $0.7$ & $0.8$ & $0.9$ & $0.95$ & $0.99$ & $0.999$ & $0.9999$  \\ \hline\hline
 $\beta_{str}$ (Donoho \cite{DonohoPol,DonohoUnsigned}) & $ 0.05977 $ & $  0.07760$ & $
0.1000 $ & $  0.1264 $ & $0.1438$ & $
0.1620$ & $ 0.1677$  & $0.1685$ \\ \hline
 $\beta_{str}$ (Theorem \ref{thm:thmstrthrlift})
  & $0.06287$ & $
0.08298$ & $0.1085$ & $0.1443$ & $0.1710$ & $0.2080$ & $0.2291$ & $0.2359$\\ \hline\hline
\end{tabular}
\label{tab:strtab2}
\end{table}

\section{Non-negative $\x$}
\label{sec:backnon}

In this section we will look at a subcase of the general under-determined linear system with sparse solutions. Namely, we will restrict our attention to the case when it is a priori known that the unknown $\x$ in (\ref{eq:system}) has only non-negative components. One is then interested in finding a $k$-sparse $\x$ such
that
\begin{equation}
A\x=\y, \x_i\geq 0,1\leq i\leq n. \label{eq:systemnon}
\end{equation}
where as earlier, $A$ is an $m\times n$ ($m<n$) matrix and $\y$ is
an $m\times 1$ vector (here we again under $k$-sparse vector we assume a vector that has at most $k$ nonzero
components). Of course, the assumption will be that such an $\x$ exists.

One can then employ a simple adaptation of the basic $\ell_1$-optimization algorithm  from (\ref{eq:l1}) to find $\x$ in
(\ref{eq:systemnon})
\begin{eqnarray}
\mbox{min} & & \|\x\|_{1}\nonumber \\
\mbox{subject to} & & A\x=\y\nonumber \\
& & \x_i\geq 0,1\leq i\leq n.\label{eq:l1non}
\end{eqnarray}
Of course, this case has already been thoroughly studied in the literature primarily through the work of Donoho and Tanner \cite{DonohoSigned,DT}. Namely,
in \cite{DonohoSigned,DT} Donoho  and Tanner consider the polytope obtained by
projecting the regular $n$-dimensional simplex $T_p^n$ by $A$. They then established that
the solution of (\ref{eq:l1non}) will be the $k$-sparse nonnegative solution of
(\ref{eq:systemnon}) (or (\ref{eq:system})) if and only if
$AT_p^n$ is $k$-neighborly
(for the definitions of neighborliness, details of Donoho and Tanner's approach, and related results the interested reader can consult now already classic references \cite{DonohoUnsigned,DonohoPol,DonohoSigned,DT}). In a nutshell, using the results
of \cite{PMM,AS,BorockyHenk,Ruben,VS}, it is shown in
\cite{DT,DonohoSigned}, that if $A$ is a random $m\times n$
ortho-projector matrix then with overwhelming probability $AT_p^n$ is $k$-neighborly (as earlier, under overwhelming probability we in this paper assume
a probability that is no more than a number exponentially decaying in $n$ away from $1$). Miraculously, \cite{DT,DonohoSigned} provided a precise characterization of $m$ and $k$ (in a large dimensional context) for which this happens.

As was the case when we discussed the general vectors $\x$, it should be noted that one usually considers success of
(\ref{eq:l1non}) in recovering \emph{any} given a priori known to have only non-negative components $k$-sparse $\x$ in (\ref{eq:systemnon}). It is also of interest to consider success of
(\ref{eq:l1non}) in recovering
\emph{almost any} such given $\x$ in (\ref{eq:systemnon}). We below make a distinction between these
cases and recall on some of the definitions from
\cite{DonohoPol,DT,DTciss,DTjams2010,StojnicCSetam09,StojnicICASSP09}.

Clearly, for any given constant $\alpha\leq 1$ there is a maximum
allowable value of $\beta$ such that for \emph{any} given a priori known to have only non-negative components $k$-sparse $\x$ in (\ref{eq:system}) the solution of (\ref{eq:l1})
is with overwhelming probability exactly that given $k$-sparse $\x$. We will refer to this maximum allowable value of
$\beta$ as the \emph{strong threshold} (see
\cite{DT,DonohoSigned}) and will denote it as $\beta_{str}^{+}$. Similarly, for any given constant
$\alpha\leq 1$ and \emph{any} given a priori known to have only non-negative components $\x$ with a given fixed location of non-zero components
there will be a maximum allowable value of $\beta$ such that
(\ref{eq:l1non}) finds that given $\x$ in (\ref{eq:systemnon}) with overwhelming
probability. We will refer to this maximum allowable value of
$\beta$ as the \emph{weak threshold} and will denote it by $\beta_{w}^{+}$ (see, e.g. \cite{StojnicICASSP09,StojnicCSetam09}).

When viewed within this frame the results of \cite{DonohoPol,DonohoUnsigned} established the exact values of $\beta_w^{+}$ and provided lower bounds on $\beta_{str}^{+}$.

In a series of our own work (see, e.g. \cite{StojnicICASSP09,StojnicCSetam09,StojnicUpper10}) we then created an alternative probabilistic approach which was capable of providing the precise characterization of $\beta_w^{+}$ as well and thereby of reestablishing the results of Donoho and Tanner \cite{DonohoSigned,DT} through a purely probabilistic approach. We also mentioned in \cite{StojnicCSetam09} that it would not be that hard to establish analogous lower bounds on $\beta_{str}^{+}$ (however, since those were not as successful as the counterparts that we obtained for general vectors $\x$ we chose not to state them explicitly in \cite{StojnicCSetam09}). Below, we will present both of these results. As was the case when we studied general $\x$ in earlier sections, fairly often we will use these theorems as a benchmark for the results that we will present in this paper.

The first of the theorems relates to the weak threshold $\beta_w^{+}$.

\begin{theorem}(Weak threshold ($\x$ a priori known to be non-negative); -- exact \cite{StojnicCSetam09,StojnicUpper10})
Let $A$ be an $m\times n$ matrix in (\ref{eq:system})
with i.i.d. standard normal components. Let
the unknown $\x$ in (\ref{eq:systemnon}) be $k$-sparse and let it be a priori known that its nonzero components are positive. Further, let the location of the nonzero elements of $\x$ be arbitrarily chosen but fixed.
Let $k,m,n$ be large
and let $\alpha=\frac{m}{n}$ and $\beta_w^+=\frac{k}{n}$ be constants
independent of $m$ and $n$. Let $\erfinv$ be the inverse of the standard error function associated with zero-mean unit variance Gaussian random variable.  Further,
let all $\epsilon$'s below be arbitrarily small constants.
\begin{enumerate}
\item Let $\htheta_w^+$, ($\beta_w^+\leq \htheta_w^+\leq 1$) be the solution of
\begin{equation}
(1-\epsilon_1^{(c)})(1-\betawnon)\frac{\sqrt{\frac{1}{2\pi}}e^{-(\erfinv(2\frac{1-\thetawnon}{1-\betawnon}-1))^2}}{\thetawnon}-\sqrt{2}\erfinv ((2\frac{(1+\epsilon_1^{(c)})(1-\thetawnon)}{1-\betawnon}-1))=0.\label{eq:thmweaknontheta1}
\end{equation}
If $\alpha$ and $\beta_w^+$ further satisfy
\begin{equation}
\hspace{-.7in}\alpha>\frac{1-\betawnon}{\sqrt{2\pi}}\left (\frac{\sqrt{2(\erfinv(2\frac{1-\hthetawnon}{1-\betawnon}-1))^2}}{e^{(\erfinv(2\frac{1-\hthetawnon}{1-\betawnon}-1))^2}}\right )+\hthetawnon
-\frac{\left ((1-\betawnon)\sqrt{\frac{1}{2\pi}}e^{-(\erfinv(2\frac{1-\hthetawnon}{1-\betawnon}-1))^2}\right )^2}{\hthetawnon}\label{eq:thmweakalphanon}
\end{equation}
then with overwhelming probability the solution of (\ref{eq:l1non}) is the nonnegative $k$-sparse $\x$ from (\ref{eq:systemnon}).
\item Let $\htheta_w^+$, ($\beta_w^+\leq \htheta_w^+\leq 1$) be the solution of
\begin{equation}
(1+\epsilon_2^{(c)})(1-\betawnon)\frac{\sqrt{\frac{1}{2\pi}}e^{-(\erfinv(2\frac{1-\thetawnon}{1-\betawnon}-1))^2}}{\thetawnon}-\sqrt{2}\erfinv ((2\frac{(1-\epsilon_2^{(c)})(1-\thetawnon)}{1-\betawnon}-1))=0.\label{eq:thmweaknontheta2}
\end{equation}
If on the other hand $\alpha$ and $\beta_w^+$ satisfy
\begin{multline}
\hspace{-1in}\alpha<\frac{1}{(1+\epsilon_{1}^{(m)})^2}\left ((1-\epsilon_{1}^{(g)})(\htheta_w^++\frac{(1-\beta_w^+)}{\sqrt{2\pi}} \frac{\sqrt{2(\erfinv(2\frac{1-\htheta_w^+}{1-\beta_w^+}-1))^2}}{e^{(\erfinv(2\frac{1-\htheta_w^+}{1-\beta_w^+}-1))^2}})
-\frac{\left ((1-\beta_w^+)\sqrt{\frac{1}{2\pi}}e^{-(\erfinv(2\frac{1-\hat{\theta}_w^+}{1-\beta_w^+}-1))^2}\right )^2}{\hat{\theta}_w^+(1+\epsilon_{3}^{(g)})^{-2}}\right )\label{eq:thmweaknonalpha2}
\end{multline}
then with overwhelming probability there will be a nonnegative $k$-sparse $\x$ (from a set of $\x$'s with fixed locations of nonzero components) that satisfies (\ref{eq:systemnon}) and is \textbf{not} the solution of (\ref{eq:l1non}).
\end{enumerate}
\label{thm:thmweaknonthr}
\end{theorem}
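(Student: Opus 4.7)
The plan is to run the two-sided probabilistic program used for the weak threshold in the signed case (Theorem \ref{thm:thmweakthr}), with only the feasible null-space region changed to reflect the non-negativity prior. The first step is the optimality characterization for (\ref{eq:l1non}): a non-negative $k$-sparse $\x$ with support $K$ is the unique minimizer iff for every nonzero $\w\in\mathrm{null}(A)$ with $\w_i\geq 0$ on $K^c$ one has $\sum_{i=1}^n\w_i>0$. This is structurally simpler than the signed condition $\sum_{i\in K}\w_i\,\mathrm{sgn}(\x_i)+\sum_{i\in K^c}|\w_i|\leq 0$: the hard non-negativity on $K^c$ replaces the $|\cdot|$, and the magnitudes of $\x$ on $K$ drop out by a scaling argument, leaving $\w_i$ unconstrained in sign on $K$. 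By rotational invariance of $A$, I may take $K=\{1,\dots,k\}$ and define
\begin{equation*}
\Swnon=\Big\{\w\in S^{n-1}\,:\,\w_i\geq 0,\ k+1\leq i\leq n,\ \sum_{i=1}^n\w_i\leq 0\Big\},
\end{equation*}
so that successful recovery of $\x$ is equivalent to $\xi_w^+:=\min_{\w\in\Swnon}\|A\w\|_2>0$ with overwhelming probability.

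For the achievability half (first part) I would apply Theorem \ref{thm:Gordonneg1} to the minimax form $\xi_w^+=\min_{\w\in\Swnon}\max_{\|\y\|_2=1}\y^T A\w$, exactly as in Lemma \ref{lemma:negexplemma} of the sectional-threshold section, reducing the question to estimating $E\max_{\w\in\Swnon}\h^T\w$ and $-E\|\g\|_2$ for independent standard Gaussian $\h$ and $\g$. Introducing a multiplier $\nu_w^{(1)}\geq 0$ for $\sum_i\w_i\leq 0$ and $\gamma_w\geq 0$ for the sphere and dualizing, the inner maximization over $\w$ decouples coordinate-wise: indices in $K$ contribute $(\h_i-\nu_w^{(1)})^2/(4\gamma_w)$ because $\w_i$ is unconstrained in sign, while indices in $K^c$ contribute $\max(\h_i-\nu_w^{(1)},0)^2/(4\gamma_w)$ because of non-negativity. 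This one-sided truncation on $n-k$ coordinates is precisely why the quantile appearing in (\ref{eq:thmweaknontheta1}) is $\erfinv(2(1-\theta)/(1-\beta)-1)$: for standard normal $\h_i$ one has $\Pr(\h_i>t)=(1-\mbox{erf}(t/\sqrt{2}))/2$, so the one-sided tail at level $(1-\theta)/(1-\beta)$ gives $t=\sqrt{2}\,\erfinv(1-2(1-\theta)/(1-\beta))$, whereas the signed-case two-sided tail of $|\h_i|$ produces $\erfinv((1-\theta)/(1-\beta))$. After saddle-point optimization in $\gamma_w$ and $\nu_w^{(1)}$ in the large-$n$ limit and identification of $\htheta_w^+$ as the asymptotic fraction of inactive coordinates on $K^c$, the sufficient condition (\ref{eq:thmweakalphanon}) falls out.

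For the converse half I would invoke the upper-bound version of the same Gordon machinery from \cite{StojnicUpper10,StojnicEquiv10}, combined with the lower-bound analogue of the saddle-point calculation above, to obtain a matching asymptotic upper bound on the threshold. Gaussian Lipschitz concentration of $\|A\w\|_2$ and of $\max_{\w\in\Swnon}\h^T\w$ around their means closes the argument, and the constants $\epsilon_1^{(c)},\epsilon_2^{(c)},\epsilon_1^{(m)},\epsilon_1^{(g)},\epsilon_3^{(g)}$ in (\ref{eq:thmweaknontheta1})--(\ref{eq:thmweaknonalpha2}) absorb the finite-$n$ slack between the two Gordon bounds and between the prelimit optimization and its saddle-point value in the style of \cite{SPH}.

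The main obstacle will be the converse: the min-max Gordon route for achievability is by now routine, but the matching max-min bound requires verifying that the coordinate-wise dualization remains tight when $\Swnon$ is a truncated, asymmetric spherical cap rather than the sign-symmetric body that appears in the signed weak-threshold analysis. In particular, checking that $\htheta_w^+$ defined by (\ref{eq:thmweaknontheta1}) is an interior saddle point in $(\betawnon,1)$ reduces to a monotonicity analysis of the left-hand side of (\ref{eq:thmweaknontheta1}) in $\theta$, parallel to the signed calculation in \cite{StojnicUpper10}. Once those two pieces are in place, the rest is the same concentration-plus-saddle-point machinery already used for the signed weak threshold.
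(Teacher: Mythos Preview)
Your proposal is correct and aligns with the methodology the paper invokes. The paper's own ``proof'' of this theorem is purely a citation: the first part is attributed to \cite{StojnicCSetam09} and the second to \cite{StojnicUpper10} (with an alternative in \cite{StojnicEquiv10}); no argument is reproduced here. What you have written is an accurate outline of how those cited works proceed: the null-space optimality characterization for (\ref{eq:l1non}), the reduction via Gordon's comparison (Theorem~\ref{thm:Gordonneg1}) to $E\max_{\w\in\Swnon}\h^T\w$ versus $E\|\g\|_2$, the Lagrangian decoupling that produces the unconstrained term $(\h_i-\nu)^2/(4\gamma)$ on $K$ and the one-sided soft threshold $\max(\h_i-\nu,0)^2/(4\gamma)$ on $K^c$, and the identification of the one-sided Gaussian quantile $\sqrt{2}\,\erfinv(2(1-\theta)/(1-\beta)-1)$ in place of the two-sided one. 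Your remark that the converse relies on the reverse Gordon bound of \cite{StojnicUpper10} plus Lipschitz concentration, with the $\epsilon$'s absorbing the prelimit slack, is also exactly how the cited paper handles it. One minor point of phrasing: the optimality condition you state is really the uniform condition over all nonnegative $\x$ supported on $K$ (which is what the weak threshold requires), not the condition for a single fixed $\x$; your ``scaling argument'' comment makes clear you understand this, but it is worth stating precisely.
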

\begin{proof}
The first part was established in \cite{StojnicCSetam09}. The second part was established in \cite{StojnicUpper10}. An alternative proof was also presented in \cite{StojnicEquiv10}.
\end{proof}

\noindent As was the case with the weak thresholds for general vectors $\x$, the previous theorem insists on precision and involves ``epsilon" type of characterization. However, one can again do what we, in Section \ref{sec:back}, referred to as the ``deepsilonification" and obtain a way more convenient characterization. After removing all $\epsilon$'s one in a more informal language then has.

\noindent Assume the setup of the above theorem. Let $\alpha_w^+$ and $\beta_w^+$ satisfy the following:

\noindent \underline{\underline{\textbf{Fundamental characterization of the $\ell_1$ performance ($\x$ in (\ref{eq:systemnon}) a priori known to be nonnegative):}}}

\begin{center}
\shadowbox{$
(1-\beta_w^+)\frac{\sqrt{\frac{1}{2\pi}}e^{-(\erfinv(2\frac{1-\alpha_w^+}{1-\beta_w^+}-1))^2}}{\alpha_w^+}-\sqrt{2}\erfinv (2\frac{1-\alpha_w^+}{1-\beta_w^+}-1)=0.
$}
-\vspace{-.5in}\begin{equation}
\label{eq:thmweaknontheta2}
\end{equation}
\end{center}

Then:
\begin{enumerate}
\item If $\alpha>\alpha_w^+$ then with overwhelming probability the solution of (\ref{eq:l1non}) is the a priori known to be nonnegative $k$-sparse $\x$ from (\ref{eq:systemnon}).
\item If $\alpha<\alpha_w^+$ then with overwhelming probability there will be an a priori known to be nonnegative $k$-sparse $\x$ (from a set of $\x$'s with fixed locations of nonzero components) that satisfies (\ref{eq:systemnon}) and is \textbf{not} the solution of (\ref{eq:l1non}).
    \end{enumerate}

The following theorem summarizes the results related to the strong thresholds ($\beta_{str}^{+}$).

\begin{theorem}(Strong threshold ($\x$ a priori known to be non-negative) -- lower bound)
Let $A$ be an $m\times n$ measurement matrix in (\ref{eq:system})
with i.i.d. standard normal components. Let
the unknown $\x$ in (\ref{eq:systemnon}) be $k$-sparse and let it be a priori known that its nonzero components are positive. Let $k,m,n$ be large
and let $\alpha^+=\frac{m}{n}$ and $\beta_{str}^{+}=\frac{k}{n}$ be constants
independent of $m$ and $n$. Let $\erfinv$ be the inverse of the standard error function associated with zero-mean unit variance Gaussian random variable. Further, let $\epsilon^+$ be an arbitrarily small positive constant and let $\htheta_s^{+}$ ($0\leq \htheta_s^{+}\leq 1-\beta_{str}^{+}$) be the solution of
\begin{equation}
(1-\epsilon^+)\sqrt{\frac{1}{2\pi}}\frac{e^{-(\mbox{erfinv}(2(1-\htheta_{s}^{+})-1))^2}
-e^{-(\mbox{erfinv}(2(1-\beta_{str}^{+})-1))^2}}{\htheta_s^{+}+\beta_{str}^{+}}-\sqrt{2}\mbox{erfinv}(2((1+\epsilon^+)(1-\htheta_s^{+}))-1)=0.\label{eq:thmstrthetanon}
\end{equation}
Also, set
\begin{eqnarray}
S_{1}^{+} & = & 1/2/\sqrt{2\pi}\left (2\sqrt{2\pi}\htheta_s^{+}+\frac{2\sqrt{2}(\mbox{erfinv}(2(1-\htheta_{s}^{+})-1))}
{e^{(\mbox{erfinv}(2(1-\htheta_{s}^{+})-1))^2}}\right )\nonumber \\
S_{2}^{+} & = & 1/2/\sqrt{2\pi}\left (2\sqrt{2\pi}\beta_{str}^{+}+\frac{2\sqrt{2}(\mbox{erfinv}(2(1-\beta_{str}^{+})-1))}
{e^{(\mbox{erfinv}(2(1-\beta_{str}^{+})-1))^2}}\right )\end{eqnarray}
If $\alpha^+$ and $\beta_{str}^{+}$ further satisfy
\begin{equation}
\alpha^{+}>\left (S_1^{+}+S_2^{+}-\frac{(\sqrt{\frac{1}{2\pi}}e^{-(\mbox{erfinv}(2(1-\htheta_{s}^{+})-1))^2}
-\sqrt{\frac{1}{2\pi}}e^{-(\mbox{erfinv}(2(1-\beta_{str}^{+})-1))^2})^2}{\htheta_s^{+}+\beta_{str}^{+}}\right )\label{eq:thmstralphanon}
\end{equation}
then with overwhelming probability the solution of (\ref{eq:l1non}) is the $k$-sparse $\x$ from (\ref{eq:systemnon}).\label{thm:thmstrthrnon}
\end{theorem}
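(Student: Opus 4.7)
The plan is to follow verbatim the methodology that produced Theorem \ref{thm:thmstrthr} and Corollary \ref{thm:corstrthrlift}, adapted to the non-negative recovery setting. First I would identify the relevant subset of the unit sphere: the null-space characterization for (\ref{eq:l1non}) implies that recovery of a non-negative $k$-sparse $\x$ fails iff there is a non-zero $\w$ in the null space of $A$ with at most $k$ negative entries and $\1^T\w\leq 0$. Accordingly one defines
\begin{equation*}
\Sstrnon=\{\w\in S^{n-1}\,:\,|\{i:\w_i<0\}|\leq k,\ \1^T\w\leq 0\},
\end{equation*}
and reduces the strong-threshold bound to showing that $\min_{\w\in\Sstrnon}\|A\w\|_2>0$ with overwhelming probability.

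Next I would invoke the Gordon comparison (Theorem \ref{thm:Gordonneg1}) with $\Sstrnon$ in place of $\Sstr$, producing the exact analog of Lemma \ref{lemma:negexplemmastr}. Running the chain (\ref{eq:chneg8str})--(\ref{eq:defIsstr}) with $c_3^{(s)}\to 0$ (exactly the route that produced Corollary \ref{thm:corstrthrlift}), the problem reduces to establishing
\begin{equation*}
\frac{E\max_{\w\in\Sstrnon}\h^T\w}{\sqrt{n}}<\sqrt{\alpha^+}
\end{equation*}
in the large-$n$ limit, with overwhelming-probability concentration inherited from the machinery of \cite{StojnicCSetam09}.

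The core computation is the evaluation of this expected maximum. Parametrizing $\w$ by the index set $K$ of size $k$ on which it may be negative, and introducing Lagrange multipliers $\gamma_{str}^+$ for the unit-norm constraint and $\nu^+\geq 0$ for $\1^T\w\leq 0$, the inner maximization over $\w$ yields
\begin{equation*}
\gamma_{str}^+ + \frac{1}{4\gamma_{str}^+}\left(\sum_{i\in K}(\h_i-\nu^+)^2+\sum_{i\in K^c}\max(\h_i-\nu^+,0)^2\right).
\end{equation*}
Placing index $i$ into $K$ rather than $K^c$ strictly inflates the summand exactly when $\h_i<\nu^+$ and is neutral otherwise, so the outer maximum over $|K|=k$ is attained by letting $K$ index the $k$ smallest $\h_i$, i.e.\ those below $-c_\nu^+$ with $c_\nu^+=\sqrt{2}\,\erfinv(2(1-\beta_{str}^+)-1)$ satisfying $\Prob(\h_i<-c_\nu^+)=\beta_{str}^+$.

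Finally, setting $\gamma_{str}^+=\gamma_{str}^{(s)}\sqrt{n}$, passing to per-index Gaussian expectations, optimizing out $\gamma_{str}^{(s)}$ (which produces the $\sqrt{\cdot}$ form), and parametrizing $\nu^+=\sqrt{2}\,\erfinv(2(1-\htheta_s^{+})-1)$ so that $\htheta_s^{+}=\Prob(\h_i>\nu^+)$, the first-order condition in $\nu^+$ simplifies to
\begin{equation*}
\nu^+(\htheta_s^{+}+\beta_{str}^+)=\sqrt{\tfrac{1}{2\pi}}\,e^{-(\nu^+)^2/2}-\sqrt{\tfrac{1}{2\pi}}\,e^{-(c_\nu^+)^2/2},
\end{equation*}
which is exactly (\ref{eq:thmstrthetanon}) after the standard $\epsilon$-regularization of \cite{StojnicCSetam09}. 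Substituting the optimal $\nu^+$ back and using the FOC to eliminate the linear-in-$\nu^+$ cross-term regroups the surviving integrals into the quantities $S_1^+$ (from $\int_{\nu^+}^\infty(x-\nu^+)^2 e^{-x^2/2}/\sqrt{2\pi}\,dx$), $S_2^+$ (from $\int_{-\infty}^{-c_\nu^+}(x-\nu^+)^2 e^{-x^2/2}/\sqrt{2\pi}\,dx$, equivalently $\int_{c_\nu^+}^\infty(x+\nu^+)^2 e^{-x^2/2}/\sqrt{2\pi}\,dx$ by reflection), and the subtracted fraction in (\ref{eq:thmstralphanon}), yielding precisely that inequality. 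The only delicate step is the combinatorial claim that the outer $K$-maximization is attained by the $k$ smallest $\h_i$; everything after it is standard Gaussian-moment bookkeeping, completely parallel to the derivation of Corollary \ref{thm:corstrthrlift} from Theorem \ref{thm:thmstrthrlift}.
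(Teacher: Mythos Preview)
Your proposal is correct and follows essentially the same route the paper intends: the paper's own proof here is just ``follows the methodology of \cite{StojnicCSetam09},'' and your outline is exactly that methodology, carried out in detail and matching the alternative derivation the paper gives later in the proof of Corollary \ref{cor:corstrthrnon}. The one minor stylistic difference is that you optimize over the support set $K$ directly (arguing the $k$ smallest $\h_i$ are optimal), whereas the paper's machinery dualizes the cardinality constraint via an extra multiplier $\nu_{str}^{(2)}$ and then makes the particular choice $\nu_{str}^{(2,s)}=(c_\nu^+-\nu_{str}^{(1)})^2/(8\gamma_{str}^{(s)})$ to recover the same set; the two are equivalent and yield the same integrals, FOC (\ref{eq:thmstrthetanon}), and final inequality (\ref{eq:thmstralphanon}).
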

\begin{proof}
The proof is essentially following the methodology developed in \cite{StojnicCSetam09}. However, it is a bit tedious and since it is not the main concern of this paper we only state the results and skip the details of the proof.
\end{proof}

We will show the results for the strong thresholds one can obtain through the above theorem in subsequent sections when we discuss the corresponding ones obtained in this paper. As for what will be the main topic below, essentially we will attempt to translate the mechanism from the previous section so that it fits the a priori known to be nonnegative vectors $\x$. It will turn out that the mechanism we develop provides a substantial improvement over the results from Theorem \ref{thm:thmstrthrnon}. Moreover, the results it provides will improve even on those of \cite{DonohoSigned,DT}.

\section{Lifting $\ell_1$-minimization strong threshold; non-negative $\x$}
\label{sec:strthrnon}

In this section we look at the strong thresholds when $\x$ in (\ref{eq:systemnon}) is a priori known to be nonnegative. Of course, there is really nothing specific about this choice of signs. One can choose any set of signs and all our results will hold; the difference though is that it may not be easy to run the algorithms with a collection of different signs (still if one could do it, all our results would hold for such algorithms as well). We do recall as in Section \ref{sec:strthr} that these strong threshold results were substantially harder to establish than the ones that we presented in Section \ref{sec:secthr} (and one may say a bit harder even than those established in Section \ref{sec:strthr}).

As in the previous sections, throughout the presentation in this section we will assume a substantial level of familiarity with many of the well-known results that relate to the performance characterization of (\ref{eq:l1non}) (we will again fairly often recall on many results/definitions that we presented in \cite{StojnicCSetam09}). We start by defining a set $\Sstrnon$
\begin{equation}
\hspace{-.3in}\Sstrnon=\{\w\in S^{n-1}| \quad \w_i=\b_i\w_i^{\prime},\sum_{i=1}^n \b_i\w_i^{\prime}<0,\b_i^2=1,\sum_{i=1}^n\b_i=n-2k;\w_i^{\prime}\geq 0 \quad \mbox{if}\quad \b_i=1\},\label{eq:defSstrnon}
\end{equation}
where $S^{n-1}$ is the unit sphere in $R^n$. Then according to what was established in \cite{StojnicCSetam09} the following optimization problem is of critical importance in determining the strong threshold of $\ell_1$-minimization
\begin{equation}
\xi_{str}^{+}=\min_{\w\in\Sstrnon}\|A\w\|_2.\label{eq:negham1strnon}
\end{equation}
We recall that what was established in \cite{StojnicCSetam09} is roughly the following: if $\xi_{str}^{+}$ is positive with overwhelming probability for certain combination of $k$, $m$, and $n$ then for $\alpha=\frac{m}{n}$ one has a lower bound $\beta_{str}^{+}=\frac{k}{n}$ on the true value of the strong threshold with overwhelming probability. Also, as was the case with thresholds studied in the previous section, the mechanisms of \cite{StojnicCSetam09} were powerful enough to establish the concentration of $\xi_{str}^{+}$ as well. This essentially means that if we can show that $E\xi_{str}^{+}>0$ for certain $k$, $m$, and $n$ we can then obtain the lower bound on the strong threshold. This is, of course, precisely what was done in Theorem \ref{thm:thmstrthrnon}. However, as was the case with the sectional and strong thresholds of general vectors $\x$, the results we obtained for the strong threshold in Theorem \ref{thm:thmstrthrnon} through such a consideration are not exact (in fact, as we will see below, they are quite below known lower bounds). The main reason of course was inability to determine $E\xi_{str}^{+}$ exactly. Instead we resorted to its lower bounds and similarly to what happened when we consider sectional and strong thresholds of general vectors, those bounds turned out to be loose. In this section we will use some of the ideas from the previous sections (which as we stated earlier have roots in the mechanisms we recently introduced in \cite{StojnicMoreSophHopBnds10}) to provide a substantial conceptual improvement in these bounds. As in earlier sections, this will in turn reflect in a conceptual and substantial practical improvement of the strong thresholds. We again emphasize that although the translation that we will present below will appear seemingly smooth it was not so trivial to achieve it.

Below we present a way to create a lower-bound on the optimal value of (\ref{eq:negham1strnon}).

\subsection{Lower-bounding $\xi_{str}^{+}$}
\label{sec:lbxistrnon}

As mentioned above, in this subsection we will look at problem from (\ref{eq:negham1strnon}) or more precisely its optimal value.

We start by reformulating the problem in (\ref{eq:negham1strnon}) in the following way
\begin{equation}
\xi_{str}^{+}=\min_{\w\in\Sstrnon}\max_{\|\y\|_2=1}\y^TA\w.\label{eq:sqrtnegham2strnon}
\end{equation}
Then we continue by reformulating Lemma \ref{lemma:negexplemmastr}. Namely, based on Lemma \ref{lemma:negexplemmastr}, we establish the following:
\begin{lemma}
Let $A$ be an $m\times n$ matrix with i.i.d. standard normal components. Let $\g$ and $\h$ be $n\times 1$ and $m\times 1$ vectors, respectively, with i.i.d. standard normal components. Also, let $g$ be a standard normal random variable and let $c_3$ be a positive constant. Then
\begin{equation}
E(\max_{\w\in\Sstrnon}\min_{\|\y\|_2=1}e^{-c_3(\y^T A\w + g)})\leq E(\max_{\w\in\Sstrnon}\min_{\|\y\|_2=1}e^{-c_3(\g^T\y+\h^T\w)}).\label{eq:negexplemmanon}
\end{equation}\label{lemma:negexplemmastrnon}
\end{lemma}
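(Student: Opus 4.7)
The plan is to apply Gordon's comparison theorem (Theorem \ref{thm:Gordonneg1}) exactly as was done in the proofs of Lemma \ref{lemma:negexplemma} and Lemma \ref{lemma:negexplemmastr}, treating the specific shape of $\Sstrnon$ as irrelevant except for the fact (visible from (\ref{eq:defSstrnon})) that $\Sstrnon \subset S^{n-1}$, i.e.\ every $\w \in \Sstrnon$ satisfies $\|\w\|_2=1$. This unit-norm property is the only feature of the index set that enters the Gaussian-covariance bookkeeping, so the proof carries over verbatim; no new combinatorial structure needs to be confronted.

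Concretely, I would index two centered Gaussian processes by pairs $(\w,\y)$ with $\w \in \Sstrnon$ and $\|\y\|_2=1$, and set
\begin{equation*}
X_{\w,\y} := \g^T\y + \h^T\w, \qquad Y_{\w,\y} := \y^T A\w + g.
\end{equation*}
Using independence of the entries of $A$, $\g$, $\h$, $g$ together with $\|\w\|_2=\|\y\|_2=1$, one checks
\begin{equation*}
E(X_{\w,\y}^2) = \|\y\|_2^2 + \|\w\|_2^2 = 2 = \|\y\|_2^2\|\w\|_2^2 + 1 = E(Y_{\w,\y}^2),
\end{equation*}
so Gordon's condition~1 holds. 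For the ``same $\w$'' covariances one gets $E(X_{\w,\y_1}X_{\w,\y_2}) = \y_1^T\y_2 = E(Y_{\w,\y_1}Y_{\w,\y_2})$ after subtracting the common constant, so Gordon's condition~2 holds with equality. The only substantive inequality is condition~3: for $\w_1\neq\w_2$,
\begin{equation*}
E(X_{\w_1,\y_1}X_{\w_2,\y_2}) - E(Y_{\w_1,\y_1}Y_{\w_2,\y_2}) = (\y_1^T\y_2)(1-\w_1^T\w_2) - (1-\w_1^T\w_2) = (1-\w_1^T\w_2)(\y_1^T\y_2-1) \leq 0,
\end{equation*}
where the last step uses Cauchy--Schwarz on the unit vectors $\w_i,\y_j$.

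With all three hypotheses verified, I would take $\w$ in the role of the outer ``$i$'' index and $\y$ in the role of the inner ``$j$'' index, then apply the max-min half of Theorem \ref{thm:Gordonneg1} with the decreasing function $\psi(x)=e^{-c_3 x}$ (here $c_3>0$). This yields
\begin{equation*}
E\bigl(\max_{\w\in\Sstrnon}\min_{\|\y\|_2=1} e^{-c_3 X_{\w,\y}}\bigr) \geq E\bigl(\max_{\w\in\Sstrnon}\min_{\|\y\|_2=1} e^{-c_3 Y_{\w,\y}}\bigr),
\end{equation*}
which is exactly (\ref{eq:negexplemmanon}). The only place where one might worry is the well-definedness of the max and min (measurability of the suprema, approximation by countable dense subsets of $\Sstrnon$ and $S^{n-1}$), but this is handled by the same standard argument invoked implicitly in Lemmas \ref{lemma:negexplemma} and \ref{lemma:negexplemmastr}; there is no genuine obstacle here since $\Sstrnon$ is a (Borel) subset of the sphere, and the $\psi$ values involved are uniformly bounded by an integrable envelope given the Gaussian tails.
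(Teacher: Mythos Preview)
Your proposal is correct and follows exactly the approach the paper intends: a direct application of Theorem \ref{thm:Gordonneg1} with $\psi(x)=e^{-c_3x}$, using only that $\Sstrnon\subset S^{n-1}$ so the covariance bookkeeping goes through unchanged. In fact you have written out the covariance verification in more detail than the paper, which simply points back to Lemmas \ref{lemma:negexplemma} and \ref{lemma:negexplemmastr} and notes that the change of index set introduces no structural modification.
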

\begin{proof}
The proof is a standard/direct application of Theorem \ref{thm:Gordonneg1} (as was the proof of Lemmas \ref{lemma:negexplemma} and \ref{lemma:negexplemmastr}). As earlier, we will of course omit the details and just mention that the only difference between this lemma and Lemmas \ref{lemma:negexplemma} and \ref{lemma:negexplemmastr} is in the structure of set $\Sstrnon$. What is here $\Sstrnon$ it is $\Ssec$ in Lemma \ref{lemma:negexplemma} and
$\Sstr$ in Lemma \ref{lemma:negexplemmastr}. However, such a difference introduces no structural changes in the proof.
\end{proof}

Following what was done in the previous section we arrive at the following analogue of (\ref{eq:chneg8}) and (\ref{eq:chneg8str}) (and ultimately of \cite{StojnicMoreSophHopBnds10}'s equation $(57)$):
\begin{equation}
E(\min_{\w\in\Sstr}\|A\w\|_2)\geq
\frac{c_3}{2}-\frac{1}{c_3}\log(E(\max_{\w\in\Sstr}(e^{-c_3\h^T\w})))
-\frac{1}{c_3}\log(E(\min_{\|\y\|_2=1}(e^{-c_3\g^T\y}))).\label{eq:chneg8strnon}
\end{equation}
As earlier, let $c_3=c_3^{(s)}\sqrt{n}$ where $c_3^{(s)}$ is a constant independent of $n$. Then (\ref{eq:chneg8str}) becomes
\begin{eqnarray}
\hspace{-.5in}\frac{E(\min_{\w\in\Sstr}\|A\w\|_2)}{\sqrt{n}}
& \geq &
\frac{c_3^{(s)}}{2}-\frac{1}{nc_3^{(s)}}\log(E(\max_{\w\in\Sstr}(e^{-c_3^{(s)}\sqrt{n}\h^T\w})))
-\frac{1}{nc_3^{(s)}}\log(E(\min_{\|\y\|_2=1}(e^{-c_3^{(s)}\sqrt{n}\g^T\y})))\nonumber \\
& = &-(-\frac{c_3^{(s)}}{2}+I_{str}^{+}(c_3^{(s)},\beta)+I_{sph}(c_3^{(s)},\alpha)),\label{eq:chneg9strnon}
\end{eqnarray}
where
\begin{eqnarray}
I_{str}^{+}(c_3^{(s)},\beta) & = & \frac{1}{nc_3^{(s)}}\log(E(\max_{\w\in\Sstr}(e^{-c_3^{(s)}\sqrt{n}\h^T\w})))\nonumber \\
I_{sph}(c_3^{(s)},\alpha) & = & \frac{1}{nc_3^{(s)}}\log(E(\min_{\|\y\|_2=1}(e^{-c_3^{(s)}\sqrt{n}\g^T\y}))).\label{eq:defIsstrnon}
\end{eqnarray}

As in the previous sections, one should now note that the above bound is effectively correct for any positive constant $c_3^{(s)}$. The only thing that is then left to be done so that the above bound becomes operational is to estimate $I_{str}^{+}(c_3^{(s)},\beta)$ and $I_{sph}(c_3^{(s)},\alpha)$. We again recall that
\begin{equation}
I_{sph}(c_3^{(s)},\alpha)=\frac{1}{nc_3^{(s)}}\log(Ee^{-c_3^{(s)}\sqrt{n}\|\g\|_2})\doteq
\left ( \widehat{\gamma_{sph}^{(s)}}-\frac{\alpha}{2c_3^{(s)}}\log(1-\frac{c_3^{(s)}}{2\widehat{\gamma_{sph}^{(s)}}}\right ),\label{eq:Isphstrnon}
\end{equation}
where
\begin{equation}
\widehat{\gamma_{sph}^{(s)}}=\frac{2c_3^{(s)}-\sqrt{4(c_3^{(s)})^2+16\alpha}}{8}.\label{eq:gamaiden3strnon}
\end{equation}

We now switch to $I_{str}^{+}(c_3^{(s)},\beta)$. Similarly to what was stated earlier, pretty good estimates for this quantity can be obtained for any $n$. However, to facilitate the exposition we will focus only on the large $n$ scenario. In that case one can again use the saddle point concept applied in \cite{SPH}. As above, we present the core idea without all the details from \cite{SPH}. Let $f(\w)=-\h^T\w$ and
we start with the following line of identities
\begin{eqnarray}
f_{str}^{+}=\max_{\w\in\Sstrnon}f(\w)=-\min_{\w\in\Sstrnon}\h^T\w =  -\min_{\b,\w} & & \sum_{i=1}^{n}\h_i\b_i\w_i^{\prime}\nonumber \\
\mbox{subject to} & & \sum_{i=1}^{n} \b_i\w_i^{\prime}<0,\nonumber \\
& & \sum_{i=1}^{n} (\w_i^{\prime})^2=1,\nonumber \\
& & \b_i\in\{-1,1\},1 \leq i\leq n,\nonumber \\
& & \sum_{i=1}^{n}\b_i= n-2k, \nonumber \\
& & \b_i=1\Rightarrow \w_i^{\prime}\geq 0,1\leq i\leq n.\nonumber \\\label{eq:streq0non}
\end{eqnarray}
We then have
\begin{multline}
\hspace{-.5in}f_{str}^{+}  =  -\min_{\b_i^2=1,\w^{\prime},\b_i=1\Rightarrow \w_i^{\prime}\geq 0}\max_{\gamma_{str},\nu_{str}^{(1)},\nu_{str}^{(2)}\geq 0} \sum_{i=1}^{n}\h_i\b_i\w_i^{\prime}+\nu_{str}^{(1)}\sum_{i=1}^{n} \b_i\w_i^{\prime}
-\nu_{str}^{(2)}\sum_{i=1}^{n}\b_i+\nu_{str}^{(2)}(n-2k)+\gamma_{str}\sum_{i=1}^{n} (\w_i^{\prime})^2-\gamma_{str} \\
\leq  -\max_{\gamma_{str},\nu_{str}^{(1)},\nu_{str}^{(2)}\geq 0}\min_{\b_i^2=1,\w^{\prime},\b_i=1\Rightarrow \w_i^{\prime}\geq 0} \sum_{i=1}^{n}\h_i\b_i\w_i^{\prime}+\nu_{str}^{(1)}\sum_{i=1}^{n} \b_i\w_i^{\prime}
-\nu_{str}^{(2)}\sum_{i=1}^{n}\b_i+\nu_{str}^{(2)}(n-2k)+\gamma_{str}\sum_{i=1}^{n} (\w_i^{\prime})^2-\gamma_{str} \\
=  \min_{\gamma_{str},\nu_{str}^{(1)},\nu_{str}^{(2)}\geq 0}\max_{\b_i^2=1,\w^{\prime},\b_i=1\Rightarrow \w_i^{\prime}\geq 0} -\sum_{i=1}^{n}\h_i\b_i\w_i^{\prime}-\nu_{str}^{(1)}\sum_{i=1}^{n} \b_i\w_i^{\prime}
+\nu_{str}^{(2)}\sum_{i=1}^{n}\b_i-\nu_{str}^{(2)}(n-2k)-\gamma_{str}\sum_{i=1}^{n} (\w_i^{\prime})^2+\gamma_{str} \\
\label{eq:streq01non}
\end{multline}
Positivity condition on $\nu_{str}^{(2)}$ is added although it is not necessary (it essentially amounts to relaxing the last constraint to an inequality which changes nothing with respect to the final results).
Since we are interesting in statistical behavior change of variables $\h=-\h$ could have been done at the beginning and all of the above would hold with $-\h$. Since it will be a bit easier for us not to worry about the minus sign we will assume that this change had been done and work from this point on writing $-\h$ instead of $\h$. To solve the inner optimization it helps to introduce a vector $\t^{+}$ in the following way
\begin{equation}
\t_i^{+}=\max\left (\frac{(\h_i-\nu_{str}^{(1)})^2}{4\gamma_{str}}-\nu_{str}^{(2)},
\frac{(\max(\h_i-\nu_{str}^{(1)},0))^2}{4\gamma_{str}}+\nu_{str}^{(2)}\right ).\label{eq:streq030non}
\end{equation}
or alternatively (possibly in a more convenient way)
\begin{equation}
\t_i^{+}=\begin{cases}\frac{\h_i^2+(\nu_{str}^{(1)})^2}{4\gamma_{str}}-\frac{\h_i\nu_{str}^{(1)}}{2\gamma_{str}}-\nu_{str}^{(2)}, &  \h_i\leq \nu_{str}^{(1)}-\sqrt
{8\gamma_{str}\nu_{str}^{(2)}}\\
\nu_{str}^{(2)}, & \nu_{str}^{(1)}-\sqrt{8\gamma_{str}\nu_{str}^{(2)}} \leq \h_i\leq \nu_{str}^{(1)}\\
\frac{\h_i^2+(\nu_{str}^{(1)})^2}{4\gamma_{str}}-\frac{\h_i\nu_{str}^{(1)}}{2\gamma_{str}}+\nu_{str}^{(2)}, & \h_i\geq \nu_{str}^{(1)}.\label{eq:streq03non}
\end{cases}
\end{equation}
Using (\ref{eq:streq03non}), (\ref{eq:streq01non}) then becomes
\begin{multline}
\hspace{-.5in}f_{str}^{+}  \leq   \min_{\gamma_{str},\nu_{str}^{(1)},\nu_{str}^{(2)}\geq 0}\max_{\b_i^2=1,\w^{\prime},\b_i=1\Rightarrow \w_i^{\prime}\geq 0} \sum_{i=1}^{n}\h_i\b_i\w_i^{\prime}-\nu_{str}^{(1)}\sum_{i=1}^{n} \b_i\w_i^{\prime}
+\nu_{str}^{(2)}\sum_{i=1}^{n}\b_i-\nu_{str}^{(2)}(n-2k)-\gamma_{str}\sum_{i=1}^{n} (\w_i^{\prime})^2+\gamma_{str} \\
 =  \min_{\gamma_{str},\nu_{str}^{(1)},\nu_{str}^{(2)}\geq 0} \sum_{i=1}^{n}\t_i^{+}+\nu_{str}^{(2)}(2k-n)+\gamma_{str},
\label{eq:streq04non}
\end{multline}
where we emphasize again that $\h$ has been replaced with $-\h$ for the ease of the exposition. As when we studied general vectors $\x$, although we showed an inequality on $f_{str}^{+}$ (which is sufficient for what we need here) we do mention that the above actually holds with the equality. Let
\begin{equation}
f_1^{(str+)}(\h,\nu_{str}^{(1)},\nu_{str}^{(2)},\gamma_{str},\beta)=\sum_{i=1}^{n} \t_i^{+}.\label{eq:deff1strnon}
\end{equation}
Then
\begin{multline}
\hspace{-.3in}I_{str}^{+}(c_3^{(s)},\beta)  =  \frac{1}{nc_3^{(s)}}\log(E(\max_{\w\in\Sstrnon}(e^{-c_3^{(s)}\sqrt{n}\h^T\w}))) = \frac{1}{nc_3^{(s)}}\log(E(\max_{\w\in\Sstrnon}(e^{c_3^{(s)}\sqrt{n}f(\w))})))\\=\frac{1}{nc_3^{(s)}}\log(Ee^{c_3^{(s)}\sqrt{n}\min_{\gamma_{str},\nu_{str}^{(1)}
,\nu_{str}^{(2)}\geq 0}(f_1^{str+}(\h,\nu_{str}^{(1)},\nu_{str}^{(2)},\gamma_{str},\beta)+\nu_{str}^{(2)}(2k-n)+\gamma_{str})})\\
\doteq \frac{1}{nc_3^{(s)}}\min_{\gamma_{str},\nu_{str}^{(1)}
,\nu_{str}^{(2)}\geq 0}\log(Ee^{c_3^{(s)}\sqrt{n}(f_1^{str+}(\h,\nu_{str}^{(1)},\nu_{str}^{(2)},\gamma_{str},\beta)+\nu_{str}^{(2)}(2k-n)+\gamma_{str})})\\
=\min_{\gamma_{str},\nu_{str}^{(1)}
,\nu_{str}^{(2)}\geq 0}(\nu_{str}^{(2)}\sqrt{n}(2\beta-1)+ \frac{\gamma_{str}}{\sqrt{n}}+\frac{1}{nc_3^{(s)}}\log(Ee^{c_3^{(s)}\sqrt{n}(f_1^{str+}(\h,\nu_{str}^{(1)},\nu_{str}^{(2)},\gamma_{str},\beta))}))\\
=\min_{\gamma_{str},\nu_{str}^{(1)}
,\nu_{str}^{(2)}\geq 0}(\nu_{str}^{(2)}\sqrt{n}(2\beta-1)+ \frac{\gamma_{str}}{\sqrt{n}}+\frac{1}{nc_3^{(s)}}\log(Ee^{c_3^{(s)}\sqrt{n}(\sum_{i=1}^{n}\t_i^{+})})),\label{eq:gamaiden1strnon}
\end{multline}
where $\t_i^{+}$ is as given in (\ref{eq:streq03non}) and as earlier, $\doteq$ stands for equality when $n\rightarrow \infty$ and would be obtained through the mechanism presented in \cite{SPH} (as in Section \ref{sec:strthr}, for our needs here even just replacing $\doteq$ with an $\leq$ inequality suffices). Now if one sets $\gamma_{str}=\gamma_{str}^{(s)}\sqrt{n}$ and $\nu_{str}^{(2,s)}=\nu_{str}^{(2)}\sqrt{n}$ then (\ref{eq:gamaiden1strnon}) gives
\begin{eqnarray}
I_{str}^{+}(c_3^{(s)},\beta)
& = & \min_{\gamma_{str},\nu_{str}^{(1)}
,\nu_{str}^{(2)}\geq 0}(\nu_{str}^{(2)}\sqrt{n}(2\beta-1)+ \frac{\gamma_{str}}{\sqrt{n}}+\frac{1}{nc_3^{(s)}}\log(Ee^{c_3^{(s)}\sqrt{n}(\sum_{i=1}^{n}\t_i^{+})}))\nonumber \\
& = &
\min_{\gamma_{str}^{(s)},\nu_{str}^{(1)}
,\nu_{str}^{(2,s)}\geq 0}(\nu_{str}^{(2,s)}(2\beta-1)+ \gamma_{str}^{(s)}+\frac{1}{c_3^{(s)}}\log(Ee^{c_3^{(s)}\t_i^{(+,s)}})),\label{eq:gamaiden2strnon}
\end{eqnarray}
where
\begin{equation}
\t_i^{+,s}=\begin{cases}\frac{\h_i^2+(\nu_{str}^{(1,s)})^2}{4\gamma_{str}^{(s)}}-\frac{\h_i\nu_{str}^{(1,s)}}{2\gamma_{str}^{(s)}}-\nu_{str}^{(2,s)}, &  \h_i\leq \nu_{str}^{(1,s)}-\sqrt
{8\gamma_{str}^{(s)}\nu_{str}^{(2,s)}}\\
\nu_{str}^{(2,s)}, & \nu_{str}^{(1,s)}-\sqrt{8\gamma_{str}^{(s)}\nu_{str}^{(2,s)}} \leq \h_i\leq \nu_{str}^{(1,s)}\\
\frac{\h_i^2+(\nu_{str}^{(1,s)})^2}{4\gamma_{str}^{(s)}}-\frac{\h_i\nu_{str}^{(1,s)}}{2\gamma_{str}^{(s)}}+\nu_{str}^{(2,s)}, & \h_i\geq \nu_{str}^{(1,s)}\\
\end{cases}.\label{eq:deftisstrnon}
\end{equation}
The above characterization is then sufficient to compute attainable strong thresholds. However, as in Section \ref{sec:strthr}, since there will be a substantial numerical work involved it is probably a bit more convenient to look for a neater representation. That obviously involves solving a bunch of integrals. We again skip such a tedious job but present the final results. We start with setting
\begin{equation}
I^{+}=Ee^{c_3^{(s)}\t_i^{(+,s)}}.\label{eq:defbigIstrnon}
\end{equation}
Then one has
\begin{equation}
I^{+}=I^{(1+)}+
I^{(2+)}+
I^{(3+)},\label{eq:defbigIstr1non}
\end{equation}
where $I^{(1+)}$, $I^{(2+)}$, and $I^{(3+)}$ are defined below (some of the integrals are similar to the ones we had earlier; however some are different, so we redefine all of them).

\underline{\emph{1. Determining $I^{(1+)}$}}

\noindent Assume that $c_3^{(s)}\geq 0$ and $\gamma_{str}^{(s)}>0$ are constants such that $c_3^{(s)}/4/\gamma_{str}^{(s)}<\frac{1}{2}$ and set
\begin{eqnarray}
p^{+} & = & c_3^{(s)}/4/\gamma_{str}^{(s)}\nonumber \\
q^{+} & = & -c_3^{(s)}\nu_{str}^{(1,s)}/2/\gamma_{str}^{(s)}\nonumber \\
r^{+} & = & c_3^{(s)}((\nu_{str}^{(1,s)})^2/4/\gamma_{str}^{(s)}-\nu_{str}^{(2,s)})\nonumber \\
d^{+} & = & \frac{q^{+}}{\sqrt{2(1-2p^{+})}}\nonumber \\
C_1^{+} & = & e^{(d^{+})^2+r^{+}}/\sqrt{1/2-p^{+}}\nonumber \\
b^{+} &= & (\nu_{str}^{(1,s)}-\sqrt{8\gamma_{str}^{(s)}\nu_{str}^{(2)}})\sqrt{1/2-p^{+}}\nonumber \\
I^{(1+)} & = & =C_1^{+}(2-\mbox{erfc}(b^{+}-d^{+}))\label{eq:defbigI1strnon}
\end{eqnarray}

\underline{\emph{2. Determining $I^{(2+)}$}}

\noindent Set
\begin{equation}
I^{(2+)}=e^{c_3^{(s)}\nu_{str}^{(2,s)}}/2(\mbox{erfc}(b^+/\sqrt{2}/\sqrt{1/2-p^{+}})-\mbox{erfc}(\nu_{str}^{(1,s)}/\sqrt{2}).\label{eq:fbigI2strnon}
\end{equation}

\underline{\emph{3. Determining $I^{(3+)}$}}

\noindent Set
\begin{eqnarray}
r_1^{+} & = & c_3^{(s)}((\nu_{str}^{(1,s)})^2/4/\gamma_{str}^{(s)}+\nu_{str}^{(2,s)})\nonumber \\
C_3^{+} & = & e^{(d^{+})^2+r_1^{+}}/\sqrt{1/2-p^{+}}\nonumber \\
a^{+} &= & \nu_{str}^{(1,s)}\sqrt{1/2-p^{+}}\nonumber \\
I^{(3+)} & = & =C_3^{+}(\mbox{erfc}(a^{+}-d^{+})).\label{eq:defbigI3strnon}
\end{eqnarray}

We summarize the above results related to the strong threshold ($\beta_{str}^{+}$) in the following theorem.

\begin{theorem}(Strong threshold ($\x$ a priori known to be non-negative) -- lifted lower bound)
Let $A$ be an $m\times n$ measurement matrix in (\ref{eq:systemnon})
with i.i.d. standard normal components. Let
the unknown $\x$ in (\ref{eq:systemnon}) be $k$-sparse and let it be a priori known that its nonzero components are positive.
Let $k,m,n$ be large
and let $\alpha=\frac{m}{n}$ and $\betastr^{+}=\frac{k}{n}$ be constants
independent of $m$ and $n$. Let $\mbox{erf}$ be the standard error function associated with zero-mean unit variance Gaussian random variable and let $\mbox{erfc}=1-\mbox{erf}$. Assume that $c_3^{(s)}\geq 0$ and $\gamma_{str}^{(s)}>0$ are constants such that $c_3^{(s)}/4/\gamma_{str}^{(s)}<\frac{1}{2}$.
Let
\begin{equation}
\widehat{\gamma_{sph}^{(s)}}=\frac{2c_3^{(s)}-\sqrt{4(c_3^{(s)})^2+16\alpha}}{8},\label{eq:gamasphthmstrnon}
\end{equation}
and
\begin{equation}
I_{sph}(c_3^{(s)},\alpha)=
\left ( \widehat{\gamma_{sph}^{(s)}}-\frac{\alpha}{2c_3^{(s)}}\log(1-\frac{c_3^{(s)}}{2\widehat{\gamma_{sph}^{(s)}}}\right ).\label{eq:Isphthmstrnon}
\end{equation}
Further, let $I^{+}$ be defined through (\ref{eq:defbigIstr1non})-(\ref{eq:defbigI3strnon}) (or alternatively through (\ref{eq:deftisstrnon}) and (\ref{eq:defbigIstrnon})) and
\begin{equation}
I_{str}(c_3^{(s)},\beta_{str}^{+})=\min_{\gamma_{str}^{(s)}\geq c_3^{(s)}/2,\nu_{str}^{(1)}
,\nu_{str}^{(2,s)}\geq 0}(\nu_{str}^{(2,s)}(2\beta_{str}^{+}-1)+ \gamma_{str}^{(s)}+\frac{1}{c_3^{(s)}}\log(I^{+})).\label{eq:Istrthmstrnon}
\end{equation}
If $\alpha$ and $\betastr^{+}$ are such that
\begin{equation}
\min_{c_3^{(s)}\geq 0}\left (-\frac{c_3^{(s)}}{2}+I_{str}(c_3^{(s)},\beta_{str}^{+})+I_{sph}(c_3^{(s)},\alpha)\right )<0,\label{eq:strcondthmstrnon}
\end{equation}
then the solution of (\ref{eq:l1non}) is with overwhelming
probability the a priori known to be nonnegative $k$-sparse $\x$ in (\ref{eq:systemnon}).\label{thm:thmstrthrliftnon}
\end{theorem}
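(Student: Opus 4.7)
The plan is to mirror the Gaussian-comparison lifting argument already used for general $\x$ in Theorem \ref{thm:thmstrthrlift} but tailored to the constrained set $\Sstrnon$. First I would invoke the standard null-space characterization (used in \cite{StojnicCSetam09}) which reduces the strong-threshold question for nonnegative $\x$ to showing that $\xi_{str}^{+}=\min_{\w\in\Sstrnon}\|A\w\|_2>0$ with overwhelming probability at scaling $\alpha=m/n$, $\beta_{str}^{+}=k/n$. Because $\xi_{str}^{+}$ concentrates about its mean (the concentration proofs in \cite{StojnicCSetam09} carry over verbatim; the set $\Sstrnon\subseteq S^{n-1}$ is still a subset of the sphere), it is enough to certify a positive lower bound on $E\xi_{str}^{+}/\sqrt{n}$.

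Next I would express $\xi_{str}^{+}$ as a min-max (equation (\ref{eq:sqrtnegham2strnon})) and apply Lemma \ref{lemma:negexplemmastrnon}. Following the Hopfield-style manipulation used in \cite{StojnicMoreSophHopBnds10} (also replayed in Section \ref{sec:strthr}), this yields the bound (\ref{eq:chneg8strnon}), and after rescaling $c_3=c_3^{(s)}\sqrt n$ one obtains the decomposition (\ref{eq:chneg9strnon}) into the sphere part $I_{sph}(c_3^{(s)},\alpha)$ and the constraint-set part $I_{str}^{+}(c_3^{(s)},\beta)$. The sphere term is unchanged from earlier sections, giving (\ref{eq:Isphstrnon})-(\ref{eq:gamaiden3strnon}).

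The heart of the proof is computing $I_{str}^{+}$. I would write $\max_{\w\in\Sstrnon}(-\h^T\w)$ as the constrained program (\ref{eq:streq0non}), dualizing the sign-pattern constraint, the norm constraint, and the cardinality constraint with multipliers $\nu_{str}^{(1)}$, $\gamma_{str}$, and $\nu_{str}^{(2)}$ respectively. Minimax exchange (valid up to the inequality direction we need) produces (\ref{eq:streq01non}). The key subtlety, and the main obstacle, is carrying out the inner maximization over $\w^{\prime}$ in the presence of the \emph{conditional} nonnegativity constraint $\b_i=1\Rightarrow \w_i^{\prime}\geq 0$. Performing this optimization by cases on the sign of $\b_i$ (flipping $\h\to -\h$ for notational convenience so that the positive-sign branch is the constrained one) yields the piecewise quadratic summand $\t_i^{+}$ in (\ref{eq:streq03non}), with three regimes determined by $\h_i$ versus $\nu_{str}^{(1)}$ and $\nu_{str}^{(1)}-\sqrt{8\gamma_{str}\nu_{str}^{(2)}}$; this is precisely where the nonnegative case differs structurally from Theorem \ref{thm:thmstrthrlift}.

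The remainder is computational. I would apply the saddle-point / Laplace-type identity of \cite{SPH} (as done in (\ref{eq:gamaiden1strnon})) to interchange the $\min_{\gamma_{str},\nu_{str}^{(1)},\nu_{str}^{(2)}}$ with the $\log E\exp(\cdot)$; because the $\t_i^{+}$ decouple across coordinates the expectation factors into a product of single-Gaussian MGFs. Rescaling $\gamma_{str}=\gamma_{str}^{(s)}\sqrt n$, $\nu_{str}^{(2)}=\nu_{str}^{(2,s)}/\sqrt n$ produces (\ref{eq:gamaiden2strnon}). The constraint $c_3^{(s)}/(4\gamma_{str}^{(s)})<1/2$ is exactly what ensures the three Gaussian integrals $I^{(1+)},I^{(2+)},I^{(3+)}$ defined in (\ref{eq:defbigI1strnon})-(\ref{eq:defbigI3strnon}) converge; their evaluation is a completion-of-squares exercise yielding the stated $\mbox{erfc}$-based closed forms, and they sum to $I^{+}$ in (\ref{eq:defbigIstr1non}). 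Plugging the resulting $I_{str}^{+}(c_3^{(s)},\beta_{str}^{+})$ and $I_{sph}(c_3^{(s)},\alpha)$ into (\ref{eq:chneg9strnon}) and optimizing over $c_3^{(s)}\geq 0$ gives exactly the condition (\ref{eq:strcondthmstrnon}), whose strict negativity forces $E\xi_{str}^{+}/\sqrt n$ to be bounded away from zero and, via concentration, closes the argument.
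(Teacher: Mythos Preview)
Your proposal is correct and follows essentially the same route as the paper: the paper's proof is simply ``Follows from the above discussion,'' where the discussion is precisely the chain you outline (null-space reduction plus concentration, Lemma \ref{lemma:negexplemmastrnon}, the decomposition (\ref{eq:chneg9strnon}), the unchanged $I_{sph}$, the dualized program (\ref{eq:streq01non}) with the $\h\to -\h$ flip, the piecewise $\t_i^{+}$ in (\ref{eq:streq03non}), the saddle-point/decoupling step (\ref{eq:gamaiden1strnon})--(\ref{eq:gamaiden2strnon}), and the Gaussian-integral evaluation of $I^{(1+)},I^{(2+)},I^{(3+)}$). You have also correctly singled out the conditional nonnegativity constraint in the inner maximization as the one structural difference from Theorem \ref{thm:thmstrthrlift}.
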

\begin{proof}
Follows from the above discussion.
\end{proof}

The results for the strong threshold obtained from the above theorem as well as the corresponding ones from \cite{DonohoSigned,DT} and Theorem \ref{thm:thmstrthrnon}
are presented in Figure \ref{fig:strnon} (results obtained based on Theorem \ref{thm:thmstrthrnon} are referred to as direct strong threshold bounds).
\begin{figure}[htb]
\centering
\centerline{\epsfig{figure=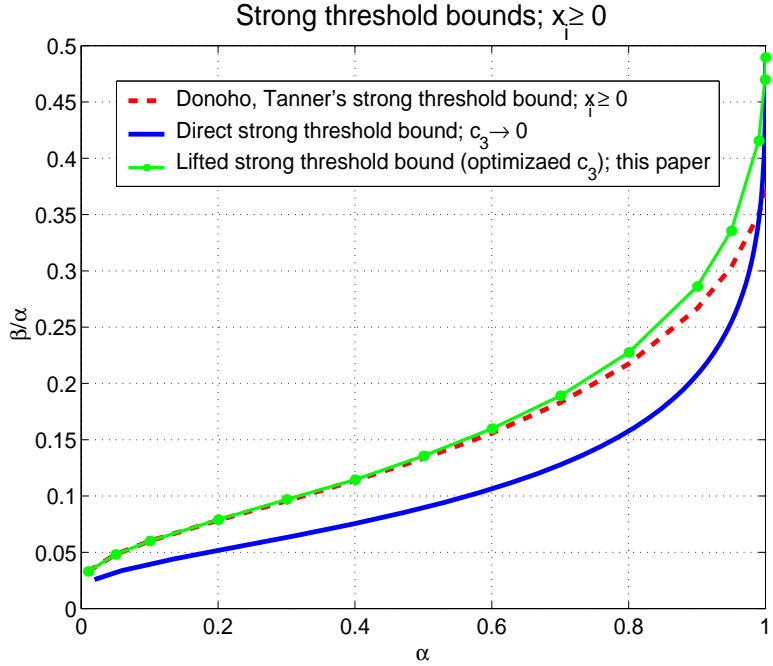,width=10.5cm,height=9cm}}
\caption{\emph{Strong} threshold ($\x_i$ a priori known to be nonnegative), $\ell_1$-optimization}
\label{fig:strnon}
\end{figure}
The results substantially improve on those presented in \cite{StojnicCSetam09}. In fact they improve even on those from \cite{DonohoSigned,DT}. As earlier, since the results are very close to the ones given in \cite{DonohoSigned,DT} we present in Tables \ref{tab:strnontab1} and \ref{tab:strnontab2} the concrete values of the attainable strong thresholds obtained through the mechanisms presented in \cite{DonohoSigned,DT} as well as those obtained through the mechanism presented in the above theorem. We also mention that the results given in Theorem \ref{thm:thmstrthrnon} can in fact be deduced from the above theorem. Namely, in the limit $c_3^{(s)}\rightarrow 0$, one from (\ref{eq:chneg9strnon}) and (\ref{eq:defIsstrnon}) has $\max_{\w\in\Sstrnon}\h^T\w<\sqrt{\alpha n}$ as the limiting condition which is exactly the same condition considered when Theorem \ref{thm:thmstrthrnon} was created. This is not that hard to do but is a fairly tedious procedure and we skip the details.

\begin{corollary}(Strong threshold ($\x$ a priori known to be non-negative) -- lower bound)
Let $A$ be an $m\times n$ measurement matrix in (\ref{eq:system})
with i.i.d. standard normal components. Let
the unknown $\x$ in (\ref{eq:systemnon}) be $k$-sparse and let it be a priori known that its nonzero components are positive. Let $k,m,n$ be large
and let $\alpha^+=\frac{m}{n}$ and $\beta_{str}^{+}=\frac{k}{n}$ be constants
independent of $m$ and $n$. Let $\mbox{erf}$ and $\erfinv$ be the standard error function and its inverse associated with zero-mean unit variance Gaussian random variable. Also let $\mbox{erfc}=1-\mbox{erf}$. Set
\begin{eqnarray}
S_{1}^{(+,0)} & = & \frac{1}{2}\mbox{erfc}(\frac{\nu_{str}^{(1)}}{\sqrt{2}})+\frac{\nu_{str}^{(1)}}{\sqrt{2\pi}}e^{-\frac{\nu_{str}^{(1)}}{2}}\nonumber \\
S_{2}^{(+,0)} & = & 1/2/\sqrt{2\pi}\left (2\sqrt{2\pi}\beta_{str}^{+}+\frac{2\sqrt{2}(\mbox{erfinv}(2(1-\beta_{str}^{+})-1))}
{e^{(\mbox{erfinv}(2(1-\beta_{str}^{+})-1))^2}}\right )\nonumber \\
S_{3}^{(+,0)} & = & (\frac{1}{2}\mbox{erfc}(\frac{\nu_{str}^{(1)}}{\sqrt{2}})+\beta_{str}^{+})(\nu_{str}^{(1)})^2+\nu_{str}^{(1)}\sqrt{\frac{2}{\pi}}(e^{-(\mbox{erfinv}(2(1-\beta_{str}^{+})-1))^2}-e^{-\frac{\nu_{str}^{(1)}}{2}}).\label{eq:cordefS1S2S3strthetanon}
\end{eqnarray}
If $\alpha^+$ and $\beta_{str}^{+}$ further satisfy
\begin{equation}
\alpha^{+}>\min_{\nu_{str}^{(1)}\geq 0}(S_{1}^{(+,0)}+S_{2}^{(+,0)}+S_{3}^{(+,0)})\label{eq:corstralphanon}
\end{equation}
then with overwhelming probability the solution of (\ref{eq:l1non}) is the $k$-sparse $\x$ from (\ref{eq:systemnon}).\label{cor:corstrthrnon}
\end{corollary}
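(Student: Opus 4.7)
The plan is to deduce Corollary \ref{cor:corstrthrnon} as the unlifted ($c_3^{(s)}\to 0$) specialization of Theorem \ref{thm:thmstrthrliftnon}, in exact parallel with how Corollary \ref{thm:corstrthrlift} was obtained from Theorem \ref{thm:thmstrthrlift}. The cleaner route I would present works directly from the bound
\[
\frac{E\max_{\w\in\Sstrnon}\h^T\w}{\sqrt{n}}\le\sqrt{\alpha^+},
\]
with concentration of the max around its expectation supplied by the Gaussian Lipschitz machinery of \cite{StojnicCSetam09} (as invoked throughout that paper).

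The first step is to transport the Lagrangian chain (\ref{eq:streq01non})--(\ref{eq:streq04non}) verbatim (the sign flip $\h\to -\h$ is harmless since $\h$ is a standard normal vector) to obtain
\[
\frac{E\max_{\w\in\Sstrnon}\h^T\w}{\sqrt{n}} \le \min_{\gamma_{str},\nu_{str}^{(1)},\nu_{str}^{(2)}\ge 0}\Bigl(\nu_{str}^{(2)}(2\beta_{str}^+-1)+\gamma_{str}+E\,\t_i^+\Bigr),
\]
with $\t_i^+$ as in (\ref{eq:streq03non}). Splitting the expectation across the three $\h_i$-regions of (\ref{eq:streq03non}) and regrouping the pieces proportional to $\nu_{str}^{(2)}$ yields
\[
E\,\t_i^+ = \frac{J_1}{4\gamma_{str}} + \nu_{str}^{(2)}\bigl(1 - 2P(\h_i\le A_1)\bigr),\qquad A_1 = \nu_{str}^{(1)} - \sqrt{8\gamma_{str}\nu_{str}^{(2)}},
\]
where $J_1$ aggregates the two quadratic tail integrals over $\{\h_i\le A_1\}$ and $\{\h_i\ge \nu_{str}^{(1)}\}$.

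The decisive step is a dual-variable substitution: for each fixed $\gamma_{str},\nu_{str}^{(1)}$, choose $\nu_{str}^{(2)}\ge 0$ so that $A_1=-c_\nu^+$ with $c_\nu^+ = \sqrt{2}\,\erfinv(2(1-\beta_{str}^+)-1)$. Since $P(\h_i\le -c_\nu^+)=\beta_{str}^+$ by definition of $c_\nu^+$, the coefficient of $\nu_{str}^{(2)}$ vanishes identically and the variable drops out. The outer minimization then reduces to $\min_{\gamma_{str}>0}(\gamma_{str}+J_1/(4\gamma_{str}))=\sqrt{J_1}$ by AM--GM, so the operational criterion becomes $\min_{\nu_{str}^{(1)}\ge 0}\sqrt{J_1}<\sqrt{\alpha^+}$.

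It remains to evaluate $J_1$ at $A_1=-c_\nu^+$ in closed form, using $\h_i\mapsto -\h_i$ symmetry on the left-tail integral together with the standard Gaussian moments $\int_t^\infty e^{-u^2/2}du/\sqrt{2\pi}=\tfrac12\mbox{erfc}(t/\sqrt{2})$, $\int_t^\infty u\,e^{-u^2/2}du/\sqrt{2\pi}=e^{-t^2/2}/\sqrt{2\pi}$, and $\int_t^\infty u^2 e^{-u^2/2}du/\sqrt{2\pi}=t\,e^{-t^2/2}/\sqrt{2\pi}+\tfrac12\mbox{erfc}(t/\sqrt{2})$. The right-tail contribution at $t=\nu_{str}^{(1)}$ produces $S_1^{(+,0)}$ and a part of $S_3^{(+,0)}$, while the left-tail contribution at $t=c_\nu^+$ produces $S_2^{(+,0)}$ together with the remaining pieces of $S_3^{(+,0)}$, giving exactly (\ref{eq:cordefS1S2S3strthetanon}). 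The main obstacle is bookkeeping: one must check that the chosen $\nu_{str}^{(2)}$ is nonnegative (which is automatic since $\nu_{str}^{(1)}\ge 0\ge -c_\nu^+$) and carefully track how the $e^{-(\nu_{str}^{(1)})^2/2}$ and $e^{-(c_\nu^+)^2/2}$ terms partition across the three $S_j^{(+,0)}$ expressions; this matching is tedious but entirely mechanical.
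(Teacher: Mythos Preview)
Your proposal is correct and follows essentially the same route as the paper. Both arguments specialize Theorem \ref{thm:thmstrthrliftnon} to $c_3^{(s)}\to 0$, make the same particular choice of $\nu_{str}^{(2)}$ (your $A_1=-c_\nu^+$ with $c_\nu^+=\sqrt{2}\,\erfinv(2(1-\beta_{str}^+)-1)$ is the paper's choice $\nu_{str}^{(2,s)}=(c_\nu^+-\nu_{str}^{(1)})^2/(8\gamma_{str}^{(s)})$ under the opposite sign convention for $c_\nu^+$), optimize out $\gamma_{str}$ via AM--GM, and then evaluate the two Gaussian tail integrals to recover (\ref{eq:cordefS1S2S3strthetanon}); your explicit observation that this choice annihilates the coefficient $2\beta_{str}^+-2P(\h_i\le A_1)$ of $\nu_{str}^{(2)}$ is a slightly cleaner way of phrasing what the paper calls ``a particular choice\ldots transforms.''
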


\begin{proof}
Theorem \ref{thm:thmstrthrliftnon} holds for any $c_3^{(s)}\geq 0$. The above corollary instead of looking for the best possible $c_3^{(s)}$ in Theorem \ref{thm:thmstrthrliftnon} assumes a simple $c_3^{(s)}\rightarrow 0$ scenario.

Alternatively, one can look at $\frac{E\max_{\w\in\Sstrnon}\h^T\w}{\sqrt{n}}$ and following the methodology presented in (\ref{eq:seceq1}), (\ref{eq:streq01}), and  (\ref{eq:streq02}) (and originally in \cite{StojnicCSetam09}) obtain for $\nu_{str}^{(1)}\geq 0$
\begin{equation}
\frac{E\max_{\w\in\Sstrnon}\h^T\w}{\sqrt{n}}\leq \sqrt{E_{\h_i\leq c_{\nu}^+}(\h_i-\nu_{str}^{(1)})^2+ E_{\nu_{str}^{(1)}\leq \h_i}(\h_i-\nu_{str}^{(1)})^2},\label{eq:corrstr0non}
\end{equation}
where $c_{\nu}^+$ is obtained from $\beta_{str}^+=\int_{\h_i\leq c_{\nu}^+}\frac{e^{-\frac{\h_i^2}{2}}d\h_i}{\sqrt{2\pi}}$. Clearly, $c_{\nu}^+=-\sqrt{2}\mbox{erfinv}(2(1-\beta_{str}^+)-1)$.
Optimizing (tightening) over $\nu_{str}^{(1)}\geq 0$ (and using all the concentrating machinery of \cite{StojnicCSetam09}) one can write
\begin{equation}
\frac{E\max_{\w\in\Sstr}\h^T\w}{\sqrt{n}}\doteq\min_{\nu_{str}^{(1)}\geq 0}\sqrt{\left (\int_{\h_i\leq c_{\nu}^+}(\h_i-\nu_{str}^{(1)})^2\frac{e^{-\frac{\h_i^2}{2}}d\h_i}{\sqrt{2\pi}}
+\int_{\nu_{str}^{(1)}\leq \h_i}(\h_i-\nu_{str}^{(1)})^2\frac{e^{-\frac{\h_i^2}{2}}d\h_i}{\sqrt{2\pi}}\right )}.\label{eq:corrstr1non}
\end{equation}

On the other hand, when $c_3^{(s)}\rightarrow 0$ a combination of (\ref{eq:Isphcorsec}), (\ref{eq:streq030non}), (\ref{eq:gamaiden2strnon}), and (\ref{eq:strcondthmstrnon}) gives
\begin{equation}
\min_{\gamma_{str}^{(s)},\nu_{str}^{(1)}
,\nu_{str}^{(2,s)}\geq 0}(\nu_{str}^{(2,s)}(2\beta_{str}^+-1)+ \gamma_{str}^{(s)}+E\t_i^{(+,s)})\leq \sqrt{\alpha},\label{eq:gamaiden2strcorprnon}
\end{equation}
where
\begin{equation}
\t_i^{(+,s)}=\max\left (\frac{(\h_i-\nu_{str}^{(1)})^2}{4\gamma_{str}^{(s)}}-\nu_{str}^{(2,s)},
\frac{(\max(\h_i-\nu_{str}^{(1)},0))^2}{4\gamma_{str}^{(s)}}+\nu_{str}^{(2,s)}\right ).\label{eq:streq030corprnon}
\end{equation}
A particular choice $\nu_{str}^{(2,s)}=\frac{(c_{\nu}^+-\nu_{str}^{(1)})^2}{8\gamma_{str}^{(s)}}$ (where as above, $c_{\nu}^+=-\sqrt{2}\mbox{erfinv}(2(1-\beta_{str}^+)-1)$) transforms (\ref{eq:gamaiden2strcorprnon}) to
\begin{equation}
\min_{\gamma_{str}^{(s)},\nu_{str}^{(1)}\geq 0}
\left (\gamma_{str}^{(s)}+\frac{\left (\int_{\h_i\leq c_{\nu}^+}(\h_i-\nu_{str}^{(1)})^2\frac{e^{-\frac{\h_i^2}{2}}d\h_i}{\sqrt{2\pi}}
+\int_{\nu_{str}^{(1)}\leq \h_i}(\h_i-\nu_{str}^{(1)})^2\frac{e^{-\frac{\h_i^2}{2}}d\h_i}{\sqrt{2\pi}}\right )}{4\gamma_{str}^{(s)}}\right )\leq \sqrt{\alpha},\label{eq:gamaiden2strcorpr1non}
\end{equation}
which after optimization over $\gamma_{str}^{(s)}$ becomes
\begin{equation}
\min_{\nu_{str}^{(1)}\geq 0}
\sqrt{\left (\int_{\h_i\leq c_{\nu}^+}(\h_i-\nu_{str}^{(1)})^2\frac{e^{-\frac{\h_i^2}{2}}d\h_i}{\sqrt{2\pi}}
+\int_{\nu_{str}^{(1)}\leq \h_i}(\h_i-\nu_{str}^{(1)})^2\frac{e^{-\frac{\h_i^2}{2}}d\h_i}{\sqrt{2\pi}}\right )}\leq \sqrt{\alpha}.\label{eq:gamaiden2strcorpr2non}
\end{equation}
This is exactly what one gets from (\ref{eq:corrstr1non}). Solving the integrals from (\ref{eq:corrstr1non}) or (\ref{eq:gamaiden2strcorpr2non}) then leads to the condition given in the above corollary. At the same time this also shows that in (\ref{eq:streq04non}) one, for all statistical purposes, indeed has an equality (as earlier, much more is true but the proof of that would just make our presentation more cumbersome and those statements are not necessary in a statistical context that we consider here).
\end{proof}

\noindent \textbf{Remark:} Solving (\ref{eq:corstralphanon}) over $\nu_{str}^{(1)}$ and juggling a bit one can arrive at the results presented in Theorem \ref{thm:thmstrthrnon}. Of course, the results of Theorem \ref{thm:thmstrthrnon} were presented in a way that parallels other results from \cite{StojnicCSetam09} where our main concern was a thorough studying of the underlying optimization problems.

As earlier, the results obtained from the previous corollary (and obviously the results from Theorem \ref{thm:thmstrthrliftnon}) are those presented in Figure \ref{fig:strnon} and Tables \ref{tab:strnontab1} and \ref{tab:strnontab2} that we refer to as $c_3\rightarrow 0$ scenario or direct strong threshold bounds (also, as earlier, we remove superscript $(s)$ from $c_3^{(s)}$ to make the figure and tables easier to view). As can be see from Figure \ref{fig:strnon} and Tables \ref{tab:strnontab1} and \ref{tab:strnontab2}, a substantial improvement over the strong threshold results from \cite{StojnicCSetam09} can be achieved through the above introduced mechanism.

Any improvement in strong thresholds is typically hard to achieve. In that regard the improvement we created over the results one could obtain through the mechanism of \cite{StojnicCSetam09} (essentially those presented in Theorem \ref{thm:thmstrthrnon}) is welcome. If one compares the results from Theorem \ref{thm:thmstrthrliftnon} to those obtained in \cite{DonohoUnsigned,DT} (which are the best known lower bounds) the improvement is not as big. As was the case earlier, it is quite possible that the results of \cite{DonohoUnsigned,DT} are very close to the optimal ones. Also, it is possible that neither our results nor those of \cite{DonohoUnsigned,DT} are even remotely close to the optimal ones in which case the improvements we presented above are conceptually substantial when compared to what can be done through the mechanism of \cite{StojnicCSetam09} but are a bit powerless when it comes to overall optimality (this seems a bit unlikely though). On the other hand our work included a substantial amount of numerical computations. There is about a billion places where something could be off by a bit. First, just missing one constant here or there would be enough to completely mess up everything. Second, all of computations are naturally done only with a finite precision. Third, all numerical solutions are done on a local optimum level. Still, we are inclined to believe that the methods we presented indeed can not achieve substantially more than what we showed in Tables \ref{tab:strnontab1} and \ref{tab:strnontab2} and Figure \ref{fig:strnon} (maybe a little bit more here and there due to a potential lack of global optimality and numerical precision but overall really not that much). As for how far away from the optimal are the results we presented, we again do believe that the curve is fairly close to the optimal one.

\begin{table}
\caption{Strong threshold bounds; non-negative $\x$ -- low $\alpha\leq 0.5$ regime}\vspace{.1in}
\hspace{-0in}\centering
\begin{tabular}{||c|c|c|c|c|c|c|c||}\hline\hline
 $\alpha$  & $0.01$ & $0.05$ & $0.1$ & $0.2$ & $0.3$ & $0.4$ & $0.5$ \\ \hline\hline
 $\beta_{str}^{+}$ (Donoho, Tanner \cite{DonohoUnsigned,DT}) & $0.00033$ & $0.0024$ & $0.0060$ & $0.0157$ & $0.0287$ & $0.0455$ & $0.0667$ \\ \hline
 $\beta_{str}^{+}$ (Theorem \ref{thm:thmstrthrliftnon})
 & $0.00033 $ & $0.0024 $ & $0.0060 $ & $0.0158 $ & $0.0291 $ & $0.0461 $ & $0.0680$ \\ \hline\hline
\end{tabular}
\label{tab:strnontab1}
\end{table}

\begin{table}
\caption{Strong threshold bounds; non-negative $\x$ -- high $\alpha> 0.5$ regime}\vspace{.1in}
\hspace{-0in}\centering
\begin{tabular}{||c|c|c|c|c|c|c|c|c||}\hline\hline
 $\alpha$  & $0.6$ & $0.7$ & $0.8$ & $0.9$ & $0.95$ & $0.99$ & $0.999$ & $0.9999$ \\ \hline\hline
 $\beta_{str}^{+}$ (Donoho, Tanner \cite{DonohoUnsigned,DT})  & $0.0935$ & $0.1280$ & $0.1739$ & $0.2399$ & $0.2881$ & $0.3463$ & $0.3675$ & $0.3750$ \\ \hline
 $\beta_{str}^{+}$ (Theorem \ref{thm:thmstrthrliftnon})
& $0.0959$ & $0.1323
$ & $0.1820 $ & $0.2577 $ & $0.3188$ & $0.4113$ & $0.4694$ & $0.4895$\\ \hline\hline
\end{tabular}
\label{tab:strnontab2}
\end{table}

\section{Conclusion}
\label{sec:conc}

In this paper we looked at classical under-determined linear systems with sparse solutions. We analyzed a particular optimization technique called $\ell_1$ optimization. While the technique is known to work well often its ultimate performance limits in certain scenarios are not known. We attacked a couple of such scenarios, namely those that relate to what is called the strong and sectional thresholds. We developed a couple of mechanisms that can be utilized together with our recent results from \cite{StojnicMoreSophHopBnds10} related to a couple of general classical combinatorial optimization problems. The mechanisms that we developed provided a substantial improvement over their counterparts from \cite{StojnicCSetam09}. Moreover, in a wide range of problem parameters (dimensions) we feel confident that the results we obtained are actually fairly close to the exact ones.

To be a bit more specific, we provided purely theoretical lower bounds on the values of the strong and sectional thresholds of $\ell_1$ minimization technique. These bounds were first created for problems where unknown vectors $\x$ are general sparse vectors from $R^n$. We then in the second part of the paper adapted them so that they can fit the case when the unknown vectors $\x$ are a priori known to be nonnegative (obviously, due to the signed structure of the nonnegative vectors the sectional counterparts do not apply for them). All results we presented offer a substantial conceptual improvement over their counterparts from \cite{StojnicCSetam09}. In a majority of cases and in a fairly wide range of parameters they actually provide even practically important improvements. We also, demonstrated how the results of the core mechanism that we utilized in \cite{StojnicCSetam09} can be deduced from the ones we presented in this paper.

As was the case in \cite{StojnicCSetam09,StojnicMoreSophHopBnds10}, the purely theoretical results we presented are for the so-called Gaussian models, i.e. for systems with i.i.d. Gaussian coefficients. Such an assumption significantly simplified our exposition. However, all results that we presented can easily be extended to the case of many other models of randomness. There are many ways how this can be done. Instead of recalling on them here we refer to a brief discussion about it that we presented in \cite{StojnicMoreSophHopBnds10}.

As for usefulness of the presented results, there is hardly any limit. First, one can look at a host of related problems from the compressed sensing literature. Pretty much any problem we attacked in a weak sense (and there is pretty much none solvable in polynomial time that we couldn't solve to an ultimate precision) can now be attacked in strong/sectional sense as well. These include for example, all noisy variations, approximately sparse unknown vectors, vectors with a priori known structure (block-sparse, binary/box constrained etc.), vectors with a priori partially known support, isometry constants, all types of low rank matrix recoveries, various other algorithms like $\ell_q$-optimization, SOCP's, LASSO's, and many, many others. Each of these problems has its own specificities and adapting the methodology presented here usually takes a bit of work but in our view is now a routine. While we will present some of these applications we should emphasize that their contribution will be purely on an application level. As we already mentioned many times, the mathematical core of the arguments is here, in \cite{StojnicMoreSophHopBnds10}, and to a degree in \cite{StojnicRegRndDlt10}.

Of course, way beyond the concrete threshold results related to compressed sensing problems and various other duality type of optimization problems that we presented here and in quite a few companion papers, is the universal value of the presented mechanisms that especially becomes visible when one starts encountering problems that are believed not to be solvable in polynomial time. Namely, what we presented here (and in a nutshell in \cite{StojnicMoreSophHopBnds10}) is an incredibly powerful concept to attack the statistical behavior of many hard combinatorial problems. The mechanisms that we developed in \cite{StojnicCSetam09,StojnicUpper10,StojnicRegRndDlt10} are powerful enough to handle to an ultimate precision statistical behavior of a huge subclass of optimization problems where the duality holds. However, where it comes to those where the duality fades away (and many combinatorial ones are such examples) the methods from \cite{StojnicCSetam09,StojnicUpper10,StojnicRegRndDlt10} can only provide bounds on the typical behavior. In that sense not much more is done here and in \cite{StojnicMoreSophHopBnds10}. Quite contrary, what was provided here and in \cite{StojnicMoreSophHopBnds10} are also just bounds on the typical behavior of the hard random combinatorial optimization problems. Of course, such a view would be a huge understatement. In our own experience, even establishing that mechanisms from \cite{StojnicCSetam09,StojnicUpper10,StojnicRegRndDlt10} can be used to provide rigorous lower bounds for combinatorial problems was not so trivial before we discovered it. Improving on them (and for that matter on any other type of bounds when it comes to combinatorial problems) is typically super hard. The mechanisms that we presented here (and partially in \cite{StojnicMoreSophHopBnds10}) essentially provide an avenue for attacking a huge number of other combinatorial problems. Many of them we attacked and achieved a substantial improvement over known results.
Examples include, numerous variants of knapsack, max-cut, subset selection, bin-packing, number-partitioning problems, binary/spherical perceptrons, capacities of associative memories and many, many others.
We will present these applications elsewhere. However, we do want to emphasize once again, right here that the core of all of these mechanisms that enabled us to achieve such substantial improvements in studying many combinatorial problems is actually precisely what we presented above, in \cite{StojnicMoreSophHopBnds10}, and to a degree in \cite{StojnicRegRndDlt10}.

As for attacking problems considered here, namely, the strong and sectional thresholds of $\ell_1$-minimization, what we presented above is of course only one way how one can do it. There are several other ones that we developed that are quite likely more appropriate for these particular problems. However, we decided to present this one since we believe it has a general mathematical value that massively supersedes its linear systems/$\ell_1$/compressed sensing importance. Its general value is exactly what we described above, i.e. its ability to lift typical convexity type of bounds the mechanisms of \cite{StojnicCSetam09} would provide when applied to combinatorial problems. Also, the problems studied here were among a few very first ones that we attacked with the mechanism so we felt that it would be appropriate to present the methods the way that tightly follows their chronological development.

Of course, all of what we presented here and especially of what we presented in the original introductory paper \cite{StojnicMoreSophHopBnds10} can be shown to be  tightly connected with concepts from statistical physics as well. Namely, our results from e.g. \cite{StojnicCSetam09,StojnicUpper10,StojnicRegRndDlt10} provide the exact analysis for the optimization problems where the duality holds. Since they are the same as the results produced by what is in statistical physics called the $0$-level of replica symmetry breaking, our results essentially rigorously establish that the replica symmetry results are typically correct in optimization problems where the duality holds (a fact long believed to be true among physicists). Moreover, even when the duality does not hold they rigorously show that the replica symmetry results are typically (essentially if derived in a certain way) rigorous lower bounds. On the other hand, the results we presented here and especially those we presented initially in \cite{StojnicMoreSophHopBnds10} go way further. Namely, one can show that they match the corresponding ones obtained through a variant of 1-level of replica symmetry breaking which then establishes the latter ones also as rigorous lower bounds (these are for many combinatorial problems typically already very close to the optimal values). As mentioned above, for this to be transparent one would have to derive the replica breaking in a certain way. That is not that hard but is not super obvious either. We find it useful to have such a connection neatly established and discussed in more detail; however, since it is a bit technical and requires quite a bit of detailing we will present it separately in a forthcoming paper. Here, we do mention that Gordon's introductory examples from \cite{Gordon85} are essentially the first sets of results that establish rigorous lower bounds on the type of problems that fit the negative Hopfield models (our results in \cite{StojnicHopBnds10} just recognized that some of those or their small alterations are applicable within the frame of statistical physics and in fact do match what is typically called replica symmetry level ($0$-breaking level) solutions thereby establishing such solutions as rigorous bounds). It is the main breakthrough of \cite{StojnicMoreSophHopBnds10} (and now this paper) though, that actually enables us to go beyond simple duality/convexity $0$-level (replica symmetry type of) bounds.

While we will present a detailed connection with the replica theory elsewhere, we would just briefly like to mention here a few things that relate to such a connection. As said above, establishing a connection with the replica analysis is not that hard. While proving replica procedures rigorously on the current axiomatic system of mathematics is hopeless (if nothing else, at the very least due to needed continuity of algebraic dimensions which does not exist in typical algebras we use), proving many results it provides is likely possible. Since our results are purely mathematical and completely rigorous, their connection with the results of the replica theory then makes the latter ones rigorous as well. However, proving rigorously any of this type of results is very hard and obviously all results that we presented above in that direction are a consequence of a massive effort that was put fort to create mathematically sound concepts. Along the same lines, one then wonders, if we can establish the connection up to a variant of 1-level of replica breaking can't we go further and present our mechanism so that it would correspond to as many levels of replica breaking as one wants. Well the answer is yes, one can easily (to be more precise, the mathematical logic is relatively simple but the writing is not simple at all!) write down the formalism and the entire cascade of exponentials which would resemble what one would get in the replica theory (of course for that one would have to derive the replica concept in certain way). However, even after doing all of that one has to be very careful in establishing needed inequalities and relations between coefficients and occasionally may need to adjust the form of the objective function so that all the inequalities hold. This is a very tedious task and requires quite a lot of technicalities (especially if one also wants to show all the concentrations as well) even if it turns out that it leads to rigorous optimal solutions.

Of course, the above would be just to establish the lower bounds; proving that they are optimal (which we can't even guess if it is true or not) would be even more horrendous. One can just take a look at \cite{Tal06} for the upper-bounding effort in the much easier SK model to get a feeling what kind of treat we could be in for (of course, we do mention that such a method is highly likely not to be applicable for the models we studied here and in the case of the negative form from \cite{StojnicMoreSophHopBnds10}; this, of course, is one of the main reasons why the Hopfield models and their more general versions studied here are believed to be much harder mathematical challenges than the original Parisi formula and the SK model it treats).

Going back to raising the levels of nested expectations, we believe that the final results of such a cascade approach would be almost no different than those we presented in figures and tables in this paper (in fact, limited numerical experiments that we did indicate that for some values of $\alpha$ further improvements on $\beta$ are literally on the third, or quite often even fourth, significant decimal digit; of course this has to be taken somewhat lightly; first we didn't examine the entire range of $\alpha$'s and second, one has to keep in mind that already on the second nesting level there may be $6-7$ variables to optimize simultaneously combined with a numerical integration as well which easily can render numerical errors either way). Obviously, it may happen that the optimal cascade is actually of infinite length, but two-three levels should already be enough to achieve a decent convergence. Of course, if one is so lucky it can happen that optimal length is actually one (this is highly unlikely to be true at least for the original positive and negative Hopfield models from \cite{StojnicMoreSophHopBnds10}). Either way, it is our belief that the major improvements are on the first level which is what we presented (we also emphasize that although our presentation is on spots quite involved, it is about billion times simpler than what happens on higher levels). Since all of that takes an incredible technical/numerical effort, as mentioned above, we will discuss it elsewhere.

\begin{singlespace}
\bibliographystyle{plain}
\bibliography{LiftThrBndsRefs}
\end{singlespace}

\end{document}